\newcommand{\es}{\varnothing} 
\newcommand{\sm}{\setminus}
\newcommand{\se}{\subseteq}
\newcommand{\Q}{\cQ}
\newcommand{\bQ}{\bfQ}
\newcommand{\bB}{\bfB}
\newcommand{\bM}{\bfM}
\newcommand{\bP}{\bfP}
\newcommand{\RED}{\mathsf{R}}
\newcommand{\GRA}{\mathsf{GA}}
\newcommand{\GRB}{\mathsf{GB}}
\newcommand{\GR}{\mathsf{G}}
\newcommand{\win}{\mathtt{win}}
\newcommand{\Dep}{\mathsf{StatDep}}
\newcommand{\SelDep}{\mathsf{SelfDep}}
\newcommand{\remove}[1]{}
\newcommand{\num}[1]{{\bf (#1)}}
\newcommand{\ol}{\overline}
\newcommand{\wt}[1]{\widetilde{#1}}
\newcommand{\nin}{\not \in}
\newcommand{\Good}{\mathsf{Good}} 
\newcommand{\BGood}{\mathsf{BGood}}
\newcommand{\AGood}{\mathsf{AGood}}
\newcommand{\Fail}{\mathsf{Fail}}
\newcommand{\BFail}{\mathsf{BFail}}
\newcommand{\AFail}{\mathsf{AFail}}
\newcommand{\LAST}{\mathsf{LAST}}
\newcommand{\EXEC}{\mathcal{E}} 
\newcommand{\GViews}{\mathcal{GV}}
\newcommand{\Views}{\mathcal{V}}
\newcommand{\etal}{et~al.\ }
\newcommand{\wolog}{w.l.o.g\ }
\newcommand{\Wolog}{W.l.o.g\ }
\newcommand{\resp}{resp.\ }
\newcommand{\ie}{i.e.,\ }
\newcommand{\eg}{e.g.,\ }
\newcommand{\wrt} {with respect to\ }
\newcommand{\set}[1]{\left\{ #1 \right\}}
\newcommand{\bits}{\{0,1\}}
\newcommand{\size}[1]{\left|#1\right|}
\newcommand{\abs}[1]{\size{#1}}
\newcommand{\To}{\mapsto}
\newcommand{\R}{{\mathbb R}}
\newcommand{\cC}{{\mathcal C}}
\newcommand{\cD}{{\mathcal D}}
\newcommand{\cP}{{\mathcal P}}
\newcommand{\cQ}{{\mathcal Q}}
\newcommand{\cU}{{\mathcal U}}
\newcommand{\bfA}{\mathbf{A}}
\newcommand{\bfB}{\mathbf{B}}
\newcommand{\bfH}{\mathbf{H}}
\newcommand{\bfM}{\mathbf{M}}
\newcommand{\bfP}{\mathbf{P}}
\newcommand{\bfQ}{\mathbf{Q}}
\newcommand{\bfU}{\mathbf{U}}
\newcommand{\bfV}{\mathbf{V}}
\newcommand{\bfW}{\mathbf{W}}
\newcommand{\bfa}{\mathbf{a}}
\newcommand{\bfb}{\mathbf{b}}
\newcommand{\bfg}{\mathbf{g}}
\newcommand{\bfr}{\mathbf{r}}
\newcommand{\bfx}{\mathbf{x}}
\newcommand{\bfy}{\mathbf{y}}
\newcommand{\bfz}{\mathbf{z}}
\newcommand{\eps}{\varepsilon}
\newcommand{\e}{\varepsilon}
\newcommand{\poly}{\operatorname{poly}}
\newcommand{\Exp}{\operatorname*{\mathbb{E}}}
\newcommand{\Ex}{\Exp}
\newcommand{\negl}{\operatorname{neg}}
\newcommand{\Supp}{\operatorname{Supp}}
\newcommand{\class}[1]{\ensuremath{\mathbf{#1}}}
\newcommand{\PSPACE}{\class{PSPACE}}
\newtheorem{theorem}{Theorem}[section]
\theoremstyle{plain}
\newtheorem{claim}[theorem]{Claim}
\newtheorem{proposition}[theorem]{Proposition}
\newtheorem{lemma}[theorem]{Lemma}
\newtheorem{corollary}[theorem]{Corollary}
\theoremstyle{definition}
\newtheorem{definition}[theorem]{Definition}
\newtheorem{construction}[theorem]{Construction}
\newtheorem{protocol}[theorem]{Protocol}
\theoremstyle{definition}
\newtheorem{remark}[theorem]{Remark}
\newcommand{\namedref}[2]{#1~\ref{#2}}
\newcommand{\torestate}[3]{%
\expandafter \def \csname BBRESTATE #2 \endcsname{#3}
\theoremstyle{plain}
\newtheorem{BBRESTATETHMNUM#2}[theorem]{#1}
\begin{BBRESTATETHMNUM#2}\label{#2}\csname BBRESTATE #2 \endcsname   \end{BBRESTATETHMNUM#2}
\newtheorem*{BBRESTATETHMNONNUM#2}{\namedref{#1}{#2}}
}
\newcommand{\restate}[1]{\begin{BBRESTATETHMNONNUM#1}[Restated] \csname BBRESTATE #1 \endcsname
\end{BBRESTATETHMNONNUM#1}}
\title{Merkle's Key Agreement Protocol is Optimal: \\
An $O(n^2)$ Attack on any Key Agreement from Random Oracles}
\author{Boaz Barak\thanks{Microsoft Research New England and Harvard University,
    \texttt{b@boazbarak.org}.} \and
Mohammad Mahmoody\thanks{University of Virginia, \texttt{mohammad@cs.virginia.edu}. Supported by NSF CAREER award CCF-1350939.}
}
\date{}
\begin{document}

\maketitle

\begin{abstract}
We prove that every key agreement protocol in the random oracle model in which the honest users make at most $n$ queries
to the oracle can be broken by an adversary who makes $O(n^2)$ queries to the oracle. This improves on the previous
$\Tilde{\Omega}(n^6)$ query attack given by Impagliazzo and Rudich (STOC '89) and resolves an open question posed by
them.

Our bound is optimal up to a constant factor since Merkle  proposed a key agreement protocol in 1974 that can be easily
implemented with $n$ queries to a random oracle and cannot be broken by any adversary who asks $o(n^2)$ queries.

\end{abstract}

\paragraph{Keywords:} Key Agreement, Random Oracle, Merkle Puzzles.
\thispagestyle{empty}
%

%
\newpage
\tableofcontents

\thispagestyle{empty}

\clearpage
\setcounter{page}{1}

\section{Introduction}

In the 1970's Diffie, Hellman, and Merkle~\cite{Merkle74,DiffieHe76,Merkle78} began to challenge the accepted wisdom that two parties cannot communicate
confidentially over an open channel without first exchanging a secret key using some secure means. The first such
protocol (at least in the open scientific community) was proposed by Merkle in 1974~\cite{Merkle74} for a course project in Berkeley. Even though the course's instructor rejected the proposal, Merkle continued working on his ideas and discussing them with Diffie and Hellman, leading to the papers~\cite{DiffieHe76,Merkle78}. Merkle's original key exchange protocol was extremely simple and can be directly formalized and implemented using a random oracle\footnote{In this work, \emph{random oracles} denote any randomized oracle $O \colon \bits^* \To \bits^*$ such that $O(x)$ is independent of $O(\bits^* \sm \set{x})$ for every $x$ (see Definition~\ref{def:RandOr}). The two protocols of Merkle we describe here can be implemented using a length-preserving random oracle (by cutting the inputs and the output to the right length). Our \emph{negative} results, on the other hand, apply to any random oracle.
} as follows:

\begin{protocol}[Merkle's 1974 Protocol using Random Oracles] \label{prot:merkle}
Let $n$ be the security parameter and $H: [n^2] \To \bits^{n}$ be a function chosen at random accessible  to all parties as an oracle. Alice and Bob execute the protocol as follows.
\begin{enumerate}

\item Alice chooses $10n$ distinct random numbers $x_1,\ldots,x_{10n}$ from $[n^2]$ and sends $a_1,\ldots,a_{10n}$ to Bob
    where $a_i=H(x_i)$.

\item Similarly, Bob chooses $10n$ random numbers $y_1,\ldots,y_{10n}$ in $[n^2]$ and sends $b_1,\ldots,b_{10n}$ to Alice
    where $b_j=H(y_j)$. (This step can be executed in parallel with Alice's first step.)

\item If there exists any $a_i=b_i$ among the exchanged strings, Alice and Bob let $(i,j)$ to be the lexicographically first index of such pair; Alice takes $x_i$ as her key and Bob takes $y_j$ as his key. If no such $(i,j)$ pair exits, they both take $0$ as the agreed key.

\end{enumerate}

\end{protocol}

It is easy to see that with probability at least $ 1-n^4/2^n$, the random function $H: [n^2] \To \bits^{n}$ is injective, and so any $a_i=b_i$ will lead to the same key $x_i=y_j$ used by Alice and Bob. In addition, the probability of \emph{not} finding a ``collision'' $a_i=b_j$ is at most $(1-10/n)^{10n} \leq (1/e)^{100} < 2^{-100}$ for all $n \geq 10$. Moreover, when there is a collision $a_i=b_j$, Eve has to essentially search the whole input space $[n^2]$ to find the preimage $x_i=y_j$ of $a_i=b_j$ (or, more precisely, make $n^2/2$ calls to $H(\cdot)$ on average).

We note that in his 1978 paper~\cite{Merkle78} Merkle described a different variant of a key agreement protocol by having Alice send to Bob $n$ ``puzzles'' $a_1,\ldots, a_n$ such that each puzzle $a_i$ takes $\approx n$ ``time'' to solve (where the times is modeled as the number of oracle queries), and the solver learns some secret $x_i$. The idea is that Bob would choose at random which puzzle $i\in [n]$ to solve, and so spend $\approx n$ time to learn $x_i$ which he can then use as a shared secret with Alice after sending a hash of $x_i$ to Alice so that she knows which secret Bob chose. On the other hand, Eve would need to solve almost all the puzzles to find the secret, thus spending $\approx n^2$ time.
These puzzles can indeed be implemented via a random oracle $H\colon [n]  \times [n] \To \bits^{n} \times \bits^m$ as follows. The $i$'th puzzle with hidden secret $x\in\bits^m$ can be obtained by choosing  and $k\gets [n]$ at random and getting $a_i = (H_1(i,k), H_2(i,k) \oplus x)$ where $\oplus$ denotes bitwise exclusive OR, $H_1(\cdot,\cdot)$ denotes the first $n$ bits of $H$'s output, and $H_2(\cdot,\cdot)$ denotes the last $m$ bits of $H$'s output. Now, given puzzles $P_1=(h^1_1,h^2_2),\dots,P_n=(h^n_1,h^n_2)$, Bob takes a random puzzle $P_j$, solves it by asking $H(j,k)$ for all $k \in [n]$ to get $H(j,k)=(h^j_1,h_2)$ for some $h_2$, and then he retrieves the puzzle solution $x=h_2 \oplus h^j_2$.


%

One problem with Merkle's protocol is that its security was only analyzed in the random oracle model which does not
necessarily capture security when instantiated with a cryptographic one-way or hash function~\cite{CanettiGoHa98}.
Biham, Goren, and Ishai~\cite{BihamGoIs08} took a step towards resolving this issue by providing a security
analysis for Merkle's protocol under  concrete complexity assumptions. In particular, they proved that assuming the existence of one-way functions that cannot be inverted with
probability more than $2^{-\alpha n}$ by adversaries running in time $2^{\alpha n}$ for $\alpha \geq 1/2-\delta$, there
is a key agreement protocol in which Alice and Bob run in time $n$ but any adversary whose running time is at most
$n^{2-10\delta}$ has $o(1)$ chance of finding the secret.

Perhaps a more serious issue with
Merkle's protocol is that it only provides a \emph{quadratic} gap between the running time of the honest parties and
the adversary. Fortunately, not too long after Merkle's work, Diffie and Hellman~\cite{DiffieHe76} and later Rivest,
Shamir, and Adleman~\cite{RivestShAd78} gave constructions for key agreement protocols that are conjectured to have
\emph{super-polynomial} (even subexponential) security and are of course widely used to this day. But because these and later protocols are based on certain
algebraic computational problems, they could perhaps be vulnerable to unforseen attacks using this algebraic structure.
It remained, however, an important open question to show whether there exist key agreement protocols with super-polynomial
security that use only a random oracle.\footnote{This is not to be confused with some more recent works such as
\cite{BellareRo93}, that combine the random oracle model with assumptions on the intractability of other problems such
as factoring or the RSA problem to obtain more efficient cryptographic constructions.} The seminal paper of Impagliazzo
and Rudich~\cite{ImpagliazzoRu89} answered this question negatively by showing that every key agreement protocol,  even in its full general form that is  allowed to run in polynomially many rounds,  can be broken by an adversary asking $O(n^6\log n)$ queries if the two parties ask
$n$ queries in the random oracle model.\footnote{More
accurately,~\cite{ImpagliazzoRu89} gave an $O(m^6\log m)$-query attack where $m$ is the maximum of the number of
queries $n$ and the number of communication rounds, though we believe their analysis could be improved to an $O(n^6\log
n)$-query attack. For the sake of simplicity, when discussing~\cite{ImpagliazzoRu89}'s results we will assume that $m =
n$, though for our result we do not need this assumption.} A random oracle is in particular a one-way function (with
high probability)\footnote{The proof of this statement for the case of non-uniform adversaries is quite nontrivial; see \cite{GennaroGeKaTr05} for a proof.}, and thus an important corollary of~\cite{ImpagliazzoRu89}'s result is that there is no construction of  key agreement protocols
based on one-way functions with a proof of super-polynomial security that is of the standard black-box type
(i.e., the implementation of the protocol uses the one-way function as an oracle, and its proof of security uses the one-way function and  any
adversary breaking the protocol also as oracles).\footnote{This argument applies to our result as well, and of course extends to any other primitive that is implied by random oracles (\eg collision-resistant hash functions) in a black-box way.}

\paragraph{Question and Motivation.} Impagliazzo and Rudich~\cite[Section~8]{ImpagliazzoRu89} mention as an
open question (which they attribute to Merkle) to find out whether their attack can be improved to $O(n^2)$ queries
(hence showing the optimality of Merkle's protocol in the random oracle model) or there exist key agreement protocols in
the random oracle model with $\omega(n^2)$ security. Beyond just being a natural question, it also has some practical
and theoretical motivations. The practical motivation is that protocols with sufficiently large polynomial gap could be
secure enough in practice---e.g., a key agreement protocol taking $10^9$ operations to run and $(10^9)^6 = 10^{54}$
operations to break could be good enough for many applications.\footnote{These numbers are just an example, and in practical applications the constant terms will make an important difference; however we  note that these particular constants are not ruled out by~\cite{ImpagliazzoRu89}'s attack but are ruled out by ours by taking number of
operations to mean the number of calls to the oracle.} In fact, as was argued by Merkle himself~\cite{Merkle74}, as technology
improves and honest users can afford to run more operations, such polynomial gaps only become more useful since the ratio between the work required
by the attacker and the honest user will grow as well. Thus, if
known algebraic key agreement protocols were broken, one might look to polynomial-security protocol such as Merkle's for
an alternative. Another motivation is theoretical--- Merkle's protocol has very limited interaction (consisting of one
round in which both parties simultaneously broadcast a message) and in particular it implies a public key encryption
scheme.  It is natural to ask whether more interaction can help achieve some polynomial advantage over this simple
protocol. Brakerski \etal~\cite{BrakerskiKSY11} show a
simple $O(n^2)$-query attack for protocols with \emph{perfect completeness} based on a random oracles,\footnote{We are not aware of any perfectly complete $n$-query key agreement protocol in the random oracle with $\omega(n)$ security. In other words, it seems conceivable that all such protocols could be broken with a \emph{linear} number of queries.} where the probability is over both the oracle and parties' random seeds. In this work we focus on the main question of~\cite{ImpagliazzoRu89} in full fledged form.

\subsection{Our Results}

In this work we answer the above question of~\cite{ImpagliazzoRu89}, by showing that
every protocol in the random oracle model where Alice and Bob make $n$ oracle queries can be broken with high
probability by an adversary making $O(n^2)$ queries. That is, we prove the following:

\begin{theorem}[Main Theorem] \label{thm:main}
Let $\Pi$ be a two-party protocol in the random oracle model such that when executing
$\Pi$ the two parties Alice and Bob make  at most $n$ queries each, and their outputs are identical with
probability at least $\rho$. Then, for every $0 < \delta < \rho$, there is an eavesdropping adversary Eve making $O(n^2/\delta^2)$ queries to the oracle
whose output agrees with Bob's output with probability at least $\rho-\delta$.
\end{theorem}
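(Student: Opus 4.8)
\emph{Setup and the attacker.}
Fix an execution of $\Pi$ with a random oracle and write $\mathbf{A}$ for Alice's view (her randomness together with all her query/answer pairs), $\mathbf{B}$ for Bob's view, $\mathbf{T}$ for the public transcript, and $Q_A,Q_B\se\bits^*$ for the sets of queries that Alice and Bob actually make, so $\size{Q_A},\size{Q_B}\le n$. For a value $T$ and a set $M$ of query/answer pairs, let $\mu_{T,M}$ denote the distribution of $(\mathbf{A},\mathbf{B},O)$ conditioned on $\mathbf{T}=T$ and on the oracle $O$ agreeing with $M$; whenever this is conditioned on an event of positive probability, Alice and Bob still make at most $n$ queries under $\mu_{T,M}$, so the expected number of distinct queries of each of them is at most $n$. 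The plan is to build an eavesdropper $\mathsf{Eve}$ who, seeing only $\mathbf{T}=T$, reveals enough of the oracle that the residual uncertainty about Bob's view (conditioned on what she knows) is distributed exactly as in reality; she then samples a fresh copy of Bob's view and outputs its output. Concretely, $\mathsf{Eve}$ maintains $M$, initially $\es$, and a threshold $\varepsilon=\Theta(\delta/n)$. She repeats: for every $q\notin M$ compute $p(q)=\Pr_{\mu_{T,M}}[\,q\in Q_A\text{ and }q\in Q_B\,]$; if some $q$ has $p(q)\ge\varepsilon$, query it to the real oracle, add the resulting pair to $M$, and continue; otherwise halt. (Only $\mathsf{Eve}$'s query complexity is constrained, so the cost of computing the posterior $\mu_{T,M}$ is irrelevant.) She caps her budget at $B=O(n^2/\delta^2)$ real queries and gives up if the cap is reached. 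On halting she draws $(\mathbf{A}',\mathbf{B}',O')\sim\mu_{T,M}$ and outputs the output prescribed by $\mathbf{B}'$.

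\emph{Query complexity.}
I would bound the number of reveals by a potential argument. Let $\Phi(M)=\Ex_{\mu_{T,M}}\big[\,\size{(Q_A\cap Q_B)\sm M}\,\big]$ be the expected number of as-yet-unrevealed common queries, so $0\le\Phi(M)\le n$ and $\Phi(\es)\le n$ at the start. When $\mathsf{Eve}$ reveals a query $q$ with $p(q)\ge\varepsilon$, averaging over the revealed answer $O(q)$ and using the law of total expectation, the expected contribution of every query other than $q$ to $\Phi$ is unchanged, while the $q$-term disappears; hence $\Ex[\Phi(M\cup\set{q})]\le\Phi(M)-\varepsilon$. A standard supermartingale/optional-stopping argument then gives that the expected number of reveals is at most $n/\varepsilon=O(n^2/\delta)$, so by Markov the loop finishes within the budget $B=O(n^2/\delta^2)$ except with probability $O(\delta)$ (in which case we simply allow $\mathsf{Eve}$ to fail).

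\emph{Correctness.}
Two ingredients are needed. The first is an \emph{independence lemma} in the spirit of Impagliazzo--Rudich: conditioned on $\mathbf{T}=T$, on $O$ agreeing with a partial table $M$, and on the event $E=\set{Q_A\cap Q_B\se M}$ that all common queries have been revealed, the views $\mathbf{A}$ and $\mathbf{B}$ are independent, since the only way the two views can become correlated is through a shared oracle query, and outside $M$ the oracle can be resampled independently for Alice and for Bob without disturbing the transcript. Granting this, the view $\mathbf{B}'$ that $\mathsf{Eve}$ samples has exactly the conditional distribution of the true $\mathbf{B}$ given $(T,M)$, and a short calculation combining this with the independence on $E$ and with $\Pi$'s completeness shows that $\mathsf{Eve}$'s output agrees with Bob's output with probability at least $\rho-O(\Pr[\neg E])-\negl(n)$. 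The second ingredient is therefore the bound $\Pr[\neg E]\le\delta$: at the moment $\mathsf{Eve}$ halts, every unrevealed query $q$ has $p(q)<\varepsilon$, and one must show this forces the posterior probability that \emph{some} unrevealed common query still exists to be at most $\delta$, then average over the outer randomness. Putting the pieces together, $\mathsf{Eve}$ makes at most $O(n^2/\delta^2)$ queries and, after rescaling constants in $\varepsilon$ and $B$, her output agrees with Bob's with probability at least $\rho-\delta$.

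\emph{The main obstacle.}
The heart of the matter is the bound $\Pr[\neg E]\le\delta$ and its compatibility with the query count. A naive union bound over candidate queries is hopeless: there can be unboundedly many queries $q$ with $0<p(q)<\varepsilon$, and although each is individually an unlikely common query, in aggregate they might make a missed common query likely --- correspondingly, $\Phi$ at halting need not be small merely because the per-query maximum of $p(\cdot)$ is below $\varepsilon$. The resolution must exploit that ``being a likely common query'' is self-limiting --- essentially the same supermartingale phenomenon that controls the query count, now reused to control the failure probability rather than just the running time --- and it is exactly at this step that the earlier $\widetilde{\Omega}(n^6)$ analysis of Impagliazzo and Rudich was lossy; pushing it down to a clean $O(n^2)$ attack is where the real work lies.
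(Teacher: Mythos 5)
Your heaviness criterion is wrong in a way that defeats the attack on Merkle's protocol itself, which the theorem must of course handle. You reveal a query $q$ whenever $p(q)=\Pr_{\mu_{T,M}}[\,q\in Q_A \text{ and } q\in Q_B\,]\geq\varepsilon$ with $\varepsilon=\Theta(\delta/n)$, i.e.\ when $q$ is a likely \emph{common} query. In Merkle's protocol (Protocol~\ref{prot:merkle}), conditioned on the transcript and $M=\varnothing$, each $q\in[n^2]$ has $\Pr[q\in Q_A]\approx 10/n$ and $\Pr[q\in Q_B]\approx 10/n$, so even accounting for the correlation induced by the observed hash collisions, $p(q)=O(1/n^2)$ uniformly. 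For constant $\delta$ this is well below $\varepsilon=\Theta(\delta/n)$, so your Eve reveals nothing, samples a fresh Bob view whose key is a uniformly random element of $[n^2]$, and succeeds with probability $O(1/n^2)$. There is no way to repair this by tuning $\varepsilon$: your own potential $\Phi(M)=\Ex[\,|(Q_A\cap Q_B)\setminus M|\,]\leq n$ forces the number of reveals to be $\Theta(n/\varepsilon)$, so lowering $\varepsilon$ to $\Theta(1/n^2)$ (which Merkle requires) pushes the query budget to $\Theta(n^3)$, losing the claimed $O(n^2)$. The paper's attacker (Construction~\ref{const:Eve}) uses a genuinely different criterion: reveal $q$ if it is likely to be in \emph{Alice's} view (threshold $\eps/n_B$) \emph{or} likely to be in \emph{Bob's} view (threshold $\eps/n_A$), as measured under $\GViews(M,P)$. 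This is much more aggressive --- in Merkle it reveals essentially all of $[n^2]$ --- yet the per-party counting in Claim~\ref{clm:notGreen} still caps the expected number of reveals at $O(n_An_B/\eps)=O(n^2)$, because the potential it charges against is $\sum_q\Pr[q\in Q_A]\leq n_A$ (resp.\ $n_B$) with a threshold of $\eps/n_B$ (resp.\ $\eps/n_A$), not the intersection.

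Your second ingredient, the ``independence lemma,'' is also false as stated. Conditioning on $E=\{Q_A\cap Q_B\subseteq M\}$ does \emph{not} make $\bfA$ and $\bfB$ independent given $(T,M)$: for instance, if Alice queries a uniform $x\in\{0^n,1^n\}$ and Bob independently queries a uniform $z\in\{0^n,1^n\}$, then conditioning on $Q_A\cap Q_B=\varnothing$ (with $M=\varnothing$) forces $x\neq z$, a perfect anticorrelation. This is exactly the ``dependency issue'' the paper highlights and resolves with the product characterization of Lemma~\ref{lem:product}, which expresses $\GViews(M,P)$ as a product distribution \emph{conditioned on} disjointness of the out-of-$\Q(P)$ query sets --- hence not a product, but close to one. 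The graph characterization (Lemma~\ref{lem:combChar}) then shows the induced disjointness graph is dense (every vertex is adjacent to a $1-2\eps$ fraction of the opposite side), and this density bound is what simultaneously gives the success probability (Lemma~\ref{lem:success2}), the near-independence you want (Corollary~\ref{cor:statClose}), and ultimately the correctness of the output. Your final paragraph names ``where the real work lies,'' but it is the density/graph-characterization step, not a reuse of the supermartingale argument; and as noted above, the work has to be done against the paper's per-party heaviness criterion, not against the intersection-probability criterion you propose.
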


To the best of our
knowledge, no better bound than the $\wt{O}(n^6)$-query attack of~\cite{ImpagliazzoRu89} was previously known even
in the case where one does not assume the one-way function is a random oracle (which would have made the task of proving a
negative result easier).

In the original publication of this work~\cite{BarakM09}, the following technical result (Theorem \ref{thm:indep}) was implicit in the proof of Theorem~\ref{thm:main}. Since this particular result has found uses in subsequent works to the original publication of this work~\cite{BarakM09}, here we state and prove it explicitly. This theorem, roughly speaking, asserts that by running the attacker of Theorem~\ref{thm:main} the ``correlation'' between the ``views'' of Alice and Bob (conditioned on Eve's knowledge) remains close to zero at all times. The view of a party consists of the  information they posses at any moment during the execution of the protocol: their private randomness, the public messages, and their private interaction with the oracle.

\begin{theorem}[Making Views almost Independent---Informal] \label{thm:indep}
Let $\Pi$ be a two-party protocol in the random oracle model such that when executing
$\Pi$ the two parties Alice and Bob make  at most $n$ oracle queries each. Then for any $\alpha,\beta<1/10$ there is an eavesdropper Eve making $\poly(n/(\alpha\beta))$ queries to the oracle such that with probability at least $1-\alpha$ the following holds at the end of \emph{every} round: the joint distribution of Alice's and Bob's views so far conditioned on Eve's view is $\beta$-close to being independent of each other.

\end{theorem}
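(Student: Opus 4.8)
The plan is to design Eve's attack so that after each round she has ``learned'' all of the oracle queries that are, from her viewpoint, reasonably likely to have been made by \emph{both} Alice and Bob --- these are exactly the queries that could create hidden correlation between the two views. Concretely, I would have Eve maintain the distribution on executions consistent with the public transcript and her own query/answer set, and repeatedly do the following: identify any oracle query $q$ that is asked by Alice with probability at least $\gamma$ \emph{and} asked by Bob with probability at least $\gamma$ (in the conditional distribution given Eve's current knowledge), for an appropriate threshold $\gamma = \poly(\alpha\beta/n)$; then actually query $q$ and add the true answer to her knowledge. Repeat until no such ``heavy intersection'' query remains, then let the round proceed. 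The first thing to check is that this process terminates quickly: each honest party makes at most $n$ queries, so $\Ex$ over the conditional distribution of $|Q_A \cap Q_B|$ is at most $n$; I would argue that each query Eve makes either lies in $Q_A \cap Q_B$ (bounded in expectation by $n$) or else was predicted with probability $\geq \gamma$ to be in it but turned out not to be, and a potential/martingale argument bounds the number of ``wasted'' such queries, giving a total of $\poly(n/(\alpha\beta))$ queries across all rounds.

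The heart of the argument is the claim that, once Eve has no heavy-intersection query left, Alice's and Bob's views are close to independent conditioned on her view. Here I would use the standard fact that in the random oracle model, conditioned on the transcript and on a set $E$ of query/answer pairs, the remaining oracle is still uniform on points outside $E$, and Alice's view depends only on her randomness and $Q_A$ while Bob's depends only on his randomness and $Q_B$. The conditional joint distribution of $(\text{view}_A, \text{view}_B)$ factorizes \emph{unless} the two parties query a common point outside $E$ --- that is the only channel through which correlation can leak. So I would show: conditioned on Eve's view, with probability $\geq 1-\alpha$ the event ``$Q_A$ and $Q_B$ intersect outside $E$'' has probability $\leq \beta$ over the continuation. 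Since every individual query outside $E$ that either party is about to make has conditional probability $< \gamma$ of being a heavy query, and there are at most $n$ such queries on each side, a union bound over the $\leq n^2$ pairs --- together with the observation that a non-heavy query on Alice's side can collide with a heavy query on Bob's side only with small probability, handled by the termination condition --- bounds the intersection probability by roughly $n^2\gamma + (\text{terms from the }1-\alpha\text{ failure event})$, which is $\leq\beta$ for the right choice of $\gamma$. When the views do factor given the no-intersection event, a coupling/statistical-distance argument shows the unconditional (on that event) conditional distribution is $\beta$-close to a product distribution.

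The subtle point --- and what I expect to be the main obstacle --- is making precise the ``with probability $\geq 1-\alpha$ over Eve's view'' quantifier and controlling the low-probability tail events. The conditional intersection probability is itself a random variable depending on Eve's (random) view; bounding its expectation is comparatively easy, but Theorem~\ref{thm:indep} demands a high-probability statement that holds simultaneously at the end of \emph{every} round. I would handle this by (i) working with the expectation of the intersection probability and applying Markov to get the $1-\alpha$ statement per round with an extra factor loss, and (ii) union-bounding over the (at most $n$) rounds, absorbing the resulting factors into the polynomial in the query bound. A second delicate issue is that Eve's own queries, when their answers come back, further condition the distribution and could in principle \emph{create} new heavy queries; the termination argument must account for this feedback loop, which is why the potential-function bookkeeping (tracking $\Ex|Q_A\cap Q_B|$ as it decreases with each learned common query) is essential rather than a naive counting bound.
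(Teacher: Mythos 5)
Your overall plan — maintain a conditional distribution, learn the "heavy" potential intersection queries, and then argue that once they are gone the conditional joint view is close to a product — is the same high-level strategy as the paper's (Construction~\ref{const:Eve} plus Theorem~\ref{thm:extensions}). But there is one genuine gap at the heart of the argument, and a second, smaller, difference worth pointing out.

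The gap is the step where you bound the post-termination intersection probability by roughly $n^2\gamma$ "by a union bound over the $\leq n^2$ pairs." Write out what you actually have after termination: for each $q\notin E$, at least one of $\Pr[q\in Q_A\mid\text{Eve's view}]$ or $\Pr[q\in Q_B\mid\text{Eve's view}]$ is below $\gamma$. To turn this into a bound on $\Pr[\exists q\notin E: q\in Q_A\cap Q_B]$ you need $\Pr[q\in Q_A\cap Q_B]\lesssim \Pr[q\in Q_A]\cdot\Pr[q\in Q_B]$, i.e.\ you need (approximate) independence of the events $\{q\in Q_A\}$ and $\{q\in Q_B\}$ under the conditional distribution — which is precisely the conclusion you are trying to establish. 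Without it, the only unconditional bound per term is $\Pr[q\in Q_A\cap Q_B]\le\min(\Pr[q\in Q_A],\Pr[q\in Q_B])<\gamma$, and since there can be up to $n$ queries with $\Pr[q\in Q_A]$ close to $1$ (each with $\Pr[q\in Q_B]<\gamma$), the sum over $q$ can be as large as $\Theta(n\gamma)$ only if you already know the product bound — the naive sum can be $\Theta(n)$. Your phrase "the conditional joint distribution factorizes unless the two parties query a common point outside $E$" is a helpful intuition but it is not a "standard fact" that you can simply invoke: the correct precise statement is the \emph{product characterization} (Lemma~\ref{lem:product}), which says that $\GViews(M,P)$ equals an honest product distribution $\bfA\times\bfB$ \emph{conditioned on} the no-intersection event $\Good(M,P)$ — it does not say the views are independent once you condition on no intersection, since conditioning a product on a correlated event introduces dependence. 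The paper then bounds the statistical distance from the product via the bipartite-graph density argument (Lemma~\ref{lem:highDeg}), which shows each vertex misses at most a $2\eps$ fraction of the opposite side; that is what converts the termination condition into the $O(\eps)$-closeness bound. Your sketch skips exactly this step, and it cannot be recovered by a union bound alone.

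A second, smaller difference: your heaviness rule queries $q$ only when it is likely-for-Alice \emph{and} likely-for-Bob, whereas the paper's Eve queries $q$ when it is $(\eps/n_B)$-heavy for Alice \emph{or} $(\eps/n_A)$-heavy for Bob (with respect to $\GViews(M,P)$). The disjunctive rule is what makes the per-vertex degree bound in Lemma~\ref{lem:highDeg} go through: for a fixed Alice-view $u$, if some $q\in Q_u$ is $(\eps/n_A)$-heavy for Bob then $q$ must already lie in $\Q(P)$, so $Q_u$ cannot intersect many Bob-views. Under the conjunctive rule, a query that is common for Bob but rare for Alice is never learned, and then a particular Alice-view $u$ containing that query can have very low degree, breaking the per-vertex bound (and, downstream, the success lemma). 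For the approximate-independence claim alone the conjunctive rule might ultimately be salvageable with an average-degree variant, but the analysis you would need is different from what you sketched and is not closed by your union bound. Finally, the paper handles the "simultaneously at the end of every round" quantifier via the Impagliazzo--Rudich chaining inequality (Lemma~\ref{lem:IR}) applied to the $\Fail$ event, rather than your proposed per-round Markov plus union bound over rounds; your version still lands inside $\poly(n/(\alpha\beta))$, so that part is fine, just looser. And the paper's termination/efficiency argument uses the same chaining lemma per query (Claim~\ref{clm:notGreen}) rather than your "$\Exp|Q_A\cap Q_B|$ decreases" potential, which is not monotone — conditioning on new oracle answers can increase the conditional expected intersection size, so that potential does not decrease and you would need the martingale form of the argument anyway.
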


See Section~\ref{sec:Extensions} for the formal statement and proof of  Theorem~\ref{thm:indep}.

\subsection{Related Work} \label{subsec:related}

\paragraph{Quantum-Resilient Key Agreement.}  In one central scenario in which some algebraic key agreement protocols will be broken---the construction of practical quantum computers--- Merkle's protocol will also be broken with linear oracle queries using Grover's search algorithm~\cite{Grover96}. In the original publication of this work we asked whether our $O(n^2)$-query classical attack could lead to an $O(n)$ quantum attack against any classical protocol (where Eve accesses the random oracle in a superposition).  We note that using quantum communication there is an \emph{information theoretically} secure key agreement
protocol~\cite{BennetBrEk92}. Brassard and Salvail~\cite{BrassardSa08} (independently observed by~\cite{BihamGoIs08}) gave a quantum
version of Merkle's protocol, showing that Alice and Bob can use quantum computation (but classical communication)
to obtain a key agreement protocol with super-linear $n^{3/2}$ security in the random oracle model against
quantum adversaries. Finally, Brassard \etal~\cite{BrassardHKKLS11} resolved our question negatively by presenting a \emph{classical} protocol in the random oracle model with super linear security $\Omega(n^{3/2-\eps})$ for arbitrary small constant $\eps$.


\paragraph{Attacks in Small Parallel Time.} Mahmoody, Moran, and Vadhan~\cite{MahmoodyMV11} showed how to improve the \emph{round complexity} of the attacker of Theorem~\ref{thm:main} to  $n$ (which is optimal) for the case of one-message protocols, where a round here refers to a set of queries that are asked to the oracle in parallel.\footnote{For example, a \emph{non-adaptive} attacker who prepares all of its oracle queries and then asks them in one shot, has round complexity one.} Their result rules out constructions of ``time-lock puzzles'' in the \emph{parallel} random oracle model in which the polynomial-query solver needs more \emph{parallel time} (i.e., rounds of parallel queries to the random oracle) than the puzzle generator to solve the puzzle. As an application back to our setting,~\cite{MahmoodyMV11} used the above result and showed that every $n$-query (even multi-round) key agreement protocol can  be broken by $O(n^3)$ queries in only $n$ rounds of oracle queries, improving the $\Omega(n^2)$-round attack of our work by a factor of $n$. Whether an $O(n)$-round $O(n^2)$-query attack is possible remains as an intriguing  open question.

\paragraph{Black-Box Separations and the Power of Random Oracle.} The work of Impagliazzo and Rudich~\cite{ImpagliazzoRu89} laid down the framework for the field of \emph{black-box separations}. A black-box separation of a primitive $\cQ$ from another primitive $\cP$ rules out any construction of $\cQ$ from $\cP$ as long as it treats the primitive $\cP$ and the adversary (in the security proof) as oracles. We refer the reader to the excellent survey by Reingold \etal~\cite{ReingoldTrVa04} for the formal definition and its variants. Due to the abundance of black-box techniques in cryptography, a black-box separation indicates a major disparity between how hard it is to achieve $\cP$ vs. $\cQ$, at least with respect to black-box techniques. The work of \cite{ImpagliazzoRu89} employed the so called ``oracle separation'' method to derive their black-box separation. In particular, they showed that relative to the oracle $O=(R,\PSPACE)$ in which $R$ is a random oracle  one-way functions exist  (with high probability)  but secure key agreement does not. This existence of such an oracle implies a black-box separation.

The main technical step in the proof of \cite{ImpagliazzoRu89} is to show that relative to a random oracle $R$, any key agreement protocol could be broken by an adversary who is computationally unbounded and asks at most $S=\poly(n)$ number of queries (where $n$ is the security parameter). The smallest such polynomial $S$ for any construction $\cC$ could be considered as a quantitative black-box security for $\cC$ in the random oracle model. This is indeed the setting of our paper, and we study the optimal black-box security of key agreement in the random oracle model.
Our Theorem~\ref{thm:main} proves that $\Theta(n^2)$ is the optimal security one can achieve for an $n$-query key agreement protocol in the random oracle model. The techniques used in the proof of Theorem~\ref{thm:main} have found  applications in  the contexts of black-box separations and black-box security in the random oracle model (see, \eg~\cite{KatzSY11,BrakerskiKSY11,MahmoodyP12}).  In the following we describe some of the works that focus on the power of random oracles in secure two-party computation.

Dachman-Soled \etal~\cite{DachmanLMM11} were the first to point out that results implicit in our proof of Theorem~\ref{thm:main} in the original publication of this work~\cite{BarakM09} could be used to show the existence of eavesdropping attacks that gather enough information from the oracle in a way that conditioned on this information  the views of Alice and Bob become ``close'' to being independent (see Lemma 5 of~\cite{DachmanLMM11}). Such results were used in~\cite{DachmanLMM11},~\cite{MahmoodyMP12}, and~\cite{HaitnerHoReSe07} to explore the power of random oracles in secure two-party computation.
Dachman-Soled \etal showed that ``optimally-fair'' coin tossing protocols~\cite{Cleve1986} cannot be based on one-way functions with $n$ input and $n$ output bits in a black-box way if the  protocol has $o(n/\log n)$ rounds.

Mahmoody, Maji, and Prabhakaran~\cite{MahmoodyMP12} proved that random oracles are useful for secure two-party computation of finite (or at most polynomial-size domain) deterministic functions only as the commitment functionality. Their results showed that ``non-trivial'' functions can not be computed securely by  a black-box use of one-way functions.

Haitner, Omri, and Zarosim~\cite{HaitnerOZ12} studied input-less randomized functionalities and showed that a random oracle\footnote{\cite{HaitnerOZ12} proved this result for a larger class of oracles, see \cite{HaitnerOZ12} for more details.} is, to a large extent, useless for such functionalities as well. In particular, it was shown that for every protocol $\Pi$ in the random oracle model, and every polynomial $p(\cdot)$, there is a protocol in the no-oracle model that is ``$\nicefrac{1}{p(\cdot)}$-close'' to $\Pi$.~\cite{HaitnerOZ12} proved this result by using  the machinery developed in the original publication of this work  (\eg the \emph{graph characterization} of Section~\ref{sec:GraphChar}) and simplified some of the steps of the original proof. \cite{HaitnerOZ12} showed how to use such lower-bounds for the input-less setting to prove black-box separations from one-way functions for ``differentially private'' two-party functionalities for the \emph{with-input} setting. 

\subsection{Our Techniques} \label{sec:techniques}

The main technical challenge in proving our main result is the issue of \emph{dependence} between the executions of the
two parties Alice and Bob in a key agreement protocol.  At first sight, it may seem that a computationally unbounded
attacker that monitors all communication between Alice and Bob will trivially be able to find out their shared key. But
the  presence of the random oracle allows Alice and Bob to correlate their executions even without communicating (which
is indeed the reason that Merkle's protocol achieves nontrivial security). Dealing with such correlations is the cause
of the technical complexity in both our work and the previous work of Impagliazzo and Rudich~\cite{ImpagliazzoRu89}. We
handle this issue in a different way than~\cite{ImpagliazzoRu89}. On a very  high level our approach can be viewed
as using more information about the structure of these correlations than~\cite{ImpagliazzoRu89} did. This allows us to
analyze a more efficient attacking algorithm that is more frugal with the number of queries it uses than the attacker
of~\cite{ImpagliazzoRu89}. Below we provide a more detailed (though still high level) exposition of our technique and
its relation to~\cite{ImpagliazzoRu89}'s technique.


We now review~\cite{ImpagliazzoRu89}'s  attack (and its analysis) and particularly discuss the subtle issue of
\emph{dependence} between Alice and Bob that arises in both their work and ours.
However, no result of this section is used in the later sections, and so the reader should feel free at
any time to skip ahead to the next sections that contain our actual attack and its
analysis.

\subsubsection{The Approach of~\cite{ImpagliazzoRu89}}

Consider a protocol that consists of $n$ rounds of interaction, where each party makes exactly one oracle query before
sending its message.~\cite{ImpagliazzoRu89} called protocols of this type ``normal-form protocols'' and gave an
$\wt{O}(n^3)$ attack against them (their final result was obtained by transforming every protocol into a normal-form
protocol with a quadratic loss of efficiency). Even though without loss of generality the attacker Eve of a key agreement protocol can defer all of her computation till after the interaction between Alice and Bob is finished, it is
conceptually simpler in both~\cite{ImpagliazzoRu89}'s case and ours to think of the attacker Eve as running
concurrently with Alice and Bob. In particular, the attacker Eve of~\cite{ImpagliazzoRu89} performed the following
operations after each round $i$ of the protocol:

\begin{itemize}

\item If the round $i$ is one in which Bob sent a message, then at this point Eve samples  $1000n\log n$ random
    executions of Bob from the distribution $\cD$ of Bob's executions that are consistent with the information that
    Eve has at that moment (which consists of the communication transcript and previous oracle answers). That is, Eve samples a uniformly
    random tape for Bob and uniformly random query answers subject to being consistent with Eve's information.
    After each time she samples an execution, Eve asks the oracle all the queries that are asked during this execution
    and records the answers. (Generally, the true answers will be different from Eve's guessed answers when
    sampling the execution.) If the round $i$ is one in which Alice sent a message, then Eve does similarly by changing the role of Alice and Bob.

\end{itemize}

Overall Eve will sample $\wt{O}(n^2)$ executions making a total of $\wt{O}(n^3)$ queries. It's not hard to see
that as long as Eve learns all of the \emph{intersection queries} (queries asked by both Alice and Bob during the
execution) then she can recover the shared secret with high probability. Thus
the bulk of~\cite{ImpagliazzoRu89}'s analysis was devoted to showing the following claim.

\begin{claim} \label{clm:IR}
With probability at least $0.9$ Eve never fails, where we say that Eve \emph{fails} at round $i$ if
the query made in this round by, say, Alice was asked previously by Bob but not by Eve.
\end{claim}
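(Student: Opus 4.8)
\emph{Proof plan.} The plan is to union-bound over the $n$ rounds: writing $\mathsf{NF}_{i-1}$ for the event that Eve has not failed in any of rounds $1,\dots,i-1$, the events ``$\mathsf{NF}_{i-1}\wedge(\text{Eve fails at round }i)$'' are disjoint and cover ``Eve ever fails'', so
$\Pr[\text{Eve ever fails}]=\sum_i\Pr[\mathsf{NF}_{i-1}\wedge(\text{Eve fails at round }i)]\le\sum_i\Pr[\text{Eve fails at round }i\mid\mathsf{NF}_{i-1}]$,
and it suffices to bound each term by roughly $0.1/n$. Fix a round $i$ and assume for concreteness that in round $i$ Alice makes the query $\bfq_i$ (the case of Bob is symmetric); let $Q_A,Q_B,Q_E$ denote the sets of oracle queries made so far by Alice, Bob, and Eve. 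By definition Eve fails at round $i$ iff $\bfq_i\in Q_B\setminus Q_E$. The bound rests on two ingredients: a re-sampling step and an independence step.

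\emph{Re-sampling step.} Right after each round in which a party speaks, Eve draws $T=1000n\log n$ independent executions of that party from the distribution $\cD$ of its executions consistent with her current view, and queries the oracle on every query occurring in them. Since the party's own partial execution is \emph{itself} a draw from $\cD$ given Eve's view, for a fixed query $q$ the probability that $q$ occurs in the party's execution but in none of Eve's $T$ samples is $p(1-p)^T\le 1/T$, where $p=\Pr_\cD[q\text{ occurs}]$, and this survives averaging over Eve's view. A Chernoff-plus-union-bound argument over the $n$ rounds and the at most $n/\tau$ queries of probability $\ge\tau:=1/(100n)$ then shows that, except with probability at most $0.04$, the event $\mathsf{E}^{*}$ holds: at the end of every round Eve has queried every query that has probability $\ge\tau$ of lying in the speaking party's partial execution. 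In particular, on $\mathsf{E}^{*}$ every query Bob has made but Eve has missed is ``light'', i.e.\ had conditional probability $<\tau$ of being one of Bob's queries at the round where he first asked it.

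\emph{Independence step.} On $\mathsf{NF}_{i-1}$ neither party has ever queried a point the other party queried but Eve does not know — such an event would itself be a failure in some round $\le i-1$ — so in the world conditioned on Eve's view and on $\mathsf{NF}_{i-1}$ the oracle points touched by Alice and by Bob overlap only inside Eve's known set, which pushes the conditional joint distribution of their two views to be $\epsilon_{\mathrm{ind}}$-close to a product distribution. Hence $\bfq_i$ (a function of Alice's view) is, up to $\epsilon_{\mathrm{ind}}$, independent of $Q_B$ (a function of Bob's view), so (with all probabilities conditioned on Eve's view and $\mathsf{NF}_{i-1}$, and with the conditioning taken at the round where the colliding query was first made, so that the re-sampling remains independent of what is being bounded)
\[
\Pr[\bfq_i\in Q_B\setminus Q_E]\;\le\;\sum_{q}\Pr[\bfq_i=q]\,\Pr[q\in Q_B\setminus Q_E]\;+\;\epsilon_{\mathrm{ind}}\;\le\;\max_q\Pr[q\in Q_B\setminus Q_E]+\epsilon_{\mathrm{ind}}\;\le\;\tau+\epsilon_{\mathrm{ind}},
\]
using $\sum_q\Pr[\bfq_i=q]\le1$ and, in the last step, that on $\mathsf{E}^{*}$ every query in $Q_B\setminus Q_E$ is light. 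Summing over the $n$ rounds, the total failure probability is at most $n\tau+\Pr[\neg\mathsf{E}^{*}]+n\,\epsilon_{\mathrm{ind}}\le 0.01+0.04+n\,\epsilon_{\mathrm{ind}}$, which is below $0.1$ once $n\,\epsilon_{\mathrm{ind}}<0.05$; the choice $T=1000n\log n$ is what drives both $\Pr[\neg\mathsf{E}^{*}]$ and the accumulated error $n\,\epsilon_{\mathrm{ind}}$ down to lower-order terms.

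The main obstacle is the independence step. Although on $\mathsf{NF}_{i-1}$ Alice and Bob share no oracle query outside Eve's view, the \emph{sets} of points they query are themselves random and correlated — this is exactly the mechanism by which Merkle's protocol correlates Alice and Bob without any communication — so their conditional joint view distribution is not literally a product, and one must both quantify its distance from a product at each round and argue that this error does not accumulate over the polynomially many rounds (with Eve's re-sampling being what keeps it in check). Making this precise is the technically delicate heart of~\cite{ImpagliazzoRu89}'s argument; controlling the analogous correlation by a different and more query-efficient route is the business of the rest of this paper.
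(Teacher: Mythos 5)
Your write-up correctly identifies \emph{where} the difficulty lies, but the ``independence step'' is not a proof step: it is exactly the gap. You assert that conditioning on Eve's view and on $\mathsf{NF}_{i-1}$ (no failure so far) makes the joint view distribution $\epsilon_{\mathrm{ind}}$-close to a product, and then your displayed chain of inequalities silently uses this and a bound on $\epsilon_{\mathrm{ind}}$ that you never establish. The paper explicitly warns that this intuition is false: conditioning on ``Eve has missed no intersection query'' is itself nontrivial shared information that correlates Alice and Bob even when they share no queries outside Eve's knowledge (the footnoted $0^n$/$1^n$ example is a concrete protocol where, with no intersection query at all, Bob deterministically learns Alice's secret bit, so the conditional distribution is as far from product as possible). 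For the IR attacker, whose Eve samples Alice's or Bob's execution \emph{separately}, nothing in the attacker's definition forces the residual joint distribution toward a product, so $\epsilon_{\mathrm{ind}}$ simply has no a priori bound and your final line does not follow. You acknowledge this at the end (``making this precise is the technically delicate heart of IR's argument''), which is honest, but it means the argument as written is a plan that stops precisely where the work starts.

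It is also worth noting that this Claim appears in the paper only as a statement of what Impagliazzo--Rudich prove, inside a section the paper explicitly flags as expository background that is never used later; the paper itself offers no proof of it. IR's actual proof does \emph{not} go through a near-independence statement at all: it uses a ``charging argument'' in which the (real, nonzero) dependence is charged against some of Eve's sampled executions, and one shows that at most $n$ of her $\Theta(n\log n)$ samples per round are so charged while the remainder behave as if independence held. That charging mechanism, not an $\epsilon_{\mathrm{ind}}$ bound, is the missing ingredient. By contrast, the near-product route you sketch is closer in spirit to what the present paper does for its \emph{own}, differently-defined attacker: there Eve learns all queries heavy under the \emph{joint} conditional distribution of Alice and Bob, which is exactly what forces the bipartite view graph to be dense (Lemma~\ref{lem:combChar}) and makes a near-product statement true. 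That conclusion does not transfer to IR's attacker, whose sampling is defined one party at a time. So the proposal conflates two different attackers: the near-independence property you need is a consequence of the paper's attack design, not of IR's, and for IR's attack you would need to reproduce the charging argument, which the proposal does not attempt.
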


At first look, it may seem that one could easily prove Claim~\ref{clm:IR}. Indeed, Claim~\ref{clm:IR} will follow by showing that
at any round $i$, the probability that Eve fails in round $i$ \emph{for the first time} is at most  $1/(10n)$. Now all
the communication between Alice and Bob is observed by Eve, and if no failure has yet happened then Eve has also
observed all the intersection queries so far. Because the answers for non-intersection queries are completely random
and independent from one another it seems that Alice has no more information about Bob than Eve does, and hence if the
probability that Alice's query $q$ was asked before by Bob is more than $1/(10n)$ then this query $q$ has probability
at least $1/(10n)$ to appear in each one of Eve's sampled executions of Bob.  Since Eve makes $1000n\log n$ such
samples, the probability that Eve misses $q$ would be bounded by $(1-\tfrac{1}{10n})^{1000n\log n} \ll 1/(10n)$.

\paragraph{The Dependency Issue.}
When trying to turn the above intuition into a proof, the assumption that Eve has as much information about Bob as Alice
does translates to the following statement: conditioned on Eve's information, the distributions of Alice's view and
Bob's view are \emph{independent} from one another.\footnote{Readers familiar with the setting of communication
complexity may note that this is analogous to the well known fact that conditioning on any transcript of a 2-party
communication protocol results in a product distribution (i.e., combinatorial rectangle) over the inputs. However,
things are different in the presence of a random oracle.} Indeed, if this statement were true then the above paragraph
could have been easily translated into a proof that~\cite{ImpagliazzoRu89}'s attacker is successful, and it wouldn't have been
hard to optimize this attacker to achieve $O(n^2)$ queries.  Alas, this statement is false. Intuitively the reason is
the following: even the fact that Eve has not missed any intersection queries is some nontrivial information that
Alice and Bob share and creates dependence between them.\footnote{As a simple example for such dependence consider a
protocol where in the first round Alice chooses $x$ (which is going to be the shared key) to be either the string $0^n$ or $1^n$ at random, queries the
oracle $H$ at $x$ and sends $y=H(x)$ to Bob. Bob then makes the query $1^n$ and gets $y'=H(1^n)$. Now even if Alice
chose $x=0^n$ and hence Alice and Bob have no intersection queries, Bob can find out the value of $x$ just by observing
that $y'\neq y$. Still, an attacker must ask a non-intersection query such as  $1^n$ to know if $x=0^n$ or $x=1^n$.}

Impagliazzo and Rudich~\cite{ImpagliazzoRu89} dealt with this issue by a ``charging argument'', where they showed that such dependence can be charged in a certain way to one of the
executions sampled by Eve, in a way that at most $n$ samples can be charged at each round (and the rest of Eve's
samples are distributed correctly as if the independence assumption was true). This argument inherently required
sampling at least $n$ executions (each of $n$ queries) per round,  resulting in an $\Omega(n^3)$ attack.

\subsubsection{Our Approach}

We now describe our approach and how it differs from the previous proof of~\cite{ImpagliazzoRu89}. The discussion below
is somewhat high level and vague, and glosses over some important details. Again, the reader is welcome to skip ahead
at any time to Section~\ref{sec:desc} that contains the full description of our attack and does not depend on this
section in any way.
Our attacking algorithm follows the same general outline as that of~\cite{ImpagliazzoRu89} but has two important differences:

\begin{enumerate}

\item One \emph{quantitative} difference is that while our attacker Eve also computes a distribution $\cD$ of
    possible executions of Alice and Bob conditioned on her knowledge, she does \emph{not} sample  full
    executions from $\cD$; rather, she computes whether there is any
    query $q\in\bits^*$ that has probability more than, say, $1/(100n)$ of being in $\cD$ and makes
    only such \emph{heavy} queries.

    Intuitively, since Alice and Bob make at most $2n$ queries, the total expected number of heavy queries (and
    hence the query complexity of Eve) is bounded by $O(n^2)$. The actual analysis is more involved since the
    distribution $\cD$ keeps changing as Eve learns more information through the messages she observes and oracle
    answers she receives. 

\item The \emph{qualitative} difference is that here we do not consider the same distribution
    $\cD$ that was considered by~\cite{ImpagliazzoRu89}. Their attacker to some extent ``pretended'' that the
    conditional distributions of Alice and Bob are independent from one another and only considered one party in each round. In contrast, we define our
    distribution $\cD$ to be the \emph{joint} distribution of Alice and Bob, where there could be dependencies
    between them. Thus, to sample from our distribution $\cD$ one would need to sample a \emph{pair} of executions
    of Alice and Bob (random tapes and oracle answers) that are \emph{consistent} with one another and
    Eve's current knowledge.

\end{enumerate}

The main challenge in the analysis is to prove that the attack is \emph{successful} (\ie that
Claim~\ref{clm:IR} above holds) and in particular that the probability of failure at each round (or more generally, at each
query of Alice or Bob) is bounded by, say, $1/(10n)$. Once again, things would have been easy if we knew that the
distribution $\cD$ of the possible executions of Alice and Bob conditioned on Eve's knowledge is a \emph{product distribution}, and hence Alice has no more information on Bob than Eve has. While this is not
generally true, we show that in our attack this distribution is \emph{close to being a product distribution}, in a
precise sense.

At any point in the execution, fix Eve's current information about the system and define a bipartite graph $G$ whose
left-side vertices correspond to possible executions of Alice that are consistent with Eve's information and right-side
vertices correspond to possible executions of Bob consistent with Eve's information. We put an edge between two
executions $A$ and $B$ if they are consistent with one another and moreover if they do not represent an execution in
which Eve has already \emph{failed}  (i.e., there is no intersection query that is asked in both executions $A$ and $B$ but not by Eve).
Roughly speaking, the distribution $\cD$ that our attacker Eve considers can be thought of as choosing a uniformly random
edge in the graph $G$. (Note that the graph $G$ and the distribution $\cD$ change at each point that Eve learns some
new information about the system.) If $G$ were the complete bipartite clique then $\cD$ would have been a product distribution.
Although $G$ can rarely be the complete graph, what we show is that $G$ is still \emph{dense} in the sense that each vertex is connected to most of the vertices on
the other side. Relying on the density of this graph, we show that Alice's probability of hitting a query that Bob asked before is at most
twice the probability that Eve does so if she chooses the most likely query based on her knowledge.

The bound on the degree is obtained by showing that $G$ can be represented as a \emph{disjointness graph}, where each
vertex $u$ is associated with a set $S(u)$ (from an arbitrarily large universe) and there is an edge between a
left-side vertex $u$ and a right-side vertex $v$ if and only if $S(u) \cap S(v) = \es$. The set $S(u)$
 corresponds to the queries made in the execution corresponding to $u$ that are \emph{not} asked by Eve.
The definition of the graph $G$ implies that $|S(u)| \leq n$ for all
vertices $u$. The definition of our attacking algorithm implies that the distribution obtained by picking a random edge
$e=(u,v)$ and outputting $S(u) \cup S(v)$ is \emph{light} in the sense that there is no element $q$ in the universe
that has probability more than $1/(10n)$ of being in a set chosen from this distribution. We show that these
conditions  imply that each vertex is connected to most of the vertices on the other side.

\section{Preliminaries} \label{sec:prelims}

We use bold fonts to denote random variables. By $Q \gets \bQ$ we indicate that $Q$ is sampled from the distribution of the random variable $\bfQ$.
By $(\bfx,\bfy)$ we denote a \emph{joint} distribution over random variables $\bfx,\bfy$. By $\bfx \equiv \bfy$ we denote that $\bfx$ and $\bfy$ are identically distributed. For jointly distributed $(\bfx,\bfy)$, by $(\bfx \mid \bfy=y)$ we denote the distribution of $\bfx$ conditioned on $\bfy=y$. When it is clear from the context we might simply write $(\bfx \mid y)$ instead of $(\bfx \mid \bfy=y)$. By $(\bfx \times \bfy)$ we denote a product distribution in which $\bfx$ and $\bfy$ are sampled independently. For a finite set $S$, by $x \gets S$ we denote that $x$ is sampled from $S$ uniformly at random. By $\Supp(\bfx)$ we denote the \emph{support set} of the random variable $\bfx$ defined as $\Supp(\bfx) = \set{x \mid \Pr[\bfx=x] > 0}$.   For any event $E$, by $\neg E$ we denote the complement  of the event $E$.

\begin{definition}
A \emph{partial function} $F$ is a function $F \colon D \To \bits^*$ defined over some domain $D \se \bits^*$. We call two partial functions $F_1,F_2$ with domains $D_1,D_2$ \emph{consistent} if $F_1(x)=F_2(x)$ for every $x \in D_1 \cap D_2$. (In particular, $F_1$ and $F_2$ are consistent if $D_1 \cap D_2 = \es$.)
\end{definition}

In previous work random oracles are defined either as Boolean functions~\cite{ImpagliazzoRu89} or length-preserving functions~\cite{BellareRo93}. In this work we use a general definition that captures both cases by only requiring the oracle answers to be independent. Since our goal is to give \emph{attacks} in this model, using this definition makes our results more general and applicable to both scenarios.

\begin{definition}[Random Oracles] \label{def:RandOr}
A \emph{random oracle} $\bfH(\cdot)$ is a random variable whose values are functions $H \colon \bits^* \To \bits^*$ such that $\bfH(x)$ is distributed independently of $\bfH(\bits^* \sm \set{x})$ for all $x \in \bits^*$ and that $\Pr[\bfH(x)=y]$ is a rational number for every pair $(x,y)$.\footnote{Our results extend to the case where the probabilities are not necessarily rational numbers, however, since every reasonable candidate random oracle we are aware of satisfies this rationality condition, and it avoids some technical subtleties, we restrict attention to oracles that satisfy it. In Section~\ref{sec:removeRational} we show how to remove this restriction.}  For any finite partial function $F$, by $\Pr_{\bfH}[F]$ we denote the probability that the random oracle $\bfH$ is consistent with $F$. Namely, $\Pr_{\bfH}[F] = \Pr_{H \gets \bfH}[F \se H]$ and $\Pr_{\bfH}[\es] = 1$ where $F \se H$ means that the partial function $F$ is consistent with $H$.
\end{definition}

\begin{remark}[Infinite vs. Finite Random Oracles]
  In this work, we will always work with \emph{finite} random oracles which are only queried on inputs of length  $n \leq \poly(\kappa)$ where $\kappa$ is a (security) parameter given to parties. Thus, we only need a finite variant of Definition \ref{def:RandOr}. However, in case of infinite random oracles (as in Definition \ref{def:RandOr}) we need a measure space over the space of full infinite oracles that is consistent with the finite probability distributions of $\bfH(\cdot)$ restricted to inputs $\bits^n$ for all $n=1,2,\dots$. By Caratheodory's extension theorem, such measure space exists and is unique (see Theorem 4.6 of \cite{ThomasNotes}).
\end{remark}

Since for every random oracle  $\bfH(\cdot)$ and fixed $x$ the random variable $\bfH(x)$ is independent of $\bfH(x')$ for all $x' \neq x$, we can use the following characterization of $\Pr_\bfH[F]$ for every $F \se \bits^* \times \bits^*$. Here we only state and use this lemma for finite sets.

\begin{proposition} \label{prop:ProbOfPartial}
For every random oracle $\bfH(\cdot)$ and every finite set $F \subset \bits^* \times \bits^*$ we have $$\Pr_\bfH[F] = \prod_{(x,y) \in F} \Pr[\bfH(x) = y].$$
\end{proposition}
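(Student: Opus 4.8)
The plan is to unwind the definition of $\Pr_\bfH[F]$ and reduce the claim to a statement about finitely many coordinates of $\bfH$. Write $\mathrm{dom}(F) = \set{x_1,\dots,x_k}$ for the distinct inputs appearing in $F$, and let $y_i$ be the value $F$ assigns to $x_i$, so that (since $F$ is a partial function) $F \se H$ holds exactly when $H(x_i)=y_i$ for all $i\in[k]$. Thus $\Pr_\bfH[F] = \Pr[\bigwedge_{i=1}^k \bfH(x_i)=y_i]$, and the proposition amounts to the assertion that the $k$ random variables $\bfH(x_1),\dots,\bfH(x_k)$ are mutually independent. The case $F=\es$ (\ie $k=0$) is just the convention $\Pr_\bfH[\es]=1$, which equals the empty product, so from now on we may assume $k\geq 1$.

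First I would establish the mutual independence by induction on $k$. The base case $k=1$ is vacuous. For the inductive step, note that Definition~\ref{def:RandOr} gives more than pairwise independence: it states that $\bfH(x_k)$ is independent of the entire collection $\bfH(\bits^*\sm\set{x_k}) = (\bfH(x'))_{x'\neq x_k}$, viewed as a single jointly distributed random variable. Since $(\bfH(x_1),\dots,\bfH(x_{k-1}))$ is a deterministic function of that collection, the event $\bigwedge_{i<k}\set{\bfH(x_i)=y_i}$ is independent of the event $\set{\bfH(x_k)=y_k}$, and hence
$$\Pr\Bigl[\bigwedge_{i=1}^k \bfH(x_i)=y_i\Bigr] = \Pr\Bigl[\bigwedge_{i=1}^{k-1}\bfH(x_i)=y_i\Bigr]\cdot\Pr[\bfH(x_k)=y_k].$$
Applying the induction hypothesis to the same random oracle $\bfH$ with the $k-1$ distinct inputs $x_1,\dots,x_{k-1}$ — which is legitimate because $\bfH$ itself still satisfies the hypothesis of Definition~\ref{def:RandOr} — rewrites the first factor as $\prod_{i=1}^{k-1}\Pr[\bfH(x_i)=y_i]$, and multiplying through gives $\prod_{i=1}^k\Pr[\bfH(x_i)=y_i] = \prod_{(x,y)\in F}\Pr[\bfH(x)=y]$, as desired.

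There is essentially no hard step here; the only points that deserve care are bookkeeping ones. One is that the statement must be read with $F$ a finite \emph{partial} function (a consistent set of pairs): if some input $x$ occurred in $F$ with two distinct values then $\Pr_\bfH[F]=0$ while the displayed product need not vanish, so the identity would fail for a trivial and uninteresting reason. The other is that, in the infinite-oracle case, the phrases ``deterministic function of $\bfH(\bits^*\sm\set{x_k})$'' and ``independent of'' refer to the Caratheodory extension measure mentioned in the remark following Definition~\ref{def:RandOr}; but since the entire argument only ever involves the finitely many coordinates $x_1,\dots,x_k$, this causes no difficulty — one could equally carry out the whole proof inside the finite marginal of $\bfH$ on any finite input set containing $\set{x_1,\dots,x_k}$. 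The one substantive observation I would be sure to flag is that the hypothesis of Definition~\ref{def:RandOr} is genuinely stronger than pairwise independence of the $\bfH(x_i)$'s: it is precisely the ``each coordinate versus all the rest'' form of independence that powers the induction.
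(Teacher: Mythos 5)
Your proof is correct. The paper itself states Proposition~\ref{prop:ProbOfPartial} without a written proof, treating it as an immediate consequence of the independence clause in Definition~\ref{def:RandOr}, so there is nothing to compare approaches against; your argument is the natural way to make that assertion rigorous. Your induction on the size of $F$, using that $\bfH(x_k)$ is independent of the entire tuple $(\bfH(x'))_{x' \neq x_k}$ (and hence of any function of it, in particular of $(\bfH(x_1),\dots,\bfH(x_{k-1}))$), is exactly right, and you are correct that the ``one coordinate against all the rest'' form of independence in Definition~\ref{def:RandOr} is what powers the induction --- pairwise independence alone would not suffice (e.g.\ three pairwise-independent bits whose XOR is $0$). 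The bookkeeping points you flag --- that $F$ must be a partial function for the statement to be true, and that the infinite-oracle case reduces to the finite marginal on $\set{x_1,\dots,x_k}$ --- are both correctly handled and consistent with the paper's conventions (the definition of $\Pr_\bfH[F]$ is in fact only given for partial functions, and the remark following Definition~\ref{def:RandOr} invokes the Carath\'eodory extension for the infinite case).
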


Now we derive the following lemma from the above proposition.
\begin{lemma} \label{lem:IncExc}
  For consistent finite partial functions $F_1,F_2$ and random oracle $\bfH$ it holds that
  $$\Pr_\bfH[F_1 \cup F_2] = \frac{ \Pr_\bfH[F_1] \cdot \Pr_\bfH[F_2] }{ \Pr_\bfH[F_1 \cap F_2]}.$$
\end{lemma}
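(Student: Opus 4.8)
The plan is to reduce everything to the product formula of Proposition~\ref{prop:ProbOfPartial} and then invoke an elementary identity for products over finite sets. First I would check that all the objects in sight are finite partial functions: consistency of $F_1$ and $F_2$ guarantees that $F_1 \cup F_2$ assigns a well-defined value to every input in its domain, and $F_1 \cap F_2$ (intersection taken as a set of ordered pairs) is automatically a partial function since it is contained in $F_1$. Hence Proposition~\ref{prop:ProbOfPartial} applies to each of the four partial functions, and, writing $p_{x,y} := \Pr[\bfH(x)=y]$, we have $\Pr_\bfH[F] = \prod_{(x,y)\in F} p_{x,y}$ in every case.

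Next I would use the set-theoretic fact that for any finite sets $A,B$ and any real-valued $f$,
\[
\Bigl(\prod_{e \in A\cup B} f(e)\Bigr)\cdot\Bigl(\prod_{e \in A\cap B} f(e)\Bigr) \;=\; \Bigl(\prod_{e \in A} f(e)\Bigr)\cdot\Bigl(\prod_{e \in B} f(e)\Bigr),
\]
which holds as a multiset identity, since each $e$ contributes to exactly as many factors on the left as on the right (one factor for each of $A,B$ containing it). Applying this with $A = F_1$, $B = F_2$, $f\big((x,y)\big) = p_{x,y}$, and rewriting each product via Proposition~\ref{prop:ProbOfPartial}, yields $\Pr_\bfH[F_1 \cup F_2]\cdot \Pr_\bfH[F_1 \cap F_2] = \Pr_\bfH[F_1]\cdot \Pr_\bfH[F_2]$, which is the claimed equality after dividing by $\Pr_\bfH[F_1 \cap F_2]$.

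The only point that needs a word of care — and the closest thing to an obstacle in an otherwise routine argument — is the division step, which presumes $\Pr_\bfH[F_1 \cap F_2] > 0$. I would note this is harmless: if $\Pr_\bfH[F_1 \cap F_2] = 0$ then, since $F_1 \cap F_2 \subseteq F_1$ and every factor $p_{x,y}$ lies in $[0,1]$, we also get $\Pr_\bfH[F_1] = 0$ (and similarly $\Pr_\bfH[F_2]=0$), so both sides of the multiplicative identity are already $0$ and the degenerate case carries no information; in all uses of the lemma the relevant partial functions have positive probability. I would also remark that the rationality assumption on $\bfH$ plays no role in this argument.
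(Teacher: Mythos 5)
Your proof is correct and follows essentially the same route as the paper's: reduce to the product formula of Proposition~\ref{prop:ProbOfPartial} and then apply the inclusion-exclusion identity for finite products. The only difference is that you also address the degenerate case $\Pr_\bfH[F_1\cap F_2]=0$, which the paper leaves implicit; that extra care is welcome but not a different approach.
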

\begin{proof} Since $F_1$ and $F_2$ are consistent, we can think of $F = F_1 \cup F_2$ as a partial function. Therefore, by Proposition~\ref{prop:ProbOfPartial} and the  inclusion-exclusion principle we have:
\begin{align*}
\Pr_\bfH[F_1 \cup F_2] &=  \prod_{(x,y) \in F_1 \cup F_2} \Pr[\bfH(x) = y] \\
&= \frac{\prod_{(x,y) \in F_1 } \Pr[\bfH(x) = y] \cdot \prod_{(x,y) \in F_2 } \Pr[\bfH(x) = y]}{\prod_{(x,y) \in F_1 \cap F_2} \Pr[\bfH(x) = y]} \\
&=
\frac{ \Pr_\bfH[F_1] \cdot \Pr_\bfH[F_2] }{ \Pr_\bfH[F_1 \cap F_2]}.
\end{align*}
\end{proof}

\begin{lemma}[Lemma 6.4 in~\cite{ImpagliazzoRu89}] \label{lem:IR-Orig}
Let $E$ be any event defined over a random variable $\bfx$, and let $\bfx_1,\bfx_2,\dots$ be a sequence of random variables all determined by $\bfx$. Let $D$ be the event defined over $(\bfx_1,\dots)$ that holds if and only if there exists some $i \geq 1$ such that $\Pr[E \mid x_1,\dots,x_i] \geq \lambda$. Then $\Pr[E \mid D] \geq \lambda$.
\end{lemma}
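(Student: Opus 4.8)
The plan is to prove this by an averaging (or "union of level sets") argument. First I would observe that since each $\bfx_i$ is determined by $\bfx$, the event $D$ is itself determined by $\bfx$, so all the conditional probabilities below make sense. The key is to decompose the event $D$ according to the \emph{first} time the threshold is crossed. Concretely, for each $i \geq 1$ let $D_i$ be the event that $\Pr[E \mid \bfx_1,\dots,\bfx_i] \geq \lambda$ but $\Pr[E \mid \bfx_1,\dots,\bfx_j] < \lambda$ for all $j < i$. The events $D_1, D_2, \dots$ are pairwise disjoint and their union is exactly $D$. Since $D$ is a disjoint union of the $D_i$, it suffices to show $\Pr[E \mid D_i] \geq \lambda$ for every $i$ with $\Pr[D_i] > 0$, because a weighted average (with weights $\Pr[D_i]/\Pr[D]$) of quantities each at least $\lambda$ is itself at least $\lambda$.

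Next I would fix $i$ and analyze $\Pr[E \mid D_i]$. The crucial point is that $D_i$ is determined by $(\bfx_1,\dots,\bfx_i)$ alone — it depends only on the first $i$ of the derived random variables. Therefore I can write $D_i$ as a disjoint union over the atoms of $(\bfx_1,\dots,\bfx_i)$: letting $S_i$ be the set of tuples $(x_1,\dots,x_i)$ that force $D_i$ to occur, we have $\Pr[E \mid D_i]$ equal to a weighted average of the quantities $\Pr[E \mid \bfx_1 = x_1,\dots,\bfx_i = x_i]$ over $(x_1,\dots,x_i)\in S_i$, with weights proportional to $\Pr[\bfx_1=x_1,\dots,\bfx_i=x_i]$. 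By the definition of $D_i$, every term in this average satisfies $\Pr[E \mid x_1,\dots,x_i] \geq \lambda$, so the average is at least $\lambda$. (If $\bfx$ is not discrete one replaces the sum by the appropriate conditional expectation $\Pr[E \mid D_i] = \Exp[\,\Pr[E \mid \bfx_1,\dots,\bfx_i] \mid D_i\,] \geq \lambda$, using that $D_i \in \sigma(\bfx_1,\dots,\bfx_i)$; in the present paper everything is discrete so the sum suffices.)

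Combining the two steps, $\Pr[E \cap D] = \sum_i \Pr[E \cap D_i] = \sum_i \Pr[E \mid D_i]\Pr[D_i] \geq \lambda \sum_i \Pr[D_i] = \lambda \Pr[D]$, which gives $\Pr[E \mid D] \geq \lambda$ whenever $\Pr[D] > 0$. I do not expect any serious obstacle here; the only mild subtlety is making sure the "first crossing time" decomposition is handled cleanly (the sequence $\bfx_1,\bfx_2,\dots$ may be infinite, so $D$ is a countable disjoint union of the $D_i$, and one should note the manipulations above are just countable additivity). The conceptual heart — that the threshold event at stage $i$ is measurable with respect to the first $i$ variables, so conditioning on it cannot push the probability of $E$ below the threshold — is exactly the standard "stopping time" reasoning, and that is the step I would present most carefully.
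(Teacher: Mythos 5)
Your proof is correct. The paper does not actually supply a proof of this lemma---it is cited directly as Lemma~6.4 of Impagliazzo--Rudich---so there is no in-paper argument to compare against, but the route you take (decompose $D$ by the first index $i$ at which the threshold is crossed, observe that each such event $D_i$ is measurable with respect to $(\bfx_1,\dots,\bfx_i)$ and hence a union of atoms on each of which $\Pr[E \mid x_1,\dots,x_i] \geq \lambda$, then average) is exactly the standard stopping-time argument one would use, and it is the same reasoning Impagliazzo and Rudich employ. The one caveat you correctly flag is that $\Pr[E\mid D]$ is only meaningful when $\Pr[D]>0$, which you handle. No gaps.
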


\begin{lemma} \label{lem:IR}
Let $E$ be any event defined over a random variable $\bfx$, and let $\bfx_1,\bfx_2,\dots$ be a sequence of random variables all determined by $\bfx$. Suppose $\Pr[E] \leq \lambda$ and $\lambda = \lambda_1 \cdot \lambda_2$. Let $D$ be the event defined over $(\bfx_1,\dots)$ that holds if and only if there exists some $i \geq 1$ such that $\Pr[E \mid x_1,\dots,x_i] \geq \lambda_1$. Then it holds that $\Pr[D] \leq \lambda_2$.
\end{lemma}

\begin{proof}
  Lemma~\ref{lem:IR-Orig} shows that $\Pr[E \mid D] \geq \lambda_1$. Now we prove the contrapositive of Lemma \ref{lem:IR}. If  $\Pr[D ] > \lambda_2$, then we would get $\Pr[E] \geq \Pr[E \land D] \geq \Pr[D] \cdot \Pr[E \mid D] > \lambda_1 \cdot \lambda_2 = \lambda$.
\end{proof}

\subsection{Statistical Distance}

\begin{definition}[Statistical Distance]
By $\Delta(\bfx,\bfy)$ we denote the \emph{statistical distance} between random variables $\bfx,\bfy$ defined as $\Delta(\bfx,\bfy) = \frac{1}{2}\cdot \sum_z|\Pr[\bfx=z] - \Pr[\bfy=z]|$. We call random variables $\bfx$ and $ \bfy$ \emph{$\eps$-close}, denoted by $\bfx \approx_\eps \bfy$,  if $\Delta(\bfx,\bfy) \leq \eps$.
\end{definition}

 We use the following useful well-known lemmas about statistical distance.

\begin{lemma} \label{lem:SDEquivals}
$\Delta(\bfx,\bfy) = \eps$ if and only if either of the following holds:
\begin{enumerate}
  \item For every (even randomized) function $D$ it holds that $\Pr[D(\bfx) =1] - \Pr[D(\bfy)=1] \leq \eps$.
  \item For every event $E$ it holds that  $\Pr_\bfx[E] - \Pr_\bfy[E] \leq \eps$.
\end{enumerate}
Moreover, if $\Delta(\bfx,\bfy) = \eps$, then there is a deterministic (detecting) Boolean function $D$ that achieves $\Pr[D(\bfx) =1] - \Pr[D(\bfy)=1] = \eps$.
\end{lemma}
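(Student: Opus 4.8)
The plan is to exhibit a single optimal event $E^\ast$, observe that it is already a deterministic Boolean function, and thereby show that the supremum of the distinguishing advantage over all (even randomized) Boolean functions, the supremum over all events, and $\Delta(\bfx,\bfy)$ all coincide and are attained at $E^\ast$. Concretely, set $E^\ast = \set{z : \Pr[\bfx=z] > \Pr[\bfy=z]}$, the set of values on which $\bfx$ places strictly more mass than $\bfy$. First I would record the elementary identity: since $\sum_z \Pr[\bfx=z] = \sum_z \Pr[\bfy=z] = 1$, the total positive discrepancy $P = \sum_{z \in E^\ast}(\Pr[\bfx=z]-\Pr[\bfy=z])$ equals the total ``negative'' discrepancy $N = \sum_{z \notin E^\ast}(\Pr[\bfy=z]-\Pr[\bfx=z])$, and hence $\Delta(\bfx,\bfy) = \tfrac12(P+N) = P = \Pr_\bfx[E^\ast] - \Pr_\bfy[E^\ast]$.

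Next, for an arbitrary event $E$ I would write $\Pr_\bfx[E] - \Pr_\bfy[E] = \sum_{z \in E}(\Pr[\bfx=z]-\Pr[\bfy=z])$ and note that each summand with $z \in E^\ast$ is nonnegative while each summand with $z \notin E^\ast$ is nonpositive; therefore this sum is maximized by taking $E = E^\ast$, which gives $\Pr_\bfx[E]-\Pr_\bfy[E] \le \Delta(\bfx,\bfy)$ for every event $E$, with equality at $E = E^\ast$. For a (possibly randomized) Boolean function $D$, write $p_z = \Pr[D(z)=1] \in [0,1]$; then $\Pr[D(\bfx)=1]-\Pr[D(\bfy)=1] = \sum_z p_z(\Pr[\bfx=z]-\Pr[\bfy=z])$, which is at most $\sum_{z \in E^\ast} p_z(\Pr[\bfx=z]-\Pr[\bfy=z]) \le \sum_{z \in E^\ast}(\Pr[\bfx=z]-\Pr[\bfy=z]) = \Delta(\bfx,\bfy)$, using $p_z \ge 0$ to discard the nonpositive terms and $p_z \le 1$ on the remaining (nonnegative) terms. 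Equality again holds for the deterministic function $D = \mathbf{1}_{E^\ast}$, which is precisely the detecting function promised in the ``moreover'' clause.

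Putting these together yields the lemma: conditions (1) and (2) each assert that $\eps$ is an upper bound on the corresponding advantage over \emph{all} $D$ (resp.\ all $E$), and since the supremum of that advantage is exactly $\Delta(\bfx,\bfy)$ (attained at $\mathbf 1_{E^\ast}$), each condition is equivalent to $\Delta(\bfx,\bfy) \le \eps$; in particular both hold, tightly, when $\Delta(\bfx,\bfy) = \eps$, and plugging in $D = \mathbf 1_{E^\ast}$ or $E = E^\ast$ shows the converse. The deterministic function $\mathbf 1_{E^\ast}$ then witnesses the final claim, achieving $\Pr[D(\bfx)=1]-\Pr[D(\bfy)=1] = \Delta(\bfx,\bfy) = \eps$.

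I do not expect a genuine obstacle here; the lemma is standard. The only two points that merit a line of care are the reduction from randomized to deterministic $D$ — handled above by noting that $\sum_z p_z(\Pr[\bfx=z]-\Pr[\bfy=z])$ is a linear function of $(p_z)$ on the box $[0,1]^{\Supp(\bfx)\cup\Supp(\bfy)}$, hence maximized at a vertex, i.e.\ by a $0/1$-valued (deterministic) function — and the observation that every sum appearing above is absolutely convergent (bounded in absolute value by $1$), so the rearrangements are legitimate even when the support is countably infinite.
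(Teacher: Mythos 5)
Your proof is correct and is the standard one. Note that the paper states Lemma~\ref{lem:SDEquivals} without proof, as a ``useful well-known'' fact, so there is no internal proof to compare against; your argument via the optimal event $E^\ast = \set{z : \Pr[\bfx=z] > \Pr[\bfy=z]}$, the identity $P=N$, the monotonicity argument showing $E^\ast$ maximizes $\Pr_\bfx[E]-\Pr_\bfy[E]$, and the linearity-in-$(p_z)$ reduction from randomized to deterministic distinguishers is exactly the textbook route. One small but worthwhile observation you made implicitly: the lemma as literally phrased (``$\Delta(\bfx,\bfy)=\eps$ if and only if \dots'') is slightly sloppy, since conditions (1) and (2) are each equivalent to $\Delta(\bfx,\bfy) \le \eps$ rather than to equality; your conclusion correctly identifies the supremum as $\Delta(\bfx,\bfy)$ and that the ``only if'' direction holds with equality attained at $\mathbf{1}_{E^\ast}$, which is what the ``moreover'' clause and the paper's actual uses of the lemma require.
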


\begin{lemma} \label{lem:averageSD}
It holds that
$\Delta((\bfx,\bfz),(\bfy,\bfz)) = \Ex_{z \gets \bfz} \Delta((\bfx \mid z),(\bfy \mid z))$.
\end{lemma}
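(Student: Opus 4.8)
The plan is to simply unfold the definition of statistical distance for the joint distributions and regroup the sum according to the value of $\bfz$. First I would write
\[
\Delta\bigl((\bfx,\bfz),(\bfy,\bfz)\bigr) = \frac12 \sum_{w,z} \bigl| \Pr[\bfx=w,\bfz=z] - \Pr[\bfy=w,\bfz=z] \bigr|,
\]
where the sum is over all pairs $(w,z)$ in the (joint) support. The key observation is that in both joint distributions $(\bfx,\bfz)$ and $(\bfy,\bfz)$ the marginal on the second coordinate is the \emph{same} random variable $\bfz$, so for each $z$ with $\Pr[\bfz=z]>0$ we may factor $\Pr[\bfx=w,\bfz=z] = \Pr[\bfz=z]\cdot\Pr[\bfx=w\mid\bfz=z]$ and likewise for $\bfy$.

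Next I would pull $\Pr[\bfz=z]$ out of the inner absolute value (it is nonnegative) and split the double sum as a sum over $z$ of a sum over $w$:
\[
\Delta\bigl((\bfx,\bfz),(\bfy,\bfz)\bigr) = \sum_{z : \Pr[\bfz=z]>0} \Pr[\bfz=z] \cdot \frac12 \sum_{w} \bigl| \Pr[\bfx=w\mid z] - \Pr[\bfy=w\mid z] \bigr|.
\]
By definition the inner expression $\tfrac12\sum_w |\Pr[\bfx=w\mid z]-\Pr[\bfy=w\mid z]|$ is exactly $\Delta\bigl((\bfx\mid z),(\bfy\mid z)\bigr)$, so the right-hand side is $\sum_z \Pr[\bfz=z]\cdot \Delta\bigl((\bfx\mid z),(\bfy\mid z)\bigr) = \Ex_{z\gets\bfz}\Delta\bigl((\bfx\mid z),(\bfy\mid z)\bigr)$, which is the claimed identity.

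This argument is essentially a routine calculation, so there is no real obstacle; the only point deserving a word of care is the treatment of values $z$ with $\Pr[\bfz=z]=0$, for which $(\bfx\mid z)$ and $(\bfy\mid z)$ are undefined — but such $z$ contribute nothing to either side (the corresponding terms in the original double sum already vanish), so we may harmlessly restrict the sum to the support of $\bfz$, and the expectation $\Ex_{z\gets\bfz}$ is in any case only over that support. I would state this caveat in one sentence and otherwise present the three displayed equalities above as the complete proof.
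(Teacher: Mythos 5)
Your proof is correct and is the standard derivation of this identity; the paper itself does not include a proof, simply listing this as one of several ``useful well-known lemmas about statistical distance,'' so there is nothing to compare against. Your treatment of the zero-probability conditioning case is also the right caveat to note.
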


\begin{lemma} \label{lem:triangle}
  If $\Delta(\bfx,\bfy) \leq \eps_1$ and $\Delta(\bfy,\bfz) \leq \eps_2$, then $\Delta(\bfx,\bfz) \leq \eps_1 + \eps_2$.
\end{lemma}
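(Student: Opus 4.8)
The plan is to reduce the claim to the ordinary triangle inequality for absolute values, applied pointwise to the probability mass functions. First I would unfold the definition of statistical distance: $\Delta(\bfx,\bfz) = \frac{1}{2}\sum_z \abs{\Pr[\bfx=z]-\Pr[\bfz=z]}$, where the sum ranges over the (countable) union of the supports. For each fixed value $z$, I would insert the intermediate quantity $\Pr[\bfy=z]$ and apply $\abs{a-c}\le\abs{a-b}+\abs{b-c}$ with $a=\Pr[\bfx=z]$, $b=\Pr[\bfy=z]$, $c=\Pr[\bfz=z]$. Summing this inequality over all $z$ and multiplying by $\frac12$ yields $\Delta(\bfx,\bfz) \le \frac12\sum_z\abs{\Pr[\bfx=z]-\Pr[\bfy=z]} + \frac12\sum_z\abs{\Pr[\bfy=z]-\Pr[\bfz=z]} = \Delta(\bfx,\bfy)+\Delta(\bfy,\bfz)$, which is at most $\eps_1+\eps_2$ by the two hypotheses. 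Splitting the single sum into two sums is legitimate because every summand is nonnegative, so no rearrangement or convergence issue arises.

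An alternative route, if one prefers to stay at the level of events, is to invoke \lemmaref{lem:SDEquivals}: for an arbitrary event $E$ one writes $\Pr_\bfx[E]-\Pr_\bfz[E] = (\Pr_\bfx[E]-\Pr_\bfy[E]) + (\Pr_\bfy[E]-\Pr_\bfz[E])$, bounds the first difference by $\eps_1$ and the second by $\eps_2$ using the characterization in \lemmaref{lem:SDEquivals}, and concludes that the difference of probabilities under any event is at most $\eps_1+\eps_2$; the same lemma then converts this back into the bound $\Delta(\bfx,\bfz)\le\eps_1+\eps_2$. I would present the first, pointwise argument as the actual proof since it is short, self-contained, and does not depend on the equivalence lemma.

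There is essentially no real obstacle here — this is the standard fact that the (scaled) $\ell_1$ distance between distributions is a metric. The only point worth a moment's care is that the defining sums run over a countable index set with nonnegative terms, which makes the termwise comparison and the splitting of the sum valid without any subtlety; I would note this in one sentence and otherwise keep the proof to two or three lines.
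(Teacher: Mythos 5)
Your proof is correct. The paper does not actually supply a proof of this lemma: it lists Lemmas~\ref{lem:SDEquivals}--\ref{lem:project} as ``useful well-known lemmas about statistical distance'' and leaves them unproved, so there is no paper argument to compare against. Your pointwise argument (insert $\Pr[\bfy=z]$, apply the scalar triangle inequality, sum, split by nonnegativity) is the standard and most elementary proof, and the remarks about countable support and nonnegativity of the summands are exactly the (minor) care points one would flag. The alternative via Lemma~\ref{lem:SDEquivals} is also valid but is a detour: it proves the bound for every event and then converts back, which costs an invocation of the characterization lemma for no gain. Presenting the pointwise version as the primary proof is the right call.
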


\begin{lemma} \label{lem:project}
 $\Delta((\bfx_1,\bfx_2),(\bfy_1,\bfy_2)) \geq \Delta(\bfx_1,\bfy_1)$.
\end{lemma}

We use the convention for the notation $\Delta(\cdot,\cdot)$ that whenever $\Pr[\bfx \in E]=0$ for some event $E$, we let $\Delta((\bfx \mid E),\bfy)=1$ for every random variable $\bfy$.
%
%
%
%

\begin{lemma} \label{lem:nestedDistance}
Suppose $\bfx,\bfy$ are finite random variables, and suppose $G$ is some event defined over $\Supp(\bfx)$. Then $\Delta(\bfx,\bfy) \leq \Pr_\bfx[G] + \Delta((\bfx \mid \neg G),\bfy)$.
\end{lemma}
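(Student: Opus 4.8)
The plan is to express $\bfx$ as a convex combination of its two conditional distributions and then exploit the fact that statistical distance is convex (more precisely, subadditive) under mixtures, combined with the trivial bound $\Delta(\cdot,\cdot)\le 1$. Set $p = \Pr_\bfx[G]$. First I would dispose of the degenerate cases. If $p=0$ then $(\bfx\mid\neg G)\equiv\bfx$ and the inequality is an equality of the form $\Delta(\bfx,\bfy)\le\Delta(\bfx,\bfy)$, so there is nothing to prove. If $p=1$ then $\Pr_\bfx[\neg G]=0$, so by the convention stated above for $\Delta$ we have $\Delta((\bfx\mid\neg G),\bfy)=1$, and the claimed bound reads $\Delta(\bfx,\bfy)\le 1+1$, which holds since statistical distance is always at most $1$. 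Hence I may assume $0<p<1$, so that both $(\bfx\mid G)$ and $(\bfx\mid\neg G)$ are well-defined and, for every $z$, $\Pr[\bfx=z] = p\cdot\Pr[(\bfx\mid G)=z] + (1-p)\cdot\Pr[(\bfx\mid\neg G)=z]$.

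Now, writing $\Pr[\bfy=z] = p\cdot\Pr[\bfy=z] + (1-p)\cdot\Pr[\bfy=z]$ and applying the triangle inequality for absolute values pointwise, for each $z$ we get
\[
\bigl|\Pr[\bfx=z]-\Pr[\bfy=z]\bigr| \le p\bigl|\Pr[(\bfx\mid G)=z]-\Pr[\bfy=z]\bigr| + (1-p)\bigl|\Pr[(\bfx\mid\neg G)=z]-\Pr[\bfy=z]\bigr|.
\]
Summing over all $z$ and multiplying by $1/2$ yields
\[
\Delta(\bfx,\bfy) \le p\cdot\Delta\bigl((\bfx\mid G),\bfy\bigr) + (1-p)\cdot\Delta\bigl((\bfx\mid\neg G),\bfy\bigr).
\]
Finally I bound $\Delta((\bfx\mid G),\bfy)\le 1$ and $1-p\le 1$ to conclude $\Delta(\bfx,\bfy)\le p + \Delta((\bfx\mid\neg G),\bfy) = \Pr_\bfx[G] + \Delta((\bfx\mid\neg G),\bfy)$, as desired.

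I do not expect any genuine obstacle; the only care required is the bookkeeping around the degenerate case $\Pr_\bfx[\neg G]=0$ (where the conditional distribution is undefined and the stated convention takes over) and checking that the mixture decomposition is legitimate, i.e. that $0<p<1$ in the main case. As an alternative one could argue directly through the event characterization of Lemma~\ref{lem:SDEquivals}: for any event $E$, $\Pr_\bfx[E]\le\Pr_\bfx[G] + \Pr_\bfx[\neg G]\cdot\Pr_{(\bfx\mid\neg G)}[E]\le\Pr_\bfx[G] + \Pr_{(\bfx\mid\neg G)}[E]$, hence $\Pr_\bfx[E]-\Pr_\bfy[E]\le\Pr_\bfx[G] + \bigl(\Pr_{(\bfx\mid\neg G)}[E]-\Pr_\bfy[E]\bigr)\le\Pr_\bfx[G]+\Delta((\bfx\mid\neg G),\bfy)$, and taking the supremum over events $E$ gives the claim; but the mixture argument above is the cleaner route.
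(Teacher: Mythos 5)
Your proof is correct and reaches the same pivotal inequality $\Delta(\bfx,\bfy) \le p\cdot\Delta((\bfx\mid G),\bfy) + (1-p)\cdot\Delta((\bfx\mid\neg G),\bfy)$ that underlies the paper's argument, but you get there by a slightly more elementary route. The paper introduces the indicator random variable $\bfg$ for membership in $G$, couples it independently with $\bfy$, and then chains Lemma~\ref{lem:project} (projection can only decrease $\Delta$) with Lemma~\ref{lem:averageSD} (conditioning decomposes $\Delta$ as an average) to obtain the convex decomposition. You instead establish the same convexity bound directly from the definition: write $\bfx$ as a mixture of $(\bfx\mid G)$ and $(\bfx\mid\neg G)$, write $\bfy$ as a trivial mixture with the same weights, and apply the pointwise triangle inequality before summing. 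Both are valid; the paper's route reuses its auxiliary statistical-distance lemmas, while yours is self-contained and makes the convexity of $\Delta$ under mixtures fully explicit. One small thing you do more carefully than the paper is the explicit treatment of the degenerate case $\Pr_\bfx[G]=1$, invoking the stated convention $\Delta((\bfx\mid\neg G),\bfy)=1$; the paper's proof also survives this case (the $\Pr[\bfg=0]\cdot\Delta((\bfx\mid\bfg=0),\bfy)$ term is $0\cdot 1 = 0$) but does not call it out.
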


\begin{proof}
Let $\delta = \Delta(\bfx,\bfy)$. Let $\bfg$ be a Boolean random variable jointly distributed with $\bfx$  as follows: $\bfg=1$ if and only if $\bfx \in G$. Suppose $\bfy$ is sampled independently of  $(\bfx,\bfg)$ (and so $(\bfy,\bfg) \equiv (\bfy \times \bfg)$). By Lemmas~\ref{lem:project} and~\ref{lem:averageSD} we have:
\begin{align*}
\Delta(\bfx,\bfy)
&\leq \Delta((\bfx,\bfg),(\bfy,\bfg)) \\
&= \Ex_{g \gets \bfg} \Delta((\bfx\mid g),(\bfy \mid g)) \\
&= \Ex_{g \gets \bfg} \Delta((\bfx\mid g),\bfy) \\
&= \Pr[\bfg=1] \cdot \Delta((\bfx\mid \bfg=1),\bfy) + \Pr[\bfg=0] \cdot \Delta((\bfx\mid \bfg=0),\bfy) \\
&\leq \Pr[\bfg=1] + \Delta((\bfx\mid \bfg=0),\bfy) \\
&= \Pr_\bfx[G] + \Delta((\bfx \mid \neg G),\bfy).
\end{align*}
\end{proof}

\remove
{
\begin{lemma} \label{lem:normalizeMeasure}
  Suppose $\bfx$ and $\bfy$ are random variables such that:
  \begin{enumerate}
    \item $\Supp(\bfy) \se \Supp(\bfx)$.
    \item There exists $p \colon \Supp(\bfy) \To \R$ and $c > 0$ such that $\Pr[\bfy=y] = c \cdot p(y)$ for all $ y \in \Supp(\bfy)$.
    \item For every $x \in \Supp(\bfy)$, it holds that $\Pr[\bfx=x] \leq p(x) \leq \Pr[\bfx=x] \cdot(1+ \gamma)$.
  \end{enumerate}
  Then $\Delta(\bfx,\bfy) \leq \gamma$.
\end{lemma}

\begin{proof}
  Since $\Supp(\bfy) \se \Supp(\bfx)$, it is sufficient to show that  $\Pr[\bfy=x] \leq \Pr[\bfx=x] \cdot (1+\gamma)$ for all $x$. This is true because $\bfy$ is a normalized version of the measure $p(\cdot)$, and $\Pr[\bfx=x] \leq p(x)$ for all $x$. Therefore, the probabilities of $\bfy$ would only decrease (compared to measure $p(\cdot)$) after normalization, and $p(x) \leq \Pr[\bfx=x] \cdot(1+ \gamma)$ already holds for all $x$.
\end{proof}
}

\begin{definition}[Key Agreement]
  A key agreement protocol consists of two interactive polynomial-time probabilistic Turing machines $(A,B)$ that both get $1^n$ as security parameter, each get secret randomness $\bfr_A,\bfr_B$, and after interacting for $\poly(n)$ rounds $A$ outputs $s_A$ and $B$ outputs $s_B$. We say a key agreement scheme $(A,B)$ has completeness $\rho$ if $\Pr[s_A=s_B] \geq \rho(n)$. For an arbitrary oracle $O$, we  define key agreement protocols (and their completeness) relative to $O$  by simply allowing $A$ and $B$ to be efficient algorithms relative to $O$.
\end{definition}

\paragraph{Security of Key Agreement Protocols.} It can be easily seen that no key agreement protocol with completeness $\rho > 0.9$ could be \emph{statistically} secure, and that there is always a computationally unbounded eavesdropper Eve who can guess the shared secret key $s_A = s_B$ with probability at least $1/2 + \negl(n)$.
In this work we are interested in  statistical security of key agreement protocols in  the \emph{random oracle model}. Namely, we would like to know how many oracle queries are required to break a key agreement protocol relative to a random oracle. 
\section{Proving the Main Theorem} \label{sec:desc}

In this section we prove the next theorem which implies our Theorem~\ref{thm:main} as special case.

\begin{theorem} \label{thm:mainFormal}
Let $\Pi$ be a two-party interactive  protocol between Alice and Bob using a random oracle $\bfH$ (accessible by everyone) such that:

 \begin{itemize}
   \item Alice uses local randomness $r_A$, makes at most $n_A$ queries to $H$ and at the end outputs $s_A$.
   \item Bob uses local randomness $r_B$, makes at most $n_B$ queries to $H$ and at the end outputs $s_B$.
   \item $\Pr[s_A=s_B] \geq \rho$ where the probability is over the choice of $(r_A,r_B,H) \gets (\bfr_A,\bfr_B,\bfH)$.
 \end{itemize}
 Then, for every $0 < \delta < \rho$, there is a \emph{deterministic} eavesdropping adversary Eve who only gets access to the public sequence of messages $M$ sent between Alice and Bob, makes at most $400 \cdot n_A \cdot n_B/\delta^2$ queries to the oracle $H$ and outputs $s_E$ such that $\Pr[s_E = s_B] \geq \rho-\delta$.
\end{theorem}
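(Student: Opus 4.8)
The plan is to realize Eve as an eavesdropper who "learns all frequent oracle queries." Thinking of Eve as running concurrently with Alice and Bob, after each message she maintains a distribution $\cD$ over \emph{pairs} $(u,v)$, where $u$ ranges over executions of Alice (randomness plus query–answers) consistent with Eve's current view $w$, $v$ ranges over executions of Bob consistent with $w$, and moreover $(u,v)$ has no intersection query outside Eve's knowledge: writing $S(u)$ (resp.\ $S(v)$) for the set of oracle queries made in $u$ (resp.\ $v$) that Eve has not asked, we keep only pairs with $S(u)\cap S(v)=\es$. Applying \lemmaref{lem:IncExc} to the partial functions recording Alice's, Bob's and Eve's query–answers, the probability that the true execution equals such a pair $(u,v)$, conditioned on $w$, is proportional to $g(u)\cdot h(v)$ where $g(u)$ depends only on Alice's randomness and her query answers and $h(v)$ only on Bob's; hence $\cD$ is a $g\cdot h$–weighted uniform edge of the bipartite \emph{disjointness graph} $G=G_w$ whose parts are indexed by these $u$'s and $v$'s with an edge iff $S(u)\cap S(v)=\es$, and $|S(u)|\le n_A$, $|S(v)|\le n_B$. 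Eve then repeatedly picks a query $q$ that is \emph{heavy} — having $\Pr_{\bfu\gets\nu_A}[q\in S(\bfu)]\ge\eps_A$ or $\Pr_{\bfv\gets\nu_B}[q\in S(\bfv)]\ge\eps_B$, where $\nu_A,\nu_B$ are the normalizations of $g,h$ (which, using \propref{prop:ProbOfPartial}, Eve can compute) — asks it, updates $w$, $G$, $\cD$, and repeats until no heavy query remains, subject to a hard budget of $400 n_A n_B/\delta^2$ queries. At the very end she outputs $s_E=\argmax_s \Pr_{(\bfu,\bfv)\gets\cD}[s_B(\bfv)=s]$. The thresholds will be $\eps_A=\Theta(\delta^2/n_B)$ and $\eps_B=\Theta(\delta^2/n_A)$.

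\textbf{The product-closeness lemma (the main obstacle).} The crux is to show that once no heavy query survives, $\cD$ is $\gamma$-close to the product $\nu_A\times\nu_B$ with $\gamma=O(\min(n_A\eps_B,\,n_B\eps_A))=O(\delta^2)$. This is pure combinatorics on the disjointness graph: since $\max_q\Pr_{\nu_B}[q\in S(\bfv)]<\eps_B$, a union bound over the $\le n_A$ elements of $S(u)$ shows every left vertex $u$ (in the support of $\nu_A$) is non-adjacent to a $\nu_B$-random right vertex with probability $\le n_A\eps_B$, and symmetrically every right vertex is joined to all but $\le n_B\eps_A$ of the $\nu_A$-mass. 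Consequently the normalizing constant $Z=\sum_{(u,v)\in E(G)}g(u)h(v)$ is within a $(1\pm O(\gamma))$ factor of $(\sum_u g(u))(\sum_v h(v))$, so on edges $\cD(u,v)=g(u)h(v)/Z$ and $(\nu_A\times\nu_B)(u,v)$ agree up to a $(1\pm O(\gamma))$ factor, while off edges $\cD$ vanishes and $\nu_A\times\nu_B$ carries total mass $\Pr_{\bfu\gets\nu_A,\bfv\gets\nu_B}[S(\bfu)\cap S(\bfv)\neq\es]\le n_B\eps_A$ — again by the same union bound. Adding these gives $\Delta(\cD,\nu_A\times\nu_B)\le\gamma$; by \lemmaref{lem:project} the marginals of $\cD$ are each $\gamma$-close to $\nu_A,\nu_B$. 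The analogous statement holds for the intermediate graphs $G_w$ after each round, with $|S(\cdot)|$ bounded by the number of queries made so far.

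\textbf{Efficiency and ``Eve asks all intersection queries''.} Each heavy query Eve asks is distinct. For an $A$-heavy query, the probability — conditioned on Eve's view at the time it is asked — that it is a genuine query of Alice is at least $\eps_A$ (up to a $(1-\gamma)$ factor, since the true conditional marginal of Alice's execution is $\gamma$-close to $\nu_A$), and Alice makes $\le n_A$ queries in total; a Wald-type summation in the spirit of \lemmaref{lem:IR} bounds the expected number of $A$-heavy queries by $O(n_A/\eps_A)=O(n_A n_B/\delta)$, and likewise for $B$-heavy queries, so $\Ex[\#\text{queries}]=O(n_A n_B/\delta)$ and by Markov the budget is exceeded with probability $\le\delta/4$ (tuning constants). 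Call it a \emph{failure} if some party's new query was already asked by the other party but not yet by Eve; absent failures, Eve ends up having asked every intersection query. By the product-closeness applied to the current graph $G_w$, just before any of Alice's query steps the probability that her next query $q$ collides with one of Bob's earlier non-Eve queries is at most $\sum_q\Pr_{\nu_A}[\text{Alice's next query}=q]\cdot\Pr_{\nu_B}[q\in S(\bfv)]+\gamma\le\max_q\Pr_{\nu_B}[q\in S(\bfv)]+\gamma\le\eps_B+\gamma$ (no $B$-heavy query survives); summing over Alice's $\le n_A$ steps, the symmetric bound over Bob's $\le n_B$ steps, and the budget-overflow event — with the ``first failure'' bookkeeping done via \lemmaref{lem:IR} taking the sequence of random variables to be Eve's evolving view — bounds $\Pr[\text{Eve fails}]$ by $O(n_A\eps_B+n_B\eps_A)+\delta/4\le\delta/2$.

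\textbf{Putting it together.} Let $\Good$ be the event that Eve neither overflows her budget nor fails; then $\Pr[\neg\Good]\le\delta/2$. Conditioned on $\Good$ and on Eve's final view $w$, the real pair $(s_A,s_B)$ is distributed as $(s_A(\bfu),s_B(\bfv))$ for $(\bfu,\bfv)\gets\cD$ (since "no failure" means the real execution is an edge of the final $G$). Because $\Delta(\cD,\nu_A\times\nu_B)\le\gamma$ and $s_B(\bfv)$ depends only on $\bfv$, the agreement event on $(\bfu,\bfv)$ satisfies $\Pr_{\cD}[s_A(\bfu)=s_B(\bfv)]\le\sum_s\Pr_{\nu_A}[s_A(\bfu)=s]\Pr_{\nu_B}[s_B(\bfv)=s]+\gamma\le\max_s\Pr_{\nu_B}[s_B(\bfv)=s]+\gamma\le\Pr[s_E=s_B\mid\Good,w]+2\gamma$. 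Averaging over views inside $\Good$, and using $\Pr[s_A=s_B\mid\Good]\ge\rho-\Pr[\neg\Good]$, we get $\Pr[s_E=s_B]\ge\Pr[\Good]\cdot\bigl(\rho-\Pr[\neg\Good]-2\gamma\bigr)\ge\rho-\Pr[\neg\Good]-2\gamma\ge\rho-\delta$ once the hidden constants are chosen so that $\Pr[\neg\Good]\le\delta/4$ and $\gamma\le\delta/8$. The expected query count is $O(n_A n_B/\delta)\le 400 n_A n_B/\delta^2$, and derandomizing Eve's (trivial) internal coins if any is immediate since the whole procedure is deterministic given the transcript. The main difficulty throughout is the product-closeness lemma; everything else is the statistical-distance calculus of Section~\ref{sec:prelims} together with careful bookkeeping of the conditioning on ``no earlier failure.''
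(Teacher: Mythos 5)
Your attack is broadly in the same spirit as the paper's, but your definition of ``heavy'' is genuinely different and makes the density step simpler. The paper's Eve measures heaviness with respect to the \emph{conditional edge distribution} $\GViews(M,P)$; proving the graph is dense then requires the purely combinatorial Lemma~\ref{lem:highDeg} (turning the condition ``no vertex misses more than an $\eps$-fraction of the \emph{edges}'' into a degree lower bound). You instead define heaviness with respect to the normalized vertex weights $\nu_A,\nu_B$ (the paper's distributions $\bfA,\bfB$ from Lemma~\ref{lem:product}), which Eve can also compute, and then a one-line union bound over $|S(u)|\le n_A$ immediately shows every vertex is adjacent to a $(1-n_A\eps_B)$ fraction of the $\nu_B$-mass. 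That shortcut replaces the graph-density lemma with elementary arithmetic and is a genuinely cleaner route to the product characterization.

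Two steps are not yet correct as written. First, the efficiency argument asserts that ``the true conditional marginal of Alice's execution is $\gamma$-close to $\nu_A$.'' That marginal is the marginal of $\Views(M,P)$, not of $\GViews(M,P)$; your product-closeness lemma controls the latter, and the two differ by exactly $\Pr[\neg\Good\mid M,P]$, which is \emph{not} always small during the attack. Without a further device, the hypothesis of Lemma~\ref{lem:IR} (``whenever Eve asks $q$, the \emph{true} conditional probability that $q\in Q_A$ is at least $\eps_A'$'') is unverified. The paper handles this by coloring queries red/green (Definition~\ref{def:color}): a query asked while $\Pr[\Good\mid M,P]<1/2$ is red, the probability that any red query ever appears is bounded via Lemma~\ref{lem:IR} and the success bound (Claim~\ref{clm:notRed}), and only green queries are charged to Lemma~\ref{lem:IR}. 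You need the analogous truncation; your framework supports it but you omit it.

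Second, the per-step failure bound ``$\le\eps_B+\gamma$'' is too lossy to yield your stated total. With $\eps_B=\Theta(\delta^2/n_A)$ and $\gamma=O(\min(n_A\eps_B,n_B\eps_A))=O(\delta^2)$, we have $\gamma\gg\eps_B$, so summing $\eps_B+\gamma$ over Alice's $n_A$ steps gives $\Theta(n_A\delta^2)$, not $O(\delta^2)$ --- your claimed total $O(n_A\eps_B+n_B\eps_A)+\delta/4$ does not follow. The fix is to use the \emph{pointwise} comparison you implicitly proved: on edges $\cD(u,v)\le \nu_A(u)\nu_B(v)/(1-\gamma)$ and off edges $\cD$ vanishes, so $\Pr_\cD[\,q(u)\in S(v)\,]\le \eps_B/(1-\gamma)=O(\eps_B)$, with no additive $\gamma$. (This is exactly the estimate the paper makes inside the proof of Lemma~\ref{lem:success2}.) With that correction, and with the coloring argument added for efficiency, the rest of your outline --- the union bound over the disjointness graph, the deterministic $\argmax$ rounding (which dominates the paper's randomized sampling), and the bookkeeping via the statistical-distance lemmas of Section~\ref{sec:prelims} --- goes through.
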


\subsection{Notation and Definitions} \label{sec:Notation}
In this subsection we give some definitions and notations to be used in the proof of Theorem~\ref{thm:mainFormal}.
\Wolog we assume that Alice, Bob, and Eve will never ask an oracle query twice. Recall that Alice (\resp Bob) asks at most $n_A$ (\resp $n_B$) oracle queries.

\paragraph{Rounds.} Alice sends her messages in odd rounds and Bob sends his messages in even rounds. Suppose  $i=2j-1$ and it is Alice's turn to send the message $m_i$. This round starts by Alice asking her oracle queries and computing $m_i$, then Alice sends  $m_i$ to Bob, and this round ends by Eve asking her (new) oracle queries based on the messages sent so far $M^i=[m_1,\dots,m_i]$. Same holds for $i=2j$ by changing the role of Alice and Bob.

\paragraph{Queries and Views.} By $Q^i_A$ we denote the set of oracle queries asked by Alice by the end of round $i$. By $P^i_A$ we denote the set of oracle query-answer pairs known to Alice by the end of round $i$ (\ie $P^i_A = \set{(q,H(q)) \mid q \in Q^i_A}$).
By $V^i_A$ we denote the view of Alice by the end of round $i$. This view consists of: Alice's randomness $r_A$,   exchanged messages $M^i$ as well as oracle query-answer pairs $P^i_A$ known to Alice  so far. By $Q^i_B,P^i_B,V^i_B$ (\resp $Q^i_E,P^i_E,V^i_E$) we denote the same variables defined for Bob (\resp Eve). Note that $V^i_E$ only consists of $(M^i,P^i_E)$ since Eve does not use any randomness. We also use $\Q(\cdot)$ as an operator that extracts the set of queries from set of query-answer pairs or views; namely, $\Q(P)=\set{q \mid \exists~a, (q,a) \in P}$ and $\Q(V)=\set{q \mid \text{the query } q \text{ is asked in the view }V}$.

\begin{definition}[Heavy Queries]
For a random variable $\bfV$ whose samples $V \gets \bfV$ are sets of queries, sets of query-answer pairs, or views, we say a query $q$ is \emph{$\e$-heavy} for $\bfV$ if and only if $\Pr[q \in \Q(\bfV)] \geq \eps$.
\end{definition}


 \paragraph{Executions and Distributions} A (full) \emph{execution} of Alice, Bob, and Eve can be
described by a  tuple $(r_A,r_B,H)$ where $r_A$ denotes Alice's random tape, $r_B$ denotes Bob's random tape, and $H$
is the random oracle (note that Eve is deterministic). We denote by $\EXEC$ the distribution over (full) executions
that is obtained by running the algorithms for Alice, Bob and Eve with uniformly chosen random tapes $r_A,r_B$ and a uniformly sampled random
oracle $H$. By $\Pr_\EXEC[P^i_A]$ we denote the probability that a full execution of the system leads to $\bfP^i_A=P^i_A$ for a given $P^i_A$. We use the same notation also for other components of the system (by treating their occupance as events) as well.

For a sequence of $i$ messages $M^i=[m_1,\ldots,m_i]$ exchanged between the two parties and  a set of query-answer pairs (\ie a partial function) $P$, by $\Views(M^i,P)$ we denote the joint distribution over the views $(V^i_A,V^i_B)$ of Alice and Bob in their own
(partial) executions up to the  point in the system in which the $i$'th  message is sent (by Alice or Bob) conditioned on: the
transcript of messages in the first $i$ rounds being equal to $M^i$ and $H(q)=a$ for all $(q,a) \in P$. Looking ahead in the proof, the distribution $\Views(M^i,P)$ would be the conditional distribution of Alice's and Bob's views in eyes of the attacker Eve who knows the public messages and has learned oracle query-answer pairs described in $P$.
For $(M^i,P)$ such that $\Pr_\EXEC[M^i,P] >0$, the distribution $\Views(M^i,P)$ can be sampled by first
sampling $(r_A,r_B, H)$ uniformly at random conditioned on being consistent with $(M^i,P)$ and then deriving
Alice's and Bob's views $V^i_A,V^i_B$ from the sampled  $(r_A,r_B, H)$.

For $(M^i,P)$ such that $\Pr_\EXEC[M^i,P] >0$, the event $\Good(M^i,P)$ is defined
over the distribution $ \Views(M^i,P)$ and holds if and only if $Q^i_A \cap Q^i_B \se \Q(P)$ for $Q^i_A, Q^i_B, \Q(P)$ determined by the sampled views $(V^i_A,V^i_B) \gets \Views(M^i,P)$ and $P$.  For  $\Pr_\EXEC[M^i,P] >0$
we define the distribution $\GViews(M^i,P)$ to be the distribution $\Views(M^i,P)$ conditioned on $\Good(M^i,P)$. Looking ahead to the proof the event $\Good(M^i,P)$ indicates that the attacker Eve has not ``missed'' any query that is asked by both of Alice and Bob (i.e. an intersection query) so far, and thus $\GViews(M^i,P)$  refer to the same distribution of $\Views(M^i,P)$ with the extra condition that so far no  intersection query is missed by Eve.

\subsection{Attacker's Algorithm} \label{sec:attackAlg}
In this subsection we describe an attacker Eve who might ask  $\omega(n_A n_B /\delta^2)$ queries, but she finds the key in the two-party key agreement protocol between Alice and Bob with probability $1-O(\delta)$. Then we show how to make Eve ``efficient'' without decreasing the success probability too much.

\paragraph{Protocols in Seminormal Form.} We say a protocol is in \emph{seminormal form}\footnote{We use the term seminormal to distinguish it from the normal form protocols defined in \cite{ImpagliazzoRu89}.} if \num{1} the number of oracle queries asked by Alice or Bob in each round is at most one, and \num{2} when the last message is sent (by Alice or Bob) the other party does not ask any oracle queries and computes its output without using the last message. The second property could be obtained  by simply adding an extra message $\LAST$ at the end of the protocol. (Note that our results do not depend on the number of rounds.)  One can also always achieve the first property  \emph{without} compromising the security as follows. If the protocol has $2\cdot\ell$ rounds, we will increase the number of rounds to $2 \ell \cdot (n_A+n_B-1)$ as follows. Suppose it is Alice's turn to send $m_i$ and before doing so she needs to ask the queries $q_1,\dots,q_k$ (perhaps adaptively) from the oracle. Instead of asking these queries from $H(\cdot)$ and sending $m_i$ in one round, Alice and Bob will run $2n_A-1$ \emph{sub-rounds} of interaction so that Alice will have enough number of (fake) rounds to ask her queries from $H(\cdot)$ one by one. More formally:
\begin{enumerate}
  \item The messages of the first $2n_A-1$ sub-rounds for an odd round $i$ will all be equal to $\bot$. Alice sends the first $\bot$ message, and the last message will be $m_i$ sent by Alice.
  \item For $j \leq k$, before sending the message of the $2j-1$'th sub-round Alice asks $q_j$ from the oracle. The number of these queries, namely $k$, might not be known to Alice at the beginning of round $i$, but since $k \leq n_A$, the number of sub-rounds are enough to let Alice  ask all of her queries $q_1,\dots,q_k$ without asking more than one query in each  sub-round.
\end{enumerate}

If a protocol is in semi-normal form, then in each round there is at most one query asked by the party who sends the message of that round, and we will use this condition in our analysis.
Moreover, Eve can simply \emph{pretend} that \emph{any} protocol is in seminormal form by \emph{imagining} in her head that the extra $\bot$ messages are being sent between every two real message. Therefore, \wolog in the following we will assume that the two-party protocol $\Pi$ has $\ell$ rounds and is in seminormal form.\footnote{Impagliazzo and Rudich~\cite{ImpagliazzoRu89} use the term \emph{normal form} for protocols in which each party  asks \emph{exactly one} query before sending their messages in every round.} Finally note that we cannot simply ``expand'' a round $i$ in which Alice asks $k_i$ queries into $2k$ messages between Alice and Bob, because then Bob would know how many queries were asked by Alice, but if we do the transformation as described above, then the actual number of queries asked for that round could potentially remain secret.

\begin{construction} \label{const:Eve} Let $\eps<1/10$ be an input parameter. The adversary Eve attacks  the $\ell$-round two-party protocol $\Pi$ between Alice and Bob (which is in seminormal form) as follows. During the attack Eve updates a set $P_E$ of oracle  query-answer pairs as follows. Suppose in round $i$ Alice or Bob sends the message $m_i$. After $m_i$ is sent,  if $\Pr_\EXEC[\Good(M^i,P_E)]=0$ holds at any moment, then Eve aborts. Otherwise, as long as there is any query $q \nin \Q(P_E)$ such that
$$\Pr_{(V^i_A,V^i_B) \gets \GViews(M^i,P_E)}[q \in \Q(V^i_A)] \geq \frac{\eps}{n_B} \text{~~~~or~~~~} \Pr_{(V^i_A,V^i_B) \gets \GViews(M^i,P_E)}[q \in \Q(V^i_B)] \geq \frac{\eps}{n_A}$$
(\ie $q$ is  $(\eps/n_B)$-heavy  for Alice or $(\eps/n_A)$-heavy  for Bob with respect to the distribution $\GViews(M^i,P_E)$)
Eve asks the lexicographically first such $q$ from $H(\cdot)$, and adds $(q,H(q))$ to $P_E$.  At the end of round $\ell$ (when Eve is also done with asking her oracle queries), Eve samples $(V'_A,V'_B) \gets \GViews(M^\ell,P^\ell_E)$ and outputs Alice's output $s'_A$ determined by $V'_A$ as its own output $s_E$.
\end{construction}

Theorem~\ref{thm:mainFormal} directly follows from the next two lemmas.

\begin{lemma}[Eve Finds the Key] \label{lem:EveFinds}
  The output $s_E$ of  Eve of Construction~\ref{const:Eve} agrees with $s_B$ with probability at least $\rho-10\eps$ over the choice of $(r_A,r_B,H)$.
\end{lemma}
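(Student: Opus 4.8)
The plan is to show that if Eve never aborts and the event $\Good(M^\ell, P^\ell_E)$ holds at the end, then sampling $(V'_A,V'_B)$ from the near-product distribution $\GViews(M^\ell,P^\ell_E)$ recovers a pair whose outputs agree with probability close to $\rho$, and that Eve rarely aborts. The key conceptual point is that, conditioned on Eve's view at the end, the true pair $(V^\ell_A,V^\ell_B)$ of Alice and Bob is distributed exactly according to $\Views(M^\ell,P^\ell_E)$, and moreover—since the true pair has no intersection query outside $\Q(P^\ell_E)$ precisely when $\Good$ holds—conditioned on $\Good$ it is distributed according to $\GViews(M^\ell,P^\ell_E)$. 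So Eve's resampled $(V'_A,V'_B)$ has the \emph{same} joint distribution as the real pair, given her knowledge and given $\Good$. Hence $\Pr[s'_A = s'_B] = \Pr[s_A = s_B \mid \text{Eve's view}, \Good]$ on that event. If additionally the conditional distribution of $(V^\ell_A,V^\ell_B)$ is close to a product distribution $(\bfV^\ell_A \mid \cdot)\times(\bfV^\ell_B \mid \cdot)$, then whether the outputs match is essentially determined by each party's marginal, so we could just as well compare $s'_A$ against $s_B$; this is why outputting $s'_A$ suffices to agree with Bob. I would make this precise using Lemma~\ref{lem:SDEquivals}: the event ``$s'_A = s_B$'' is an event on the joint distribution, and its probability changes by at most the statistical distance to the product distribution when we swap the real $V^\ell_B$ for the resampled $V'_B$.

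Concretely, I would proceed as follows. First, define $\Fail$ to be the event (over a full execution $(r_A,r_B,H)$ together with Eve's deterministic run) that Eve aborts at some round, i.e.\ $\Pr_\EXEC[\Good(M^i,P_E)]=0$ occurs for some $i$; equivalently, by the definition of $\Good$, there is an intersection query $q \in Q^i_A \cap Q^i_B$ with $q \notin \Q(P^i_E)$ at that round. The heart of the argument—proved separately, and which I expect to be the main obstacle—is the bound $\Pr_\EXEC[\Fail] \le O(\eps)$. This is the analogue of Claim~\ref{clm:IR}: one shows that at each query made by Alice or Bob, the conditional probability (given Eve's current knowledge and that no failure has occurred yet) that this query lands in the other party's query set but not in $\Q(P_E)$ is at most something like $\eps/n_A$ or $\eps/n_B$. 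This in turn rests on the graph-theoretic density argument sketched in Section~\ref{sec:techniques}: because $\GViews(M^i,P_E)$ corresponds to a uniform edge in a dense disjointness graph $G$ (each vertex connected to most of the other side), Alice's marginal probability of hitting any particular query Bob asked is at most about twice Eve's probability of having already made it heavy and queried it—so if Alice had probability $\ge \eps/n_B$ of hitting it, Eve would have queried it. Summing over the at most $n_A$ queries of Alice and $n_B$ of Bob gives $\Pr[\Fail] \le 2\eps + 2\eps = 4\eps$ or so; I would absorb constants generously into the $10\eps$ slack.

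Second, on the event $\neg\Fail$, I would use the closeness-to-product property: I expect the earlier analysis (the same density of $G$) to yield $\Delta\big(\Views(M^\ell,P^\ell_E)\mid\Good,\ (\bfV^\ell_A\mid\cdot)\times(\bfV^\ell_B\mid\cdot)\big) \le O(\eps)$ with high probability over Eve's view, and I would fold the low-probability exceptional case into the error using Lemma~\ref{lem:nestedDistance}. Then, writing $\cE$ for Eve's final view $(M^\ell,P^\ell_E)$: on $\neg\Fail$, the true $(V^\ell_A,V^\ell_B)\sim\GViews(\cE)$, so $\Pr[s_A=s_B\mid \cE,\neg\Fail] = \Pr_{(V'_A,V'_B)\gets\GViews(\cE)}[s'_A=s'_B]$; and by Lemma~\ref{lem:SDEquivals} applied to the event $\{$outputs equal$\}$, this differs from $\Pr[s'_A = s_B \mid \cE,\neg\Fail]$ (resampling only Bob's side from the product) by at most $2\cdot O(\eps)$. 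Averaging over $\cE$ and combining, $\Pr[s_E=s_B] \ge \Pr[s_A=s_B] - \Pr[\Fail] - O(\eps) \ge \rho - 10\eps$, as desired. The only genuinely hard input is the two quantitative facts about the graph $G$ (failure probability and product-closeness), which I am permitted to invoke once they are established in the surrounding sections; the rest is bookkeeping with statistical distance via Lemmas~\ref{lem:SDEquivals}, \ref{lem:averageSD}, \ref{lem:triangle}, and~\ref{lem:nestedDistance}.
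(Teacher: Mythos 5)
Your proposal follows essentially the same route as the paper's proof: bound the probability that an intersection query escapes Eve (Lemma~\ref{lem:success}), then use the near-product structure of $\GViews$ (Corollary~\ref{cor:statClose}) to argue that one extra resample of Alice's view still produces an output agreeing with Bob's, and finally combine the two facts. The three-step swap you sketch --- replace the true $\bfV_B$ by the product marginal $\bfU_B$, then by $\bfV_B \mid \bfV_A = V'_A$, costing $2\eps$ each --- is exactly the paper's Claim~\ref{clm:twoSamples}, and the idea of viewing Eve's resample $(V'_A,V'_B)$ and the real $(V_A,V_B)$ as two independent draws from the \emph{same} conditional distribution given Eve's view and $\Good$ is exactly the paper's Construction~\ref{const:sample}.

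Two small but genuine slips deserve flagging. First, you define $\Fail$ as the event that ``Eve aborts, i.e.\ $\Pr_\EXEC[\Good(M^i,P_E)]=0$ for some $i$'' and claim this is \emph{equivalent} to ``there is an intersection query $q \in Q^i_A \cap Q^i_B$ with $q \notin \Q(P^i_E)$.'' These are not equivalent. Eve aborts only when her accumulated evidence makes $\Good$ \emph{impossible}, a much rarer and detectable event; the paper's $\Fail$ is the latter condition, which is an event on the real execution that Eve may never detect. You should adopt the paper's $\Fail$ and bound the abort probability separately or note that it is dominated by the same bound. Second, and more subtly: you condition on $\neg\Fail$ to claim that the real $(V^\ell_A, V^\ell_B)$ is distributed as $\GViews(\cE)$. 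But $\GViews(\cE)$ is by definition $\Views(\cE)$ conditioned on $\Good(\cE)$, not on $\neg\Fail$. Since $\neg\Fail$ strictly implies $\Good$ (an intersection query may slip by at round $t$ and still be learned by Eve later, so $\Fail$ can coexist with $\Good$), the distribution $\Views(\cE) \mid \neg\Fail$ is a proper sub-distribution of $\GViews(\cE)$ and the equality $\Pr[s_A = s_B \mid \cE, \neg\Fail] = \Pr_{\GViews(\cE)}[s'_A = s'_B]$ does not hold exactly. The fix is to condition on $\Good$ throughout (as the paper does in Construction~\ref{const:sample}) and use $\Pr[\neg\Good] \le \Pr[\Fail] \le 3\eps$ from Lemma~\ref{lem:success}; that matches the identity $\GViews(\cE) = \Views(\cE) \mid \Good(\cE)$ exactly, with no further slack needed. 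With these two repairs, the argument closes and the constants fit comfortably within the stated $10\eps$.
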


\begin{lemma}[Efficiency of Eve] \label{lem:EveEff}
The probability that Eve of Construction~\ref{const:Eve} asks more than $n_A \cdot n_B / \eps^2$ oracle queries is at most $10\eps$.
\end{lemma}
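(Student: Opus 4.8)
The plan is to bound the expected total number of heavy queries that Eve asks over the whole execution, and then apply Markov's inequality. The key observation is that every query $q$ that Eve adds to $P_E$ was, at the moment Eve asked it, either $(\eps/n_B)$-heavy for Alice's view or $(\eps/n_A)$-heavy for Bob's view with respect to the current distribution $\GViews(M^i,P_E)$. Since $\GViews(M^i,P_E)$ is, by construction, the distribution of $(V^i_A,V^i_B)$ conditioned on the event $\Good(M^i,P_E)$ that all intersection queries so far lie in $\Q(P_E)$, a heavy query $q$ (say heavy for Alice) is a query that appears in $\Q(V^i_A)$ with probability at least $\eps/n_B$ under a distribution that is consistent with Eve's current knowledge and with what Alice and Bob actually did. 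So, roughly speaking, when Eve asks a heavy-for-Alice query, she ``expects'' that with probability $\ge \eps/n_B$ this query is one of Alice's $\le n_A$ real queries; since Alice makes at most $n_A$ queries, Eve should not be able to ask more than about $n_A \cdot n_B/\eps$ such queries before exhausting Alice's query budget, and symmetrically for Bob.

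To make this precise I would set up a potential/charging argument. Fix the real execution $(r_A,r_B,H)$ (equivalently, condition on Alice's and Bob's real views), and track, each time Eve is about to ask a new heavy query $q$, the conditional probability — under the \emph{real} distribution of the execution given Eve's current transcript-and-oracle knowledge $(M^i,P_E)$ — that $q$ is a real query of Alice (or of Bob). The point is that $\GViews(M^i,P_E)$ is exactly this real posterior distribution restricted to $\Good$; and as long as Eve has not aborted, $\Pr_\EXEC[\Good(M^i,P_E)]>0$, so heaviness under $\GViews$ translates (up to the probability of $\Good$, which I will need to handle carefully — see below) into a lower bound on the posterior probability that $q$ is a genuine query. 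Each real query of Alice can be ``hit'' by at most one of Eve's queries (since Eve never repeats a query and once she learns a real query it is in $\Q(P_E)$ and is never heavy again). So if I let $N_A$ be the number of heavy-for-Alice queries Eve asks, I want to argue $\Exp[N_A] \le n_A n_B/(2\eps^2)$ or so, and symmetrically $\Exp[N_B]\le n_A n_B/(2\eps^2)$, giving $\Exp[N_A+N_B]\le n_A n_B/\eps^2$, whence $\Pr[N_A+N_B > n_A n_B/\eps^2]$ is small — but Markov only gives a constant, not $10\eps$, so the bookkeeping must actually show $\Exp[N_A+N_B]$ is much smaller, of order $\eps \cdot n_A n_B /\eps^2 \cdot 10\eps \cdot(\dots)$; more precisely I expect the clean statement to be $\Exp[N_A + N_B]\le 10 \eps \cdot (n_A n_B/\eps^2)$, i.e.\ $\Exp[N_A+N_B] \le 10\, n_A n_B/\eps$, and then Markov with threshold $n_A n_B/\eps^2$ gives failure probability $\le 10\eps$.

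Concretely, the cleanest route is an expectation computation using linearity. For a fixed real execution, consider the $i$-th heavy query $q$ Eve asks, which is $(\eps/n_B)$-heavy for Alice, say. The key inequality I want is: the probability (over the real randomness, conditioned on Eve's view at that moment) that $q\in\Q(V^i_A)$ is at least $\eps/n_B$ times $\Pr_\EXEC[\Good(M^i,P_E)]$ — but since Eve has not aborted, $\Pr_\EXEC[\Good]>0$, and in fact I should argue that conditioned on the whole past not-aborting, the cumulative ``success mass'' is controlled. I would then sum: $\sum_{\text{heavy-for-Alice } q} \Pr[\,q \text{ is a real Alice query} \mid \text{Eve's view}\,] \le \Exp[|Q^\ell_A|] \le n_A$, since the events ``$q$ is the $k$-th real query of Alice'' for distinct $q$ are disjoint. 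Combining the per-query lower bound $\ge \eps/n_B$ (modulo the $\Good$-probability factor) with this sum bound yields $\Exp[N_A]\lesssim n_A n_B/\eps$, and symmetrically for $B$.

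\textbf{Main obstacle.} The delicate point — and the one I expect to require the real work — is the interaction between the conditioning on $\Good$ and the posterior probabilities: heaviness is measured under $\GViews = (\Views\mid\Good)$, but the charging bound ``each real query charged once'' lives under the unconditioned real distribution. One must show that conditioning on $\Good(M^i,P_E)$ does not blow up the relevant probabilities in the wrong direction, or rather, that it can only \emph{help} (a query heavy under $\GViews$ and not yet known to Eve must genuinely be likely to be a real query, because under $\Good$ a query in $\Q(V^i_A)\sm\Q(P_E)$ is not in $\Q(V^i_B)$, hence it is a ``private'' query of Alice). I expect to handle this by working with the real conditional distribution $\Views(M^i,P_E)$ throughout, using $\Pr_{\GViews}[q\in\Q(V^i_A)] \cdot \Pr_\EXEC[\Good(M^i,P_E)] \le \Pr_{\Views}[q \in \Q(V^i_A)]$, and then summing against $\Exp_\EXEC[|Q^\ell_A|]\le n_A$ via \lemmaref{lem:IR}-type arguments (thresholding the posterior probability of a fixed query being real). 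The termination/abort case ($\Pr_\EXEC[\Good]=0$) just stops Eve, so it only decreases the query count and is harmless for the efficiency bound (it is Lemma~\ref{lem:EveFinds} that must worry about aborts). Getting the constant down to exactly $10\eps$ is then a matter of choosing the threshold constants in the Markov step to match the $\eps/n_A$ and $\eps/n_B$ heaviness cutoffs.
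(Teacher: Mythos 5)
Your high-level plan is essentially the one the paper uses: charge each heavy query against the at-most-$n_A$ real queries of Alice (resp. $n_B$ of Bob), bound the expected number of heavy queries by roughly $n_A n_B/\eps$ using a \lemmaref{lem:IR}-style threshold argument, then apply Markov with threshold $n_A n_B/\eps^2$. You also correctly identify the only genuinely delicate point, which is the gap between heaviness measured under $\GViews(M^i,P_E)$ and heaviness under the unconditioned $\Views(M^i,P_E)$, controlled by the inequality $\Pr_{\GViews}[q\in\Q(V^i_A)]\cdot\Pr[\Good]\le\Pr_{\Views}[q\in\Q(V^i_A)]$.

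However, you flag that delicate point as ``the real work'' without actually resolving it, and the resolution is a concrete device you are missing. The inequality above is useless when $\Pr[\Good(M^i,P_E)]$ is tiny, and nothing in your outline prevents that. The paper's fix is a \emph{red/green coloring} of Eve's queries (Definition~\ref{def:color}): a query is \emph{red} if at the moment it is asked $\Pr[\Good(M^i,P_E)]\le 1/2$, and \emph{green} otherwise. Claim~\ref{clm:notRed} shows that the event that Eve \emph{ever} asks a red query has probability at most $6\eps$, by applying \lemmaref{lem:IR} to the event $\Fail$ (with $\lambda=3\eps$ from \lemmaref{lem:success}, $\lambda_1=1/2$, $\lambda_2=6\eps$). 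Conditioned on never going red, $\Pr[\Good]>1/2$ at every moment, so a query that is $(\eps/n_B)$-heavy under $\GViews$ is $(\eps/2n_B)$-heavy under $\Views$ (Claim~\ref{clm:allColored}), and then your charging argument goes through cleanly with the halved threshold, giving $\Ex[|\GR|]\le 4n_A n_B/\eps$ and a Markov bound of $4\eps$ (Claim~\ref{clm:notGreen}). The final $10\eps$ is $6\eps+4\eps$, not, as you suggest, a single Markov step; without the red/green split you cannot get the per-query lower bound of $\eps/2n_B$ under $\Views$ that the charging needs. Also, one small correction to your bookkeeping: you want $\Ex[|\GR|]\le 4 n_A n_B/\eps$ (not $10 n_A n_B/\eps$), and the remaining $6\eps$ must come from a separate union-bound term covering the bad event $\Pr[\Good]\le 1/2$, which is exactly what the red-query analysis supplies.
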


Before proving Lemmas~\ref{lem:EveFinds} and~\ref{lem:EveEff} we first derive Theorem~\ref{thm:mainFormal} from them.

\paragraph{Proof of Theorem~\ref{thm:mainFormal}.}  Suppose we modify the adversary Eve and abort it as soon as it asks more than $n_A \cdot n_B / \eps^2$ queries and call the new  adversary EffEve. By Lemmas~\ref{lem:EveFinds} and~\ref{lem:EveEff} the output $s_E$  of
EffEve  still agrees with Bob's output $s_B$ with probability at least $\rho-10\eps-10\eps=\rho-20\eps$. Theorem~\ref{thm:mainFormal} follows by using $\eps=\delta/20 < 1/10$ and noting that $n_A \cdot n_B / (\delta/20)^2 = 400 \cdot n_A \cdot n_B /\delta^2$. \qed

\subsection{Analysis of Attack} \label{sec:analysis}
In this subsection we will prove  Lemmas~\ref{lem:EveFinds} and~\ref{lem:EveEff}, but before doing so  we need some definitions.

\paragraph{Events over $\EXEC$.} Event $\Good$ holds if and only if $Q^\ell_A \cap Q^\ell_B \se Q^\ell_E$ in which case we say that Eve has found all the \emph{intersection queries}. Event $\Fail$ holds if and only if at \emph{some} point during the execution of the system, Alice or Bob asks a query $q$, which was asked by the other party, but not already asked by Eve. If the first query $q$ that makes $\Fail$ happen is Bob's $j$'th query we say the event $\BFail_j$ has happened, and if it is Alice's $j$'th query we say that the event $\AFail_j$ has happened. Therefore, $\BFail_1,\dots,\BFail_{n_B}$ and $\AFail_1,\dots,\AFail_{n_B}$ are disjoint events whose union is equal to $\Fail$. Also note that $\neg \Good \Rightarrow \Fail$, because if Alice and Bob share a query that Eve never made, this must have happened \emph{for the first time} at some point during the execution of the protocol (making $\Fail$ happen), but also note that $\Good$ and $\Fail$ are not necessarily complement events in general. Finally let the event $\BGood_j$ (\resp $\AGood_j$) be the event that when Bob (\resp Alice) asks his (\resp her) $j$'th oracle query, and this happens in round $i+1$, it holds that $Q^i_A \cap Q^i_B \se Q^i_E$. Note that the event $\BFail_i$ implies $\BGood_i$ because if $\BGood_i$ does not hold, it means that Alice and Bob have \emph{already} had an intersection query out of Eve's queries, and so $\BFail_i$ could not be the \emph{first} time that Eve is missing an intersection query.

The following lemma plays a central role in proving both of Lemmas~\ref{lem:EveEff} and~\ref{lem:EveFinds}.

\begin{lemma}[Eve Finds the Intersection Queries]  \label{lem:success}
For all $i \in [n_B]$, $\Pr_{\EXEC}[ \BFail_i ] \leq \frac{3\e}{2n_B}$. Similarly, for all $i \in [n_A]$, $\Pr_{\EXEC}[ \AFail_i ] \leq \frac{3\e}{2n_A}$. Therefore, by a union bound, $\Pr_{\EXEC}[\neg \Good] \leq \Pr_{\EXEC}[\Fail] \leq 3\e$.
\end{lemma}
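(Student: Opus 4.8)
The plan is to bound $\Pr_\EXEC[\BFail_i]$ for each fixed $i \in [n_B]$ by conditioning on Eve's view at the moment just before Bob asks his $i$-th query, and then to argue that the conditional distribution of Alice's and Bob's views in Eve's eyes is \emph{close to a product distribution}, so that Bob cannot hit a query of Alice's that Eve has not already learned except with small probability. By symmetry the same argument handles $\AFail_i$, and the union bound over the at most $n_A + n_B$ values of $i$ gives $\Pr_\EXEC[\Fail] \le 3\e$; since $\neg\Good \Rightarrow \Fail$, the final inequality follows.

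First I would set up the conditioning. Let round $i+1$ be the round in which Bob asks his $i$-th oracle query $q$. Immediately before this query, Eve's view is $(M^{i},P_E)$ for some partial function $P_E$ (recall the protocol is in seminormal form, so in round $i$ at most one party queried and Eve has already completed her heavy-query harvesting for round $i$). Recall the event $\BGood_i$ holds iff $Q^i_A \cap Q^i_B \se Q^i_E$, i.e.\ Eve has not yet missed an intersection query; since $\BFail_i \Rightarrow \BGood_i$, it suffices to bound $\Pr[\BFail_i \wedge \BGood_i]$. Conditioned on $\BGood_i$ and on Eve's view $(M^i,P_E)$, the joint distribution of $(V^i_A, V^i_B)$ is exactly $\GViews(M^i,P_E)$ (this is where the definition of $\Good(M^i,P_E)$ and $\GViews$ pays off). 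By the termination condition of Construction~\ref{const:Eve}, for \emph{every} query $q' \notin \Q(P_E)$ we have $\Pr_{(V^i_A,V^i_B)\gets\GViews(M^i,P_E)}[q' \in \Q(V^i_A)] < \e/n_B$. So if the distribution of $V^i_B$ given $V^i_A$ were the same as the marginal distribution of $V^i_B$ (i.e.\ if $\GViews$ were a product distribution), then Bob's $i$-th query $q$ — which given $V^i_B$ and the oracle answers so far is some fixed query, but averaged over the conditional distribution is spread out — would land in $\Q(V^i_A) \setminus \Q(P_E)$ with probability at most $\e/n_B$ (summing, in effect, over at most $n_A$ possible queries of Alice weighted by their heaviness, or more carefully: the query $q$ is determined by Bob's side, and for it to be a fresh intersection query it must lie in $\Q(V^i_A)\setminus\Q(P_E)$, an event of probability $<\e/n_B$ under independence since each such query is light for Alice).

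The heart of the argument — and the step I expect to be the main obstacle — is that $\GViews(M^i,P_E)$ is not literally a product distribution, because the very act of conditioning on $\Good$ (Eve having missed no intersection query) couples Alice and Bob. Here I would invoke the graph/disjointness characterization sketched in Section~\ref{sec:techniques}: view $\Views(M^i,P_E)$ as (close to) a product $\bfV^i_A \times \bfV^i_B$, associate to each view $V$ the set $S(V) = \Q(V)\setminus\Q(P_E)$ of "private" queries, note $|S(V)| \le n$, and observe that $\GViews$ is $\Views$ restricted to the pairs with $S(V^i_A)\cap S(V^i_B)=\es$. The termination rule says this restricted distribution is "light": no query is in $S(V^i_A)\cup S(V^i_B)$ with probability exceeding roughly $\e/\min(n_A,n_B)$. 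A counting argument (each left vertex is disjoint from all but a small fraction of right vertices, because a right vertex conflicting with a fixed $S(u)$ of size $\le n$ must contain one of those $\le n$ light elements, contributing total mass $\le n \cdot \e/n = \e$) then shows $\GViews$ is $O(\e)$-close to the product of its own marginals. Combining: Bob's fresh-intersection probability under $\GViews$ is at most (its value under the product of marginals) $+ O(\e)$, and under the product of marginals it is at most $\e/n_B$ by lightness for Alice, giving the bound $\frac{3\e}{2n_B}$ after tracking constants carefully (the $3/2$ rather than $2$ presumably comes from a tighter version of the density estimate, e.g.\ the light mass attributable to round-$i$ conflicts being bounded by $\e/2$ here). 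I would then apply Lemma~\ref{lem:IR} (with $\bfx$ the full execution, the $\bfx_j$'s the successive states of Eve's view, $E = \BFail_i$, and $\lambda_1$ the per-round heaviness threshold) to convert the "at every stopping point the conditional probability is small" statement into the unconditional bound $\Pr_\EXEC[\BFail_i] \le \frac{3\e}{2n_B}$, and finish by the union bound as stated.
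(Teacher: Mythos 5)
Your proposal follows the same route as the paper: condition on Eve's view at the moment Bob is about to make his $i$-th query, use $\BFail_i \Rightarrow \BGood_i$ to pass to the conditional distribution $\GViews(M^i,P_E)$, and invoke a graph/disjointness characterization to show $\GViews$ is $O(\e)$-close to a product, so a fresh intersection happens with probability roughly $\e/n_B$ plus small slack. The paper packages this as Lemma~\ref{lem:success2} (the pointwise conditional bound) combined with Lemma~\ref{lem:combChar} (the graph characterization, derived from the product characterization of Lemma~\ref{lem:product}); your density heuristic --- a fixed left vertex clashes with at most $|S(u)|\cdot(\e/n_A)\le\e$ of the edge mass --- is precisely the mechanism in Lemma~\ref{lem:highDeg}.

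The one genuine misstep is the final sentence, where you invoke Lemma~\ref{lem:IR} to pass from the per-view bound $\Pr_\EXEC[\BFail_i \mid B_i,M_i,P_i,\BGood_i] = \Pr_{\GViews(M_i,P_i)}[\BFail_i\mid B_i]\le 3\e/(2n_B)$ to the unconditional $\Pr_\EXEC[\BFail_i]\le 3\e/(2n_B)$. Lemma~\ref{lem:IR} runs in the opposite direction: it \emph{presupposes} a bound on $\Pr[E]$ and controls how often the conditional probability can exceed a threshold, so using it here would be circular. What you actually need is just the law of total probability: since $\BFail_i$ implies $\BGood_i$ we have $\Pr_\EXEC[\BFail_i\mid B_i,M_i,P_i]\le\Pr_\EXEC[\BFail_i\mid B_i,M_i,P_i,\BGood_i]\le 3\e/(2n_B)$ for \emph{every} reachable $(B_i,M_i,P_i)$, and averaging over $(B_i,M_i,P_i)$ gives $\Pr_\EXEC[\BFail_i]\le 3\e/(2n_B)$ directly. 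This is exactly how the paper derives Lemma~\ref{lem:success} from Lemma~\ref{lem:success2}. Lemma~\ref{lem:IR} is used in the paper only later, for the efficiency analysis (bounding red queries in Claim~\ref{clm:notRed}), and plays no role here.
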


We will first prove Lemma~\ref{lem:success} and then will use this lemma to prove Lemmas~\ref{lem:EveEff} and~\ref{lem:EveFinds}.
In order to prove Lemma~\ref{lem:success} itself, we will reduce it to stronger statements in two steps \ie Lemmas~\ref{lem:success2} and \ref{lem:combChar}. Lemma~\ref{lem:combChar} (called the graph characterization lemma) is indeed at the heart of our proof and characterizes the conditional distribution of the views of Alice and Bob conditioned on Eve's view.

\subsubsection{Eve Finds Intersection Queries: Proving Lemma~\ref{lem:success}} \label{sec:success}

As we will show shortly, Lemma~\ref{lem:success} follows from the following  stronger lemma.

\begin{lemma} \label{lem:success2}
Let $B_i$, $M_i$, and $P_i$ denote, in order, Bob's view, the sequence of messages sent between Alice and Bob, and the oracle query-answer pairs known to  Eve, all before the moment that Bob is going to ask his $i$'th oracle query that might happen be in a round $j$ that is \emph{different} from $\geq i$.\footnote{Also note that $M_i$ is not necessarily the same as $M^i$. The latter refers to the transcript till the $i$'th message of the protocol is sent, while the former refers to the messages till Bob is going to ask his $i$'th messages (and might ask zero or more than one queries in some rounds).} Then, for \emph{every} $(B_i,M_i,P_i) \gets (\bB_i,\bM_i,\bP_i)$  sampled by executing the system it holds that
$$
\Pr_{\GViews(M_i,P_i)}[ \BFail_i  \mid  B_i ] \leq \frac{3\e}{2n_B}.
$$
A  symmetric statement holds for Alice.
\end{lemma}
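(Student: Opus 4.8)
\emph{The plan} is to fix an arbitrary tuple $(B_i,M_i,P_i)\gets(\bB_i,\bM_i,\bP_i)$ produced by executing the system and reduce the asserted bound, in two moves, to (i) the stopping rule of Eve's attack and (ii) a structural fact about the conditional distribution of views. (I may assume $\Pr_{\EXEC}[\Good(M_i,P_i)]>0$, so that $\GViews(M_i,P_i)$ is defined; this holds whenever $\BFail_i$ is even possible, since then the execution itself witnesses $\Good$ and Eve has not yet aborted.) Since $\Pi$ is in seminormal form and $B_i$ is Bob's view immediately before his $i$'th query, $B_i$ determines a unique next query $q$, with $q\nin\Q(B_i)$. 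If $q\in\Q(P_i)$ then Eve has already queried $q$, so this query cannot be the first to cause $\Fail$ and $\Pr_{\GViews(M_i,P_i)}[\BFail_i\mid B_i]=0$; hence assume $q\nin\Q(P_i)$. Conditioned on Bob's view in the $\GViews(M_i,P_i)$-sample being $B_i$ and on $\Good(M_i,P_i)$ (which that distribution imposes, and which says no intersection query has been missed so far), the event $\BFail_i$ is \emph{exactly} the event $\{q\in\Q(V_A)\}$ that Alice's view in the sample also contains $q$. Thus it suffices to prove $\Pr_{\GViews(M_i,P_i)}[q\in\Q(V_A)\mid B_i]\le\tfrac{3\e}{2n_B}$.

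\emph{Unconditional bound.} Let $j$ be the round of Bob's $i$'th query. By seminormal form, Eve's most recent update was at the end of round $j-1$, so $M_i=M^{j-1}$ and $P_i$ equals Eve's set $P_E$ at that moment; and Construction~\ref{const:Eve} keeps adding queries until no query outside $\Q(P_E)$ is $(\e/n_B)$-heavy for Alice or $(\e/n_A)$-heavy for Bob with respect to $\GViews(M_i,P_i)$. Since $q\nin\Q(P_i)$, this gives the unconditional estimate $\Pr_{\GViews(M_i,P_i)}[q\in\Q(V_A)]<\e/n_B$; more generally, every query $q'\nin\Q(P_i)$ lies in Alice's view with probability $<\e/n_B$ and in Bob's view with probability $<\e/n_A$ under $\GViews(M_i,P_i)$.

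\emph{From unconditional to conditional.} It remains to pass from the estimate above to the conditional bound at the cost of only a constant factor, and here I would invoke the graph characterization, Lemma~\ref{lem:combChar}. Writing $S(V)=\Q(V)\sm\Q(P_i)$ for the queries of a view that Eve has not made, a short computation from Proposition~\ref{prop:ProbOfPartial} and Lemma~\ref{lem:IncExc} shows that $\GViews(M_i,P_i)$ is the distribution over edges of the bipartite \emph{disjointness graph} $G$ -- whose left (\resp right) vertices are the Alice (\resp Bob) views consistent with $(M_i,P_i)$, with $(V_A,V_B)$ an edge iff $S(V_A)\cap S(V_B)=\es$ (equivalently, iff $(V_A,V_B)$ is $\Good$) -- that assigns each edge $(V_A,V_B)$ a probability proportional to $w_L(V_A)\,w_R(V_B)$. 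Here $|S(V_A)|\le n_A$ and $|S(V_B)|\le n_B$, and the previous paragraph says precisely that the set-valued distribution obtained by drawing an edge and outputting $S(V_A)\cup S(V_B)$ is \emph{light}: no query lies in this output set with probability exceeding $\e/n_A+\e/n_B$. Lemma~\ref{lem:combChar} shows that a disjointness graph with these two properties is \emph{dense}, so that conditioning on one endpoint distorts the marginal distribution of the other by at most a factor $3/2$. Applying this to the event $\{q\in\Q(V_A)\}$ and combining with $\Pr_{\GViews(M_i,P_i)}[q\in\Q(V_A)]<\e/n_B$ yields $\Pr_{\GViews(M_i,P_i)}[q\in\Q(V_A)\mid B_i]<\tfrac{3\e}{2n_B}$. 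Alice's case is handled identically with the roles of the two parties exchanged.

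\emph{Main obstacle.} The two reduction moves above are bookkeeping (the round structure, seminormal form, the definition of $\BFail_i$, and the inclusion--exclusion identity that splits the edge weights into a product). The genuine difficulty is the density claim inside Lemma~\ref{lem:combChar}. A left vertex $u$ fails to be adjacent only to right vertices whose $S$-set meets $S(u)$; since $|S(u)|\le n_A$, this exceptional set would carry only a small opposite-side weight \emph{provided} each individual query $q'$ belonged to the $S$-set of only a small (opposite-side-weighted) fraction of right vertices. But that is circular: ``lightness'' bounds the marginals of the \emph{edge} distribution, which are the raw weights $w_L,w_R$ reweighted by the opposite side's degree, so a density bound for one side presupposes a density bound for the other. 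Breaking this circularity -- e.g.\ by a bootstrapping/fixed-point argument that keeps both one-sided density defects below roughly $\e/(1-\e)<1/9$ -- is the crux of the whole proof; once Lemma~\ref{lem:combChar} is established, Lemma~\ref{lem:success2} follows as above.
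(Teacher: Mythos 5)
Your proof is correct and follows essentially the same route as the paper: fix the tuple $(B_i,M_i,P_i)$, let $q$ be Bob's determined next query, and combine the unconditional lightness of $q$ under $\GViews(M_i,P_i)$ (forced by Eve's stopping rule) with the degree lower bound of Lemma~\ref{lem:combChar} to lose only a factor $(1-2\eps)^{-2}<3/2$ when conditioning on Bob's view. One small correction to your closing speculation: the paper establishes the density claim of Lemma~\ref{lem:combChar} not by a bootstrapping or fixed-point argument but by a direct combinatorial count --- take a minimum-degree vertex $u$ (WLOG on the side where $d_A/|\cU_B|\leq d_B/|\cU_A|$), compare $|E^{\sim}(u)|$ with $|E^{\not\sim}(u)|$, and derive a contradiction with $|E^{\not\sim}(u)|\leq\eps|E|$ --- so the apparent circularity you identified is sidestepped rather than broken by iteration.
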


We first see why Lemma~\ref{lem:success2} implies Lemma~\ref{lem:success}.

\begin{proof}[Proof of Lemma~\ref{lem:success} using Lemma~\ref{lem:success2}.]
It holds that
$$\Pr[\BFail_i] = \sum_{(B_i,M_i,P_i) \in \Supp(\bB_i,\bM_i,\bP_i)} \Pr_{\EXEC}[B_i,M_i,P_i] \cdot \Pr_{\EXEC}[\BFail_i \mid B_i,M_i,P_i].$$ Recall that as we said the event $\BFail_i$ implies $\BGood_i$. Therefore, it holds that
$$\Pr_{\EXEC}[\BFail_i \mid B_i,M_i,P_i] \leq \Pr_{\EXEC}[\BFail_i \mid B_i,M_i,P_i,\BGood_i]$$
and by definition we have $\Pr_{\EXEC}[\BFail_i \mid B_i,M_i,P_i,\BGood_i] = \Pr_{\GViews(M_i,P_i)}[\BFail_i \mid B_i]$. By Lemma~\ref{lem:success2}  it holds that $\Pr_{\GViews(M_i,P_i)}[\BFail_i \mid B_i]\leq \frac{3\e}{2n_B}$, and so: $$\Pr_{\EXEC}[\BFail_i] \leq \sum_{(B_i,M_i,P_i) \in \Supp(\bB_i,\bM_i,\bP_i)} \Pr_{\EXEC}[B_i,M_i,P_i] \cdot  \frac{3\e}{2n_B} = \Pr[\text{Bob asks  $\geq i$ queries}]\cdot \frac{3\e}{2n_B} \leq \frac{3\e}{2n_B} . $$

\end{proof}

In the following we will prove Lemma~\ref{lem:success2}.  In fact, we will not  use the fact that Bob is about to ask his $i$'th query and will prove a more general statement. For simplicity we  will use a simplified  notation $M=M_i, P=P_i$. Suppose $M=M^j$ (namely the number of messages in $M$ is $j$).
The following graph characterization of the distribution $\Views(M,P)$ is at the heart of our analysis of the attacker Eve of Construction~\ref{const:Eve}. We first describe the intuition and purpose behind the lemma.

\paragraph{Intuition.} Lemma \ref{lem:combChar} below, intuitively, asserts that at any time during the execution of the protocol, while Eve is running her attack, the following holds. Let $(M,P)$ be the view of Eve at any moment. Then the distribution $\Views(M,P)$ of Alice's and Bob's views conditioned on $(M,P)$ could be sampled using a ``labeled'' bipartite graph $G$ by sampling a uniform edge $e = (u,v)$ and taking the two labels of these two nodes (denoted by $A_u,B_v$). This graph $G$ has the extra property of being ``dense'' and close to being a complete bipartite graph.

\begin{lemma}[Graph Characterization of $\Views(M,P)$] \label{lem:combChar}
Let $M$ be the sequence of messages sent between Alice and Bob, let $P$ be the set of oracle query-answer pairs known to Eve by the end of the round in which the last message in  $M$ is sent and Eve is also done with her learning queries. Let $\Pr_{\Views(M,P)}[\Good(M,P)]>0$. For \emph{every} such $(M,P)$, there is a bipartite graph $G$ (depending on $M,P$) with vertices $(\cU_A,\cU_B)$ and edges $E$ such that:
\begin{enumerate}
  \item Every vertex $u$ in $\cU_A$ has a corresponding view $A_u$ for Alice (which is consistent with $(M, P)$) and a set $Q_u = \Q(A_u) \sm \Q(P)$, and the same holds for vertices in $\cU_B$ by changing the role of Alice and Bob. (Note that every view can have multiple vertices assigned to it.)
  \item \label{item:Disjoint} There is an edge between $u \in \cU_A$ and $v \in \cU_B$ if and only if $Q_u \cap Q_v = \es$.
  \item \label{item:Degrees} Every vertex  is connected to at least a $(1-2\eps)$ fraction of the vertices in the other side.

  \item \label{item:EquivDists} The distribution $(V_A,V_B) \gets \GViews(M,P)$ is identical to: sampling a random edge $(u,v) \gets E$ and taking $(A_u,B_v)$ (\ie the views corresponding to $u$ and $v$).
  \item \label{item:SameSupp} The distributions $\GViews(M,P)$ and $\Views(M,P)$ have the same support set.
\end{enumerate}
\end{lemma}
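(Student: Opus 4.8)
The plan is to reduce the lemma, after one substantive observation, to unwinding the definitions of $\Views$, $\GViews$, and $\Good$ from Section~\ref{sec:Notation}. The observation is that, conditioned on $\Good(M,P)$, the distribution $\Views(M,P)$ \emph{factorizes}. Fix a pair $(A,B)$ of views, both consistent with $(M,P)$, with $\Q(A)\cap\Q(B)\se\Q(P)$ (i.e.\ a good pair), and put $Q_A=\Q(A)\sm\Q(P)$ and $Q_B=\Q(B)\sm\Q(P)$. Then $Q_A$, $Q_B$, $\Q(P)$ are pairwise disjoint --- disjointness of $Q_A$ from $Q_B$ being precisely goodness --- so by Proposition~\ref{prop:ProbOfPartial} the probability that the oracle is simultaneously consistent with Alice's, Bob's, and Eve's query--answer pairs is the product over these three disjoint blocks of the single-query probabilities; multiplying by the two independent uniform tape choices gives $\Pr_{\Views(M,P)}[(V_A,V_B)=(A,B)]=c\cdot w(A)\cdot w(B)$, where $c$ does not depend on $(A,B)$, $w(A)$ is the probability of $A$'s tape times the product of the oracle probabilities of the fresh queries $Q_A$, and $w(B)$ is defined symmetrically. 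Hence $\GViews(M,P)$ is proportional to $w(A)w(B)$ on good pairs and supported only there. This is the only place where any property of the random oracle enters.

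Given this I would build $G$ as follows. Using the rationality condition of Definition~\ref{def:RandOr} and finiteness of the setting (bounded-length oracle, bounded randomness), fix an integer $N$ so that $N\cdot w(A)$ is a positive integer for every positive-weight Alice-view $A$ consistent with $(M,P)$, and likewise for Bob-views. Let $\cU_A$ hold $N\cdot w(A)$ copies of each such $A$, every copy $u$ carrying a label $A_u=A$ and a set $Q_u=\Q(A)\sm\Q(P)$; define $\cU_B$ symmetrically; and put an edge between $u\in\cU_A$ and $v\in\cU_B$ exactly when $Q_u\cap Q_v=\es$. Then the first item and item~\ref{item:Disjoint} hold by construction, as do $|Q_u|\le n_A$ and $|Q_v|\le n_B$ (Alice \resp Bob ask at most $n_A$ \resp $n_B$ queries). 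For item~\ref{item:EquivDists}: a good pair $(A,B)$ contributes exactly $Nw(A)\cdot Nw(B)$ edges while a non-good pair contributes none, so a uniformly random edge outputs a labelled pair with probability proportional to $w(A)w(B)$ on good pairs and $0$ off them --- the same distribution as $\GViews(M,P)$ by the previous paragraph, both being probability distributions (the hypothesis $\Pr_{\Views(M,P)}[\Good(M,P)]>0$ guarantees $|E|\ge 1$) supported exactly on the good positive-weight pairs. Item~\ref{item:SameSupp} will drop out of item~\ref{item:Degrees}: once no vertex is isolated, every positive-weight view consistent with $(M,P)$ belongs to some good positive-weight pair and hence to the support of $\GViews(M,P)$, while the reverse inclusion is immediate.

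The heart of the proof is item~\ref{item:Degrees}, and here Eve's stopping rule is used. When Eve halts, no $q\nin\Q(P)$ is $(\eps/n_B)$-heavy for Alice or $(\eps/n_A)$-heavy for Bob with respect to $\GViews(M,P)$, so by item~\ref{item:EquivDists}, for each such $q$, $\sum_{u:\,q\in Q_u}\deg(u)<(\eps/n_B)\,|E|$ and $\sum_{v:\,q\in Q_v}\deg(v)<(\eps/n_A)\,|E|$. Fix $u\in\cU_A$; its non-neighbours lie in the union of the at most $n_A$ sets $\{v:q\in Q_v\}$ over $q\in Q_u$, so their total degree is below $n_A\cdot(\eps/n_A)|E|=\eps|E|$, and hence the neighbours of $u$ carry total degree above $(1-\eps)|E|$. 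Summing the bound ``non-neighbours of $u$ have total degree $<\eps|E|$'' over all $u\in\cU_A$ turns the left side into $\sum_v\deg(v)\,(|\cU_A|-\deg(v))$; lower-bounding $\sum_v(|\cU_A|-\deg(v))^2$ by $(\#\text{non-edges})^2/|\cU_B|$ via Cauchy--Schwarz collapses the inequality to $\#\text{non-edges}<\eps\,|\cU_A|\,|\cU_B|$, i.e.\ $|E|>(1-\eps)|\cU_A||\cU_B|$. Substituting back, each $u\in\cU_A$ has $\deg(u)\cdot|\cU_A|\ge\sum_{v:(u,v)\in E}\deg(v)>(1-\eps)|E|>(1-\eps)^2|\cU_A||\cU_B|$, so $\deg(u)>(1-\eps)^2|\cU_B|\ge(1-2\eps)|\cU_B|$; the symmetric computation bounds the $\cU_B$-degrees.

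I expect two points to require care. First, the factorization of the opening paragraph: one has to isolate exactly the fresh query sets of each party and verify that the three query sets are pairwise disjoint precisely for good pairs, and then handle rationality and finiteness so that the integer copy counts are well defined. Second, in item~\ref{item:Degrees} one must not conflate the degree-weighted sums coming from the stopping rule with plain vertex counts when passing through the Cauchy--Schwarz step. The remaining items are routine given the definitions.
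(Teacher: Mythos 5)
Your product-factorization observation and the vertex-multiplication construction of $G$ track the paper's argument (Lemma~\ref{lem:product} and Construction~\ref{const:graph}) quite closely, and Items~1,~\ref{item:Disjoint},~\ref{item:EquivDists}, and~\ref{item:SameSupp} go through along the lines you describe. The argument you give for Item~\ref{item:Degrees}, however, has a genuine gap.

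After summing the $\cU_A$-side bound over all $u\in\cU_A$ you have
\(
\sum_{v\in\cU_B}\deg(v)\bigl(|\cU_A|-\deg(v)\bigr) < \eps\,|E|\,|\cU_A|.
\)
Writing $\bar d_v=|\cU_A|-\deg(v)$ and $m=\sum_v\bar d_v$ for the number of non-edges, this rearranges to $|\cU_A|m-\sum_v\bar d_v^2<\eps|E||\cU_A|$, i.e.\ to a \emph{lower} bound on $\sum_v\bar d_v^2$. Cauchy--Schwarz gives another lower bound $\sum_v\bar d_v^2\geq m^2/|\cU_B|$. Two lower bounds on the same quantity cannot be combined to obtain the claimed upper bound $m<\eps|\cU_A||\cU_B|$; since $\sum_v\bar d_v^2$ enters with a negative sign, substituting the Cauchy--Schwarz bound actually weakens the inequality rather than ``collapsing'' it. Worse, no purely $\cU_A$-side argument can establish $|E|>(1-\eps)|\cU_A||\cU_B|$. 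Concretely, split $\cU_B=B_1\sqcup B_2$ with $|B_1|=|B_2|$, connect every $u\in\cU_A$ to all of $B_1$ and to nothing in $B_2$. For every $u$ the non-neighbours of $u$ all lie in $B_2$ and are isolated, so $\sum_{v\not\sim u}\deg(v)=0$ and the $\cU_A$-side hypothesis holds trivially; yet $|E|=\tfrac12|\cU_A||\cU_B|$ and $\deg(u)=\tfrac12|\cU_B|$, both far from what you need. What rules this graph out in our setting is the $\cU_B$-side condition (for an isolated $v\in B_2$ we have $\sum_{u\not\sim v}\deg(u)=|E|$, which violates $\leq\eps|E|$), and your proof of Item~\ref{item:Degrees} never invokes it.

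The paper avoids this issue by working with minimum degrees on both sides simultaneously (the claim inside Lemma~\ref{lem:highDeg}): take the vertex of smallest normalized degree (after a w.l.o.g.\ comparison $d_A/|\cU_B|\leq d_B/|\cU_A|$), bound $|E^{\sim}(u)|\leq d_A|\cU_A|\leq d_B|\cU_B|$ and $|E^{\not\sim}(u)|>2\eps|\cU_B|d_B$, and derive a contradiction with $|E^{\not\sim}(u)|\leq\eps|E|$ directly, with no need to estimate $|E|$ itself. If you prefer to keep your structure, a valid route is: from the $\cU_B$-side condition $\sum_{u\sim v}\deg(u)>(1-\eps)|E|$ and $\deg(v)|\cU_B|\geq\sum_{u\sim v}\deg(u)$ you get $\deg(v)>(1-\eps)|E|/|\cU_B|$ for every $v\in\cU_B$; substituting this in the $\cU_A$-side summed inequality gives $(1-\eps)\tfrac{|E|}{|\cU_B|}m<\eps|E||\cU_A|$, hence $m<\tfrac{\eps}{1-\eps}|\cU_A||\cU_B|$ and $|E|>\tfrac{1-2\eps}{1-\eps}|\cU_A||\cU_B|$; your final chain $\deg(u)|\cU_A|\geq\sum_{v\sim u}\deg(v)>(1-\eps)|E|>(1-2\eps)|\cU_A||\cU_B|$ then yields the $(1-2\eps)$ bound. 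In either form you must use both sides' conditions, and Cauchy--Schwarz is not the right tool.

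Two smaller remarks. When bounding the total degree of the non-neighbours of $u\in\cU_A$, make sure you pair $|Q_u|\leq n_A$ with Eve's threshold $\eps/n_A$ for \emph{Bob's} side, as you wrote; the asymmetry between $n_A$ and $n_B$ is deliberate and is exactly what makes the two sides' bounds come out equal to $\eps|E|$. And for Item~\ref{item:SameSupp}, deducing it from ``no isolated vertex'' is fine, but you should make explicit that $\cU_A$ contains a vertex for every Alice-view consistent with $(M,P)$ of positive $w$-weight (and not only those already appearing in some good pair), since otherwise the same-support claim is vacuous; the paper handles this by a short pigeonhole argument.
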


Lemma~\ref{lem:combChar} at  the heart of the proof of our main theorem, and so we will first see how to use this lemma before proving it. In particular, we first use Lemma~\ref{lem:combChar} to prove Lemma~\ref{lem:success2}, and then we will prove Lemma~\ref{lem:combChar}.

\paragraph{Proof of Lemma~\ref{lem:success2} using Lemma \ref{lem:combChar}.} Let
$B=B_i,M=M_i,P=P_i$ be as in Lemma~\ref{lem:success2} and $q$ be Bob's $i$'th query which is going to be asked after the last message $m_j$ in $M=M_i=M^j$ is sent to Bob. By
Lemma~\ref{lem:combChar}, the distribution $\GViews(M,P)$ conditioned on getting $B$ as Bob's view is the same as uniformly sampling a random edge $(u,v) \gets E$ in the graph $G$ of Lemma~\ref{lem:combChar} conditioned on $B_v=B$.
We prove Lemma~\ref{lem:success2} even conditioned on choosing any vertex $v$ such that $B_v = B$. For such fixed $v$,  the distribution of Alice's view $A_u$, when we choose a random edge $(u,v')$ conditioned on $v=v'$ is the same as choosing a random neighbor $u \gets N(v)$ of the node $v$ and then selecting Alice's view $A_u$ corresponding to the node $u$. Let $S = \{u
\in \cU_A \text{ such that } q \in A_u \}$. Assuming $d(u)$ denotes the degree of  $w$ for any node $w$ we have
\[
\Pr_{u \gets N(v)}[q \in A_u] \leq \frac{\abs{S}}{d(v)} \leq \frac{\abs{S}}{(1-2\e) \cdot |\cU_A|} \leq \frac{\abs{S} \cdot |\cU_B|}{(1-2\e) \cdot \abs{E}} \leq
\frac{\sum_{u \in S} d(u)}{(1-2\e)^2 \cdot \abs{E}} \leq \frac{\e}{(1-2\e)^2 \cdot n_B} < \frac{3\e}{2n_B}.
\]
First note that proving the above inequality is sufficient for the proof of Lemma~\ref{lem:success2}, because $\BFail_i$ is equivalent to $q \in A_u$. Now, we prove the above inequalities.

The second and fourth inequalities are due to the degree lower bounds of Item~\ref{item:Degrees} in Lemma~\ref{lem:combChar}. The third inequality is because $\abs{E} \leq
\abs{\cU_A} \cdot \abs{\cU_B}$. The fifth inequality is because of the definition of the attacker Eve who asks $\e/n_B$ heavy queries for Alice's view when sampled from $\GViews(M,P)$, as long as such queries exist. Namely, when we choose a random edge $(u,v) \gets E$ (which by Item~\ref{item:EquivDists} of Lemma~\ref{lem:combChar} is the same as sampling $(V_A,V_B) \gets \GViews(M,P)$), it holds that $u \in S$ with probability $\sum_{u \in S} d(u) / |E|$.  But for all $u \in S$ it holds that $q \in Q_u$, and so if $\sum_{u \in S} d(u) / |E| > \eps/n_B$ the query $q$ should have been learned by Eve already and so  $q$ could not be in any set $Q_u$.
The sixth inequality is because we are assuming $\e < 1/10$. \qed

\subsubsection{The Graph Characterization: Proving Lemma~\ref{lem:combChar}} \label{sec:GraphChar}
We prove Lemma~\ref{lem:combChar} by first presenting a ``product characterization'' of the distribution $\GViews(M,P)$.\footnote{A similar observation was made by~\cite{ImpagliazzoRu89}, see Lemma~6.5 there.}
%

%

\begin{lemma}[Product Characterization]
\label{lem:product} For any $(M,P)$ as described in Lemma~\ref{lem:combChar} there exists a distribution $\bfA$ (\resp $\bfB$) over Alice's
 (\resp Bob's) views such that the distribution $\GViews(M,P)$ is identical to the product distribution $(\bfA \times \bfB)$ conditioned on the event $\Good(M,P)$. Namely,

\[
\GViews(M,P) \equiv ((\bfA \times \bfB) \mid \Q(\bfA) \cap \Q(\bfB) \se \Q(P)).
\]

\end{lemma}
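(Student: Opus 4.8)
The plan is to exhibit the distributions $\bfA$ and $\bfB$ explicitly as the marginals of $\Views(M,P)$, and then verify that conditioning the product $(\bfA \times \bfB)$ on the good event $\Good(M,P)$ reproduces $\GViews(M,P)$ on the nose. Concretely, let $\bfA$ be the distribution of $V^i_A$ under $\Views(M,P)$ (equivalently, sample $(r_A, r_B, H)$ uniformly subject to consistency with $(M,P)$ and read off Alice's view), and let $\bfB$ be the marginal of $V^i_B$ likewise. The claim is then that for every pair $(V_A, V_B)$ in the support,
\[
\Pr_{\Views(M,P)}[V^i_A = V_A,\, V^i_B = V_B] \;=\; \Pr_{\bfA}[V_A] \cdot \Pr_{\bfB}[V_B],
\]
i.e.\ the joint distribution $\Views(M,P)$ is \emph{itself} a product distribution. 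Once that is established, conditioning both sides on $\Good(M,P)$ — which by definition is the event $\Q(V_A) \cap \Q(V_B) \subseteq \Q(P)$, determined solely by the pair of views — immediately gives the desired identity, since conditioning a product distribution on an event yields the same thing whether one conditions before or after taking the product.

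**The main step** is therefore proving that $\Views(M,P)$ factors as a product. Here is where I expect the real work: a view $V_A$ of Alice consists of $r_A$, the transcript $M$ (fixed), and the partial function $P^i_A$ of her query-answer pairs; similarly $V_B$ carries $r_B$ and $P^i_B$. The partial functions $P^i_A$ and $P^i_B$ each extend $P$ (since Eve's answers are real), and crucially each is \emph{consistent} with $P$ by construction. I would compute $\Pr_{\Views(M,P)}[V_A, V_B]$ by writing it as $\Pr_{\EXEC}[V^i_A = V_A,\, V^i_B = V_B] / \Pr_{\EXEC}[M, P]$. The numerator factors as (probability over $r_A, r_B$ that the random tapes are right, which is already a product since $\bfr_A, \bfr_B$ are independent) times (probability over $\bfH$ that $H$ is consistent with $P^i_A \cup P^i_B$). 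The heart of the matter is then to apply Lemma~\ref{lem:IncExc}: since $P^i_A$ and $P^i_B$ are consistent finite partial functions,
\[
\Pr_{\bfH}[P^i_A \cup P^i_B] \;=\; \frac{\Pr_{\bfH}[P^i_A]\cdot \Pr_{\bfH}[P^i_B]}{\Pr_{\bfH}[P^i_A \cap P^i_B]}.
\]
The denominator $\Pr_{\bfH}[P^i_A \cap P^i_B]$ is where the subtlety lies: I would argue that on the event that we are conditioning — consistency with $(M,P)$ — we always have $P^i_A \cap P^i_B = P$ (or at least that $\Q(P^i_A) \cap \Q(P^i_B) = \Q(P)$, with the answers agreeing). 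Wait — that is not quite right in general, since Alice and Bob could share an intersection query outside $P$; but note that $P^i_A \cap P^i_B$ as \emph{sets of pairs} is exactly $\{(q, H(q)) : q \in Q^i_A \cap Q^i_B\}$, and this always contains $P$ and could be strictly larger precisely when there is a missed intersection query. The clean way around this is to first decompose each partial function as $P^i_A = P \cup \bar P^i_A$ where $\bar P^i_A$ has domain $\Q(V_A) \setminus \Q(P)$, and similarly for $B$; then $\Pr_{\bfH}[P^i_A] = \Pr_{\bfH}[P]\cdot\Pr_{\bfH}[\bar P^i_A]$ by Proposition~\ref{prop:ProbOfPartial}, and the point is that $\Pr_{\bfH}[P^i_A \cup P^i_B \mid P] = \Pr_{\bfH}[\bar P^i_A]\cdot\Pr_{\bfH}[\bar P^i_B]$ whenever $\bar P^i_A$ and $\bar P^i_B$ have disjoint domains — but when they do \emph{not} have disjoint domains, the pair $(V_A, V_B)$ violates $\Good(M,P)$, so it is automatically assigned probability zero in $\GViews(M,P)$. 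So I would handle the good pairs and the bad pairs separately, or more cleanly, observe the factorization holds for pairs whose "new" queries are disjoint and then condition on $\Good$.

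**So the cleanest route**, which I would pursue, is: (i) define $\bfA, \bfB$ as the marginals of $\Views(M,P)$; (ii) show that for any pair $(V_A, V_B)$ with $\Q(V_A) \cap \Q(V_B) \subseteq \Q(P)$, the joint probability under $\Views(M,P)$ equals the product of marginals — this is the application of Lemma~\ref{lem:IncExc}/Proposition~\ref{prop:ProbOfPartial} together with the independence of $\bfr_A, \bfr_B$, using that on such pairs $P^i_A \cap P^i_B = P$ exactly; (iii) normalize: $\GViews(M,P)$ restricts $\Views(M,P)$ to exactly the set of good pairs and rescales, while $((\bfA \times \bfB)\mid \Good)$ restricts the product to the same set and rescales by the same normalizing constant (the probability of $\Good$ is the same under $\Views(M,P)$ and under $\bfA \times \bfB$ precisely because the two distributions agree on every good pair). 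This last bookkeeping is routine. The one genuine obstacle is step (ii) — correctly arguing that on good pairs the intersection of the two partial functions is exactly $P$ and hence the inclusion-exclusion denominator is $\Pr_{\bfH}[P]$, cleanly separating the "$r$-part" (already product by independence of the tapes) from the "$H$-part" (product by the oracle's independence across inputs). Everything else is normalization.
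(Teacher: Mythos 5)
There is a genuine gap, and it is exactly in the step you flag as routine. You define $\bfA$ and $\bfB$ as the \emph{marginals} of $\Views(M,P)$, and assert first that $\Views(M,P)$ is itself a product distribution, and later (after noticing this can't be literally right) that $\Views(M,P)$ and $\bfA\times\bfB$ ``agree on every good pair.'' Neither claim holds. The distribution $\Views(M,P)$ is genuinely not a product --- this is the dependency issue the introduction of the paper spends a whole subsection discussing --- and the marginals of $\Views(M,P)$ are the wrong objects to use precisely because they are contaminated by the non-product behavior on the bad pairs. Concretely, write the joint mass on a good pair as $\Pr_{\Views(M,P)}[V_A,V_B] = \gamma\cdot\alpha(V_A)\cdot\beta(V_B)$, where (as you correctly derive via Lemma~\ref{lem:IncExc}) $\alpha(V_A)=\Pr[\bfr_A=r_A]\cdot\Pr_\bfH[P_A\setminus P]$ and similarly for $\beta$. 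The marginal $\Pr_{\Views(M,P)}[V_A]=\sum_{V_B}\Pr_{\Views(M,P)}[V_A,V_B]$ includes terms from bad $V_B$, and those terms carry the non-factoring correction $\Pr_\bfH[(P_A\cap P_B)\setminus P]^{-1}$; moreover even restricted to good partners, the set of good $V_B$ depends on $V_A$. So $\Pr_{\Views(M,P)}[V_A]$ is \emph{not} proportional to $\alpha(V_A)$, and hence $\Pr_\bfA[V_A]\cdot\Pr_\bfB[V_B]$ is not proportional to $\alpha(V_A)\beta(V_B)$ on the good pairs, which is exactly what your step~(iii) needs. The same objection applies if you instead take the marginals of $\GViews(M,P)$: there $\Pr[V_A]\propto\alpha(V_A)\cdot\sum_{V_B\ \mathrm{good\ with\ } V_A}\beta(V_B)$, and the sum depends on $V_A$.

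What the paper does instead --- and what you are very close to, since you already have the key inclusion--exclusion computation in hand --- is to \emph{construct} $\bfA$ and $\bfB$ directly from the factors, not from any marginal. That is, let $\bfA$ be the distribution on Alice's views (consistent with $(M,P)$) with mass proportional to $\alpha(V_A)=\Pr[\bfr_A=r_A]\cdot\Pr_\bfH[P_A\setminus P]$, and $\bfB$ likewise proportional to $\beta(V_B)$. Then on every good pair, $\Pr_{\GViews(M,P)}[V_A,V_B]$ is by your computation proportional to $\alpha(V_A)\beta(V_B)\propto\Pr_\bfA[V_A]\cdot\Pr_\bfB[V_B]$, and since $\GViews(M,P)$ and $(\bfA\times\bfB)\mid\Good(M,P)$ are both supported exactly on the good pairs and agree up to a normalizing constant, they are equal. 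Operationally this $\bfA$ has a clean interpretation --- it is the distribution you get by sampling Alice's randomness and her oracle answers \emph{ignoring any interaction with Bob's queries} (i.e.\ pretending Alice and Bob query independent copies of the oracle, both consistent with $(M,P)$) --- but it is not the marginal of the real joint distribution, because in the real joint distribution the shared oracle creates exactly the correlation you are trying to isolate. So: keep your steps, but replace the definition of $\bfA,\bfB$ in step~(i) with the explicit densities $\alpha,\beta$ from your inclusion--exclusion factorization, and then step~(iii) goes through honestly.
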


\begin{proof}
Suppose $(V_A,V_B) \gets \Views(M,P)$ is such that $Q_A \cap Q_B \se Q$ where $Q_A = \Q(V_A), Q_B = \Q(V_B)$, and $Q = \Q(P)$. For such $(V_A,V_B)$ we will show that  $\Pr_{\GViews(M,P)}[(V_A,V_B)] = \alpha(M,P) \cdot \alpha_A \cdot \alpha_B$ where: $\alpha(M,P)$ only depends on $(M,P)$, $\alpha_A$  only depends on
$V_A$, and $\alpha_B$ only depends only on $V_B$. This means that if we let $\bfA$ be the
distribution over $\Supp(V_A)$ such that $\Pr_{\bfA}[V_A]$ is proportional to $\alpha_A$ and let $\bfB$ be the distribution  over $\Supp(V_B)$  such that
$\Pr_{\bfB}[V_B]$ is proportional to $\alpha_B$, then $\GViews(M,P)$ is proportional (and hence equal to) the
distribution  $((\bfA \times \bfB) \mid Q_A \cap Q_B \se Q)$.

In the following we will show that $\Pr_{\GViews(M,P)}[(V_A,V_B)] = \alpha(M,P) \cdot \alpha_A \cdot \alpha_B$. Since we are assuming $Q_A \cap Q_B \se Q$ (\ie that the event $\Good(M,P)$ holds over $(V_A,V_B)$) we have:
\begin{equation} \label{eq:1}
\Pr_{\Views(M,P)}[(V_A,V_B)] = \Pr_{\Views(M,P)}[(V_A,V_B) \wedge \Good(M,P)] =  \Pr_{\Views(M,P)}[ \Good(M,P) ]
\Pr_{\GViews(M,P)}[(V_A,V_B)].
\end{equation}

On the other hand, by definition of conditional probability we have\footnote{Note that $V_A, V_B$ uniquely determine $M, P$ so  $\Pr[V_A, V_B, M, P] = \Pr[V_A,V_B]$ holds for consistent $V_A, V_B, M, P$, but we choose to write the full event's description for clarity.}
\begin{equation} \label{eq:2}
  \Pr_{\Views(M,P)}[(V_A,V_B)] =
\frac{\Pr_{\EXEC}[(V_A,V_B,M,P) ]}{\Pr_{\EXEC}[(M,P) ]}.
\end{equation}

Therefore, by Equations (\ref{eq:1}) and (\ref{eq:2}) we have

\begin{equation} \label{eq:3}
\Pr_{\GViews(M,P)}[(V_A,V_B)] = \frac{\Pr_{\EXEC}[(V_A,V_B,M,P) ]}{\Pr_{\EXEC}[(M,P) ] \cdot \Pr_{\Views(M,P)}[ \Good(M,P) ]}.
\end{equation}

The denominator of the righthand side of Equation (\ref{eq:3}) only depends on $(M,P)$ and so we can take $\beta(M,P) = \Pr_{\EXEC}[(M,P) ] \cdot \Pr_{\Views(M,P)}[ \Good(M,P) ]$. In the following we analyze the numerator.

Recall that for a partial function $F$, by $\Pr_\EXEC[F]$ we denote the probability that $H$ from the sampled execution $(r_A,r_B,H) \gets \EXEC$ is consistent with $F$; namely, $\Pr_\EXEC[F] = \Pr_{\bfH}[F]$ (see Definition~\ref{def:RandOr}).

Let $P_A$ (\resp $P_B$) be the set of oracle query-answer pairs in $V_A$ (\resp $V_B$).
 We claim that:

\[
\Pr_{\EXEC}[(V_A,V_B,M,P) ]=
\Pr[\bfr_A=r_A] \cdot \Pr[\bfr_B = r_B] \cdot \Pr_\EXEC[P_A
\cup P_B \cup P].
\]

The reason is that the necessary and sufficient condition that
 $(V_A,V_B,M,P)$ happens in the execution of the system is that when  we sample  a uniform $(r_A, r_B, H)$, $r_A$ equals Alice's randomness,
 $r_B$ equals Bob's randomness, and $H$ is consistent with $P_A
\cup P_B \cup P$. These conditions implicitly imply that Alice and Bob will
indeed produce the transcript $M$ as well.

Now by Lemma~\ref{lem:IncExc} and  $(P_A \cap P_B) \sm P = \es$ we have $\Pr_\EXEC[P_A
\cup P_B \cup P]$ equals to:

$$  \Pr_\EXEC[P] \cdot \Pr_\EXEC[(P_A
\cup P_B ) \sm P] =
\frac{\Pr_\EXEC[P]\cdot \Pr_\EXEC[P_A \sm P] \cdot \Pr_\EXEC[P_B \sm P] }{ \Pr_\EXEC[(P_A \cap P_B) \sm P]}
=\Pr_\EXEC[P]\cdot \Pr_\EXEC[P_A \sm P] \cdot \Pr_\EXEC[P_B \sm P].$$

Therefore, we get:
\[
\Pr_{\GViews(M,P)}[(V_A,V_B)] = \frac{\Pr[\bfr_A=r_A] \cdot \Pr[\bfr_B = r_B] \cdot \Pr_\EXEC[P]  \cdot \Pr_\EXEC[P_A \sm P]  \cdot \Pr_\EXEC[P_B \sm P]}{\beta(M,P)}.
\]
and so we can take $\alpha_A = \Pr[\bfr_A=r_A] \cdot \Pr_\EXEC[P_A \sm P] $, $\alpha_B = \Pr[\bfr_B=r_B] \cdot \Pr_\EXEC[P_B \sm P]$, and $\alpha(M,P) = \Pr_\EXEC[P] / \beta(M,P)$.
\end{proof}

 \paragraph{Graph Characterization.} The product characterization of Lemma~\ref{lem:product} implies that we can think of $\GViews(M,P)$ as a
distribution over random edges of some bipartite graph
$G=(\cU_A,\cU_B,E)$ defined based on $(M,P)$ as follows.

\begin{construction}[Labeled graph $G=(\cU_A,\cU_B,E)$] \label{const:graph} Every node $u \in \cU_A$ will have a corresponding view $A_u$ of Alice that is in the
support of the distribution $\bfA$  from Lemma~\ref{lem:product}.
We also let the number of nodes corresponding to a
view $V_A$ be proportional to $\Pr_{\bfA}[\bfA=V_A]$, meaning that $\bfA$ corresponds to the uniform distribution over the
left-side vertices $\cU_A$.
Similarly, every node $v\in \cU_B$ will have a corresponding view $B_v$  of Bob such that $\bfB$
corresponds to the uniform distribution over $\cU_B$.
Doing this is  possible because the probabilities $\Pr_{\bfA}[\bfA=V_A]$ and  $\Pr_{\bfB}[\bfB=V_B]$  are all rational numbers.
More formally, since in Definition~\ref{def:RandOr} of random oracles we assumed  $\bfH(x)=y$ to be rational for all $(x,y)$,  the probability space $\GViews(M,P)$ only includes rational probabilities. Thus, if $W_1,\dots,W_N$ is the list of all possible views for Alice when sampling $(V_A,V_B) \gets \GViews(M,P)$, and if $\Pr_{(V_A,V_B) \gets \GViews(M,P)} [W_j=V_A] = c_j/d_j$ where $c_1,d_1,\dots,c_N,d_N$ are all integers, we can put
$(c_j/d_j) \cdot \prod_{i \in [N]} {d_i} $
many nodes in $\cU_A$ representing the view $W_j$. Now if we sample a node  $u \gets \cU_A$ uniformly and take $A_u$ as Alice's view, it would be the same as sampling $(V_A,V_B) \gets \GViews(M,P)$ and taking $V_A$.
Finally, we define $Q_u = Q(A_u) \setminus Q(P)$  for $u \in \cU_A$ to be the
set of queries \emph{outside of $\Q(P)$} that were asked by Alice in the view $A_u$. We define $Q_v = Q(B_u) \setminus
Q(P)$ similarly. We put an edge  between the nodes $u$ and $v$ (denoted by $u \sim v$) in $G$ if and only if $Q_u \cap Q_v
= \es$.
\end{construction}

It turns out that the graph $G$
is \emph{dense} as formalized in the next lemma.
%

\begin{lemma} \label{lem:highDeg}
Let $G=(\cU_A,\cU_B,E)$ be the graph of Construction \ref{const:graph}. Then for every $u \in \cU_A, d(u) \geq |\cU_B|\cdot (1-2\e)$ and for every $v \in \cU_B$,
$d(v) \geq |\cU_A|\cdot (1-2\e)$ where $d(w)$ is the degree of the vertex $w$.
\end{lemma}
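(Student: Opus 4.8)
The plan is to prove the degree lower bound directly from the "lightness" property guaranteed by the attacker's stopping rule, using a simple counting (double-counting) argument over the edges incident to a fixed vertex. Fix a vertex $v \in \cU_B$ (the argument for $u \in \cU_A$ is symmetric by swapping the roles of $n_A$ and $n_B$). By Item~\ref{item:Disjoint} of Construction~\ref{const:graph}, the non-neighbors of $v$ are exactly those $u \in \cU_A$ with $Q_u \cap Q_v \ne \es$, i.e., those $u$ such that some query $q \in Q_v$ lies in $Q_u$. So I would write
\[
|\cU_A| - d(v) \;=\; \bigl|\{u \in \cU_A : Q_u \cap Q_v \ne \es\}\bigr| \;\le\; \sum_{q \in Q_v} \bigl|\{u \in \cU_A : q \in Q_u\}\bigr|.
\]
Since the view $B_v$ is consistent with $(M,P)$ and every query in $Q_v = \Q(B_v) \sm \Q(P)$ is one of Bob's at most $n_B$ oracle queries that Eve did not already make, we have $|Q_v| \le n_B$. (This is precisely where the seminormal-form bound on Bob's query count, together with $Q_v$ being defined modulo $\Q(P)$, is used.)

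**The key step** is then to bound each term $\bigl|\{u \in \cU_A : q \in Q_u\}\bigr| / |\cU_A|$ by $\eps/n_B$. This is exactly the contrapositive of the attacker's stopping condition: recall from Construction~\ref{const:Eve} that Eve keeps asking queries until no query $q \notin \Q(P_E)$ is $(\eps/n_B)$-heavy for Alice's view under $\GViews(M,P)$. By Items~\ref{item:EquivDists} and the fact (Construction~\ref{const:graph}) that $\bfA$ is the uniform distribution over $\cU_A$, for any $q \notin \Q(P)$,
\[
\Pr_{u \gets \cU_A}[q \in Q_u] \;=\; \Pr_{u \gets \cU_A}[q \in \Q(A_u)] \;=\; \Pr_{(V_A,V_B) \gets \GViews(M,P)}[q \in \Q(V_A)] \;<\; \frac{\eps}{n_B},
\]
where the strict inequality holds because Eve has terminated her learning phase for the round corresponding to $(M,P)$, so no such heavy query remains. (For $q \in \Q(P)$ the set $\{u : q \in Q_u\}$ is empty by definition of $Q_u$, so those terms contribute nothing.) Plugging this in gives $|\cU_A| - d(v) \le n_B \cdot \frac{\eps}{n_B} \cdot |\cU_A| = \eps |\cU_A| \le 2\eps|\cU_A|$, i.e. $d(v) \ge (1 - 2\eps)|\cU_A|$ — I actually get the sharper bound $(1-\eps)$, but $(1-2\eps)$ suffices and matches the statement. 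The argument for $u \in \cU_A$ is identical with $\eps/n_A$ and $|Q_u| \le n_A$.

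**The main obstacle**, and the point that needs care, is the precise correspondence between "heaviness under $\GViews(M,P)$" as used in the attacker's stopping rule and the quantity $\Pr_{u \gets \cU_A}[q \in Q_u]$ appearing in the counting bound. I would make sure to invoke Lemma~\ref{lem:product} / Construction~\ref{const:graph} to justify that (i) picking a uniform left-vertex $u$ and reading off $A_u$ is distributed exactly as the $V_A$-marginal of $\GViews(M,P)$, and (ii) $q \in Q_u \iff (q \in \Q(A_u)$ and $q \notin \Q(P))$, so that for $q \notin \Q(P)$ the events coincide. A secondary subtlety is that the graph $G$ in Lemma~\ref{lem:combChar} is the one built at the end of a round \emph{after} Eve has finished all her learning queries for that round, so the "no heavy query remains" property is exactly in force for this $(M,P)$; I would state this alignment explicitly. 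Everything else is elementary double counting, so no further heavy machinery is needed.
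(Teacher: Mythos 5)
There is a genuine error in the key step. You equate $\Pr_{u \gets \cU_A}[q \in Q_u]$ with $\Pr_{(V_A,V_B) \gets \GViews(M,P)}[q \in \Q(V_A)]$, justifying this by the facts that $\bfA$ is uniform over $\cU_A$ and that $\GViews(M,P)$ is sampled by picking a random edge. But those two facts describe two \emph{different} distributions over Alice-vertices: the uniform distribution over $\cU_A$ is the distribution $\bfA$ from Lemma~\ref{lem:product}, whereas the Alice-marginal of $\GViews(M,P)$ (obtained by picking a uniform edge $(u,v)\gets E$ and outputting $u$) is the \emph{degree-weighted} distribution, with mass $d(u)/|E|$ on vertex $u$. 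These coincide only when $G$ is biregular, which is exactly the kind of uniformity one is trying to establish. The attacker's stopping rule, and hence the ``no heavy query remains'' hypothesis you invoke, is stated for the degree-weighted distribution, so it gives you $\sum_{u:\,q\in Q_u} d(u)/|E| < \eps/n_B$, \emph{not} $|\{u : q \in Q_u\}|/|\cU_A| < \eps/n_B$. Your double-counting step needs the latter.

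This is precisely the gap the paper's proof is careful to close, in two stages: first it translates the stopping rule into the edge-weighted statement $|E^{\not\sim}(w)| \le \eps|E|$ for every vertex $w$ (which really is what the heaviness criterion gives, via pigeonhole over $Q_w$), and then it proves a separate purely combinatorial sub-claim that any nonempty bipartite graph with $|E^{\not\sim}(w)| \le \eps|E|$ for all $w$ has minimum one-sided degree ratio at least $1-2\eps$. The factor-of-$2$ loss is exactly the cost of passing from the edge-weighted condition to the uniform (vertex-counting) degree bound. The fact that your argument appears to yield the sharper constant $(1-\eps)$ should itself be a warning sign: it is a symptom of having silently replaced the degree-weighted distribution by the uniform one. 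To repair your proof you would need to supply the paper's combinatorial sub-claim (or an equivalent bootstrapping argument) to bridge the two distributions; the direct double-counting as written does not go through.
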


\begin{proof} First note that Lemma~\ref{lem:product} and the description of Construction \ref{const:graph} imply that the distribution $\GViews(M,P)$ is equal to the distribution
obtained by letting $(u,v)$ be a random edge of the graph $G$ and choosing $(A_u, B_v)$. We will make use of this property.

We first show that for every $w \in \cU_A$, $\sum_{v \in \cU_B, w \not \sim v} d(v) \leq \e \cdot \abs{E}$. The reason is that
the probability of vertex $v$ being chosen when we choose a random edge is $\frac{d(v)}{\abs{E}}$ and if $\sum_{v \in
\cU_B, w \not \sim v} \frac{d(v)}{\abs{E}} > \e$, it means that $\Pr_{(u,v) \gets E}[Q_w \cap Q_v \neq \es] \geq
\e$. Hence, because $\abs{Q_w} \leq n_A$, by the pigeonhole principle there would exist $q \in Q_w$ such that  $\Pr_{(u,v)
\gets E}[q \in Q_v] \geq \e/n_A$. But this is a contradiction, because if that holds, then $q$ should have been in $P$ by the definition of
the attacker Eve of Construction~\ref{const:Eve}, and hence it could not be in $Q_w$. The same argument shows that for every $w \in \cU_B$, $\sum_{u \in \cU_A, u \not
\sim w} d(u) \leq \e \abs{E}$. Thus, for every vertex $w \in \cU_A \cup \cU_B$, $\abs{E^{\not\sim}(w)} \leq \e\abs{E}$ where
$E^{\not\sim}(w)$ denotes the set of edges that do not contain any neighbor of $w$ (i.e., $E^{\not \sim}(w) =
\{(u,v) \in E \mid u \not \sim w \wedge w \not \sim v \}$). The following claim proves Lemma~\ref{lem:highDeg}.

\begin{claim}
For $\e \leq 1/2$, let $G=(\cU_A,\cU_B,E)$ be a nonempty bipartite graph where $\abs{E^{\not \sim}(w) } \leq \e
\abs{E}$ for all vertices $w \in \cU_A \cup \cU_B$. Then $d(u) \geq |\cU_B| \cdot (1-2\e)$  for all $u \in \cU_A$ and $d(v) \geq
|\cU_A| \cdot (1-2\e)$ for all $v \in \cU_B$.
\end{claim}

\begin{proof}
Let $d_A = \min \{ d(u) \mid u \in \cU_A \}$ and $d_B = \min \{ d(v) \mid v \in \cU_B \}$.  By switching the left and right sides if necessary, we may assume without loss of generality that
\begin{equation} \label{eq:4}
\frac{d_A}{\abs{\cU_B}} \leq \frac{d_B}{\abs{\cU_A}}.
\end{equation}
So it suffices to prove that $1-2\e \leq
\frac{d_A}{\abs{\cU_B}}$. Suppose $1-2\e > \frac{d_A}{\abs{\cU_B}}$, and let $u \in \cU_A$ be the vertex that $d(u) = d_A <
(1-2\e) \abs{\cU_B}$. Because for all $v\in \cU_B$ we have $d(v) \leq \abs{\cU_A}$, thus, using Inequality (\ref{eq:4}) we get that $\abs{
E^{\sim} (u)} \leq d_A \abs{\cU_A}   \leq d_B \abs{\cU_B}$ where $E^{\sim} (u) = E \setminus E^{\not \sim} (u) $. On the
other hand since we assumed that $d(u) < (1-2\e)|\cU_B|$, there are more than $2\e|\cU_B|d_B$ edges in $E^{\not\sim}(u)$,
meaning that $\abs{E^{\sim}(u)} < \abs{E^{\not \sim} (u) }/(2\e)$. But this implies
\[
|E^{\not \sim} (u) | \leq \e|E|=\e\left(|E^{\not \sim} (u) |+ |E^{\sim}(u)|\right) <
\e|E^{\not \sim} (u) | +  |E^{\not \sim} (u) |/2 ,
\]
which is a contradiction for $\e<1/2$.
\end{proof}
Finally we prove Item \ref{item:SameSupp}. Namely, for every $(A,B) \gets \Views(\bfV_A,\bfV_B)$, there is some $B'$ such that $(A,B')$ is in the support set of $\GViews(\bfV_A,\bfV_B)$. The latter is equivalent   to finding $B'$ that is consistent with $M,P$ and that $\cQ(A) \cap \cQ(B) \se \cQ(P)$. For sake of contradiction suppose this is not the case. Therefore, if we sample $B'$ from the distribution of $\bfV_B$ conditioned on $P,M$ then there is always an element in $\cQ(A) \cap \cQ(B')$ that is outside of $cQ(P)$. By the pigeonhole principle, one of the queries in $\cQ(A) \sm \cQ(P)$ would be at least $1/n_A$-heavy for the distribution $\GViews(\bfV_A,\bfV_B)$ (in particular the $\bfV_B$ part). But this contradicts how the algorithm of Eve operates.
\end{proof}

\begin{remark}[Sufficient Condition for Graph Characterization] \label{rem:GraphChar}
It can be verified that the proof of the graph characterization of Lemma~\ref{lem:combChar} only requires the following: At the end of the rounds, Eve has learned all the $(\eps/n_B)$-heavy queries for Alice and all the $(\eps/n_A)$-heavy queries for Bob with respect to the distribution $\GViews(M,P)$.
More formally, all we need is that when Eve stops asking more queries, if there is any query $q$ such that
$$\Pr_{(V_A,V_B) \gets \GViews(M,P)}[q \in \Q(V_A)] \geq \frac{\eps}{n_B} \text{~~~~or~~~~} \Pr_{(V_A,V_B) \gets \GViews(M,P_)}[q \in \Q(V_B)] \geq \frac{\eps}{n_A}$$
then $q \in \Q(P)$. In particular,  Lemma~\ref{lem:combChar}   holds even if Eve arbitrarily asks queries that are \emph{not} necessarily heavy at the time being asked or chooses to ask the heavy queries in an arbitrary (different than lexicographic) order.
\end{remark}

\subsubsection{Eve Finds the Key: Proving Lemma~\ref{lem:EveFinds}} \label{sec:findSecret}

Now, we turn to the question of finding the secret. Theorem~6.2 in~\cite{ImpagliazzoRu89} shows that once one finds all
the intersection queries, with $O(n^2)$ more queries they can also find the actual secret. Here we use the properties
of our attack to show that we can do so even without asking more queries.

First we need to specify and prove the following corollary of of Lemma~\ref{lem:combChar}.

\begin{corollary}[Corollary of Lemma~\ref{lem:combChar}] \label{cor:statClose}
Let Eve be the eavesdropping adversary of Construction~\ref{const:Eve} using parameter $\eps$, and $\Pr_{\Views(M^i,P^i_E)}[\Good(M^i,P^i_E)]>0$ where $(M^i,P^i_E)$ is the view of Eve by the end of round $i$ (when she is also done with learning queries). For the fixed $i,M^i,P^i_E$, let $(\bfV_A, \bfV_B)$ be the joint view of Alice and Bob as sampled from $\GViews(M^i,P^i_E)$. Then for some product distribution $(\bfU_A \times \bfU_B)$ (where $\bfU_A \times \bfU_B$ could also depend on  $i,M^i,P^i_E$) we have:
\begin{enumerate}
\item $\Delta((\bfV_A,\bfV_B) , (\bfU_A \times \bfU_B)) \leq 2\eps$.
\item For \emph{every} possible $(A,B) \gets \Views(\bfV_A,\bfV_B)$ (which by Item \ref{item:SameSupp} is the same as the set of all $(A,B) \gets \GViews(\bfV_A,\bfV_B)$) we have:
\begin{align*}
\Delta((\bfV_A \mid \bfV_B=B) , \bfU_A) &\leq 2\eps, \\
\Delta((\bfV_B \mid \bfV_A=A) , \bfU_B) &\leq 2\eps.
\end{align*}
\end{enumerate}
\end{corollary}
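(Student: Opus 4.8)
The plan is to read both claims straight off the random-edge picture supplied by Lemma~\ref{lem:combChar}, with the natural choice $\bfU_A \eqdef \bfA$ and $\bfU_B \eqdef \bfB$, where $\bfA,\bfB$ are the distributions from Lemma~\ref{lem:product}; concretely, $\bfU_A$ is the law of $A_u$ for $u$ uniform in $\cU_A$, and likewise $\bfU_B$ is the law of $B_v$ for $v$ uniform in $\cU_B$. With this choice, $(\bfV_A,\bfV_B) \gets \GViews(M^i,P^i_E)$ is the image of $\mathrm{Unif}(E)$ under the deterministic map $(u,v) \mapsto (A_u,B_v)$ (Item~\ref{item:EquivDists} of Lemma~\ref{lem:combChar}), while $(\bfU_A \times \bfU_B)$ is the image of $\mathrm{Unif}(\cU_A \times \cU_B)$ under the \emph{same} map. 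Since statistical distance cannot increase under a common (possibly deterministic) map — a consequence of Lemma~\ref{lem:SDEquivals} applied to a detector composed with that map — it will suffice to prove the two bounds at the level of the vertex graph.

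For Item~1, I would bound $\Delta(\mathrm{Unif}(E),\mathrm{Unif}(\cU_A \times \cU_B))$ by writing it as $\sum_{(u,v)}\bigl(\mathrm{Unif}(\cU_A\times\cU_B)(u,v) - \mathrm{Unif}(E)(u,v)\bigr)_+$; on an edge one has $1/|E| \ge 1/(|\cU_A||\cU_B|)$, so the positive part lives entirely on the non-edges and the sum equals $1 - |E|/(|\cU_A||\cU_B|)$. The degree lower bound of Item~\ref{item:Degrees} gives $|E| = \sum_{u} d(u) \ge (1-2\eps)|\cU_A||\cU_B|$, hence this is at most $2\eps$, which transfers to $\Delta((\bfV_A,\bfV_B),(\bfU_A\times\bfU_B))$.

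For Item~2, fix a target $(A,B)$ in the common support of $\Views(M^i,P^i_E)$ and $\GViews(M^i,P^i_E)$ (these agree by Item~\ref{item:SameSupp}), and let $T_B = \set{v \in \cU_B : B_v = B}$, which is nonempty since $B$ lies in the support of $\bfB$. Conditioning the uniform-edge distribution on $\bfV_B = B$ (legitimate, as $\bfV_B$ is a function of the chosen edge) is the same as picking a uniformly random edge incident to $T_B$; decomposing this by first choosing the $\cU_B$-endpoint $v \in T_B$ with probability proportional to $d(v)$ and then a uniform neighbor of $v$ in $\cU_A$, I get that $(\bfV_A \mid \bfV_B = B)$ is the image under $u \mapsto A_u$ of the mixture $\sum_{v \in T_B}\tfrac{d(v)}{\sum_{w \in T_B} d(w)}\,\mathrm{Unif}(N(v))$. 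The key point is that for \emph{each individual} $v$ one has $\Delta(\mathrm{Unif}(N(v)),\mathrm{Unif}(\cU_A)) = 1 - d(v)/|\cU_A| \le 2\eps$, again by Item~\ref{item:Degrees}; since the statistical distance from a fixed distribution is convex in its other argument (from Lemmas~\ref{lem:averageSD} and~\ref{lem:project}, viewing the mixture index as an auxiliary variable), the mixture is within $2\eps$ of $\mathrm{Unif}(\cU_A)$, and the map $u \mapsto A_u$ sends $\mathrm{Unif}(\cU_A)$ to $\bfU_A$ while preserving the bound. The statement for $(\bfV_B \mid \bfV_A = A)$ is symmetric.

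The step I expect to be the real crux — and the one to be careful about — is Item~2: the naive route of estimating the pointwise ratio of $(\bfV_A \mid \bfV_B = B)$ to $\bfU_A$ at the vertex level only yields the weaker constant $2\eps/(1-2\eps)$, and it is the decomposition into the per-endpoint distributions $\mathrm{Unif}(N(v))$ followed by convexity that recovers the clean $2\eps$. Everything else is bookkeeping: confirming $E$, $\cU_A$, $\cU_B$, and $T_B$ are nonempty and that the various conditionings commute with the pushforward, all of which follow from the hypothesis $\Pr_{\Views(M^i,P^i_E)}[\Good(M^i,P^i_E)] > 0$ together with the structure guaranteed by Lemma~\ref{lem:combChar}.
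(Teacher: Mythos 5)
Your proof is correct and follows the same plan as the paper: choose $\bfU_A,\bfU_B$ to be the pushforwards of the uniform distributions over $\cU_A,\cU_B$ from the graph characterization of Lemma~\ref{lem:combChar}, and extract all three bounds from the degree condition of Item~\ref{item:Degrees}. The paper's own proof is a one-line assertion; your edge-counting for Item~1 and, for Item~2, the decomposition of the conditional law into the per-endpoint distributions $\mathrm{Unif}(N(v))$ followed by convexity of statistical distance, are precisely the details needed to make it rigorous with the clean constant $2\eps$.
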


\begin{proof}
In the graph characterization $G=(\cU_A,\cU_B,E)$ of $\GViews(M,P)$ as described in Lemma~\ref{lem:combChar}, every vertex is connected to $1-2\eps$ fraction of the vertices of the other section, and consequently the graph $G$ has $1-2\eps$ fraction of  the edges of the complete bipartite graph with the same nodes $(\cU_A,\cU_B)$. Thus, if we take $\bfU_A$ the uniform distribution over $\cU_A$ and $\bfU_B$  the uniform distribution over $\cU_B$, they satisfy all the three inequalities.
\end{proof}

The process of sampling the components of the system can also be done in a ``reversed''  order where we first decide about whether some events are going to hold or not and then sample the other components conditioned on that.

\paragraph{Notation.} In the following let $s(V)$ be the output determined by any view $V$ (of Alice or Bob)

\begin{construction} \label{const:sample} Sample Alice, Bob, and Eve's views as follows.
\begin{compactenum}
  \item Toss a coin $b$ such that $b=1$ with probability $\Pr_\EXEC[\Good]$.

  \item If $b=1$:
    \begin{compactenum}
    \item Sample Eve's final view $(M,P)$ conditioned  on $\Good$.
    \item \label{item:b1}

      \begin{compactenum}
      \item Sample  views of Alice and Bob $(V_A,V_B)$ from $\GViews(M,P)$. 
      \item Eve samples $(V'_A,V'_B) \gets \GViews(M,P)$, and outputs $s_E=s(V'_A)$.
      \end{compactenum}
    \end{compactenum}

  \item If $b=0$:
    \begin{compactenum}
    \item Sample Eve's final view $(M,P)$ conditioned on $\neg \Good$.
    \item \label{item:b0}
      \begin{compactenum}
      \item Sample  views  $(V_A,V_B) \gets (\Views(M,P) \mid \neg\Good)$. 
      \item Eve does the same as case $b=1$ above.
      \end{compactenum}
    \end{compactenum}

\end{compactenum}
\end{construction}

In other words, $b=1$ if and only if $\Good$ holds over the real views of Alice and Bob. We might use $b=1$ and  $\Good$  interchangeably (depending on which one is conceptually more convenient).

The attacker Eve of Construction~\ref{const:Eve}   samples views $(V'_A,V'_B)$ from $\GViews(M,P)$ in \emph{both} cases of $b=0$ and $b=1$, and that is exactly what the Eve of Construction~\ref{const:sample} does as well, and the pair $(s_E,s(V_B))$  in Constructions~\ref{const:Eve} vs.~\ref{const:sample}  are  identically distributed. Therefore, our goal is to  lower bound the probability of getting $s_E= s(V_B)$ where $s_E=s(V'_A)$ is the output of $V'_A$ and $s(V_B)$ is the output of $V_B$ (in Construction \ref{const:sample}). We would show that this event happens in Step~\ref{item:b1} with sufficiently large probability. (Note that it is also possible that $s_E=s(V_B)$ happens in Step~\ref{item:b0} as well, but we ignore this case.)

In the following, let $\rho(M,P)$ and $ \win(M,P)$ be defined as follows.
\begin{align*}
\rho(M,P) &= \Pr_{(V_A,V_B) \gets \GViews(M,P)}[s(V_A)=s(V_B)] \\
\win(M,P) &= \Pr_{(V_A,V_B) \gets \GViews(M,P), (V'_A,V'_B) \gets \GViews(M,P)}[s(V'_A) = s(V_B)]
\end{align*}
 where $(V_A,V_B)$ and $(V'_A,V'_B)$ are independent samples.

We will prove Lemma~\ref{lem:EveFinds} using the following two claims.

\begin{claim} \label{clm:twoSamples}
   Suppose  $P$ denotes Eve's set of oracle query-answer pairs after all of the messages in $M$ are sent. Assuming the probability of $\Good(M, P)$ is nonzero conditioned on $(M,P)$, for every $\eps<1/10$ used by Eve's algorithm of Construction~\ref{const:Eve} it  holds that
  $$ \win(M,P) \geq \rho(M,P) - 4\eps.$$
\end{claim}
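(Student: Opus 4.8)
The plan is to write both $\rho(M,P)$ and $\win(M,P)$ as averages over $B\gets\bfV_B$ of a single quantity that involves only Alice's view, and then bound their difference pointwise in $B$ using Corollary~\ref{cor:statClose}.

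First I would unwind the definition of $\win(M,P)$. Since it involves two \emph{independent} draws from $\GViews(M,P)$ and the event $s(V'_A)=s(V_B)$ depends on the first draw only through $V_B$ and on the second only through $V'_A$, we have $\win(M,P)=\Pr[s(\bfV_A)=s(\bfV_B)]$, where now $\bfV_A$ and $\bfV_B$ denote \emph{independent} copies of, respectively, the Alice‑ and Bob‑marginals of $\GViews(M,P)$. Writing both quantities by first conditioning on Bob's view,
$$\win(M,P)=\Ex_{B\gets\bfV_B}\Bigl[\Pr_{V_A\gets\bfV_A}[s(V_A)=s(B)]\Bigr],\qquad \rho(M,P)=\Ex_{B\gets\bfV_B}\Bigl[\Pr_{V_A\gets(\bfV_A\mid\bfV_B=B)}[s(V_A)=s(B)]\Bigr];$$
the only difference is that $\rho$ uses the true conditional $(\bfV_A\mid\bfV_B=B)$ whereas $\win$ replaces it by the marginal $\bfV_A$.

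Second, the key step. Fix any $B$ in the support of $\bfV_B$ (equivalently, by Item~\ref{item:SameSupp} of Lemma~\ref{lem:combChar}, any view $B$ that Bob can have at all), and let $\bfU_A\times\bfU_B$ be the product distribution furnished by Corollary~\ref{cor:statClose}. Then $\Delta\bigl((\bfV_A\mid\bfV_B=B),\bfU_A\bigr)\le 2\eps$ by Item~2 of the corollary, and $\Delta(\bfV_A,\bfU_A)\le 2\eps$ by Item~1 of the corollary together with Lemma~\ref{lem:project}; hence by the triangle inequality (Lemma~\ref{lem:triangle}), $\Delta\bigl((\bfV_A\mid\bfV_B=B),\bfV_A\bigr)\le 4\eps$. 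Since $\{V:s(V)=s(B)\}$ is an event over Alice's view, Lemma~\ref{lem:SDEquivals} gives $\Pr_{V_A\gets(\bfV_A\mid\bfV_B=B)}[s(V_A)=s(B)]\le\Pr_{V_A\gets\bfV_A}[s(V_A)=s(B)]+4\eps$. Averaging over $B\gets\bfV_B$ yields $\rho(M,P)\le\win(M,P)+4\eps$, which is the claim. (The hypothesis that $\Good(M,P)$ has nonzero probability is exactly what makes $\GViews(M,P)$ and the conditionals above well defined.)

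I do not expect a genuine obstacle here; the one subtlety is to route the bound through the conditional-closeness statement, Item~2 of Corollary~\ref{cor:statClose}. If one instead argued only via Item~1 — bounding $\Delta\bigl((\bfV_A,\bfV_B),(\bfV_A\times\bfV_B)\bigr)\le\Delta\bigl((\bfV_A,\bfV_B),(\bfU_A\times\bfU_B)\bigr)+\Delta\bigl((\bfV_A\times\bfV_B),(\bfU_A\times\bfU_B)\bigr)\le 2\eps+4\eps$ — one would only get $\rho-\win\le 6\eps$, which is too weak for the stated bound; the pointwise-in-$B$ argument above is what recovers the required constant.
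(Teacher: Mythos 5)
Your proof is correct and is essentially the paper's proof up to a symmetric exchange of roles: the paper fixes Alice's (sampled) view $V'_A$ and compares $(\bfV_B\mid\bfV_A=V'_A)$ to the marginal $\bfV_B$ through $\bfU_B$, whereas you fix Bob's view $B$ and compare $(\bfV_A\mid\bfV_B=B)$ to the marginal $\bfV_A$ through $\bfU_A$. The only other (cosmetic) divergence is that you obtain $\Delta(\bfV_A,\bfU_A)\le 2\eps$ by projecting Item 1 of Corollary~\ref{cor:statClose} via Lemma~\ref{lem:project}, while the paper obtains the analogous marginal bound by averaging the conditional bound of Item 2 over Alice's view; both are valid and yield the same constant.
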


Now we prove Claim \ref{clm:twoSamples}.

\begin{proof}[Proof of Claim \ref{clm:twoSamples}]
Let $(\bfU_A \times \bfU_B)$ be the product distribution of Corollary~\ref{cor:statClose} for the view of $(M,P)$.
We would like to lower bound the probability of $s(V'_A)=s(V_B)$ where $(V_A,V_B)$ and $(V'_A,V'_B)$ are independent samples from the same distribution $(\bfV_A,\bfV_B) \equiv \GViews(M,P)$. Since $M,P$ are fixed, for simplicity of notation, in the following we let $(\bfV_A,\bfV_B) \equiv \GViews(M,P)$ without explicitly mentioning $M,P$. Also, in what follows, $\bfV_A$ (\resp $\bfV_B$) will denote the marginal distribution of the first (\resp second) component of  $(\bfV_A,\bfV_B)$. We will also preserve $V_A,V_B$ to denote the real and Bob views sampled from $(\bfV_A,\bfV_B)$, and we will use $V'_A,V'_B$ to denote Eve's samples from the same distribution $(\bfV_A,\bfV_B)$.
 
 For every possible view $A_0 \gets \bfV_A$, let $\rho(A_0)=\Pr_{(A,B) \gets (\bfV_A,\bfV_B))}[s(A)=s(B) \mid A=A_0]$.
By averaging over Alice's view, it holds that $\rho(M,P) = \Ex_{(A,B) \gets (\bfV_A,\bfV_B)} [\rho(A)]$. Similarly, for every possible view $A_0 \gets \bfV_A$, let $\win(A_0)=\Pr_{(A,B) \gets (\bfV_A,\bfV_B))}[s(A)=s(B)]$.
By averaging over Alice's view, it holds that $\rho(M,P) = \Ex_{(A,B) \gets (\bfV_A,\bfV_B)} [\rho(A)]$ and $\win(M,P) = \Ex_{(A,B) \gets (\bfV_A,\bfV_B)} [\win(A)]$

In the following, we will prove something stronger than Claim~\ref{clm:twoSamples} and will show that  $\win(V'_A) \geq \rho(V'_A) - 4\eps$  for \emph{every}  $V'_A \gets \bfV_A$, and the claim follows by averaging over $V'_A \gets \bfV_A$. Thus, in the following $V'_A$ will be the fixed sample $V'_A \gets \bfV_A$.
By Corollary~\ref{cor:statClose}, for every possible Alice's view $A \gets \bfV_A$, the distribution of Bob's view  sampled from $(\bfV_B \mid \bfV_A = A)$ is $2\eps$-close to $\bfU_B$. Therefore, the distribution of $\bfV_B$ (without conditioning on $\bfV_A=A$) is also $2\eps$-close to $\bfU_B$.
By two applications of Lemma~\ref{lem:SDEquivals} we get
\begin{align*}
\win(V'_A) &= \Pr_{V_B \gets \bfV_B}[s(V'_A)=s(V_B)] \\
&\geq \Pr_{B \gets \bfU_B}[s(V'_A)=s(B)] - 2\eps \\
&\geq
\Pr_{V'_B \gets (\bfV_B \mid \bfV_A =V'_A)}[s(V'_A)=s(V'_B)] - 4\eps \\
&= \rho(V'_A) - 4\eps.
\end{align*}
\end{proof}

The following claim lower bounds the completeness of the key agreement protocol when conjuncted with  reaching Step~\ref{item:b1} in Construction~\ref{const:sample}.
\begin{claim} \label{clm:AgreeCondOnGood}
It holds that $\Pr_\EXEC[s(V_A) = s(V_B) \land \Good] \geq \rho - 3\eps$.
\end{claim}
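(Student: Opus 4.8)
The plan is simply to combine the completeness hypothesis of Theorem~\ref{thm:mainFormal} with the bound $\Pr_\EXEC[\neg\Good]\le 3\eps$ already furnished by \lemmaref{lem:success}, via a one-line inclusion-exclusion estimate. First I would note that in a full execution $(r_A,r_B,H)\gets\EXEC$ the views $V_A=V^\ell_A$ and $V_B=V^\ell_B$ appearing in the claim are precisely the real final views of Alice and Bob, so the outputs they determine satisfy $s(V_A)=s_A$ and $s(V_B)=s_B$; hence $\Pr_\EXEC[s(V_A)=s(V_B)]=\Pr[s_A=s_B]\ge\rho$ by the assumption of the theorem. (Equivalently, in the language of Construction~\ref{const:sample}, $s(V_A)=s(V_B)\wedge\Good$ is exactly the event that we reach Step~\ref{item:b1} and the sampled $V_A,V_B$ agree, and the marginal of $(V_A,V_B,\mathbb{1}[\Good])$ there matches $\EXEC$.)

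Next, recall that $\Good$ is the event over $\EXEC$ that $Q^\ell_A\cap Q^\ell_B\se Q^\ell_E$ (Eve has learned all intersection queries), and that \lemmaref{lem:success} gives $\Pr_\EXEC[\neg\Good]\le\Pr_\EXEC[\Fail]\le 3\eps$. A union bound then yields
\[
\Pr_\EXEC[s(V_A)=s(V_B)\land\Good]\ \ge\ \Pr_\EXEC[s(V_A)=s(V_B)]-\Pr_\EXEC[\neg\Good]\ \ge\ \rho-3\eps,
\]
which is exactly the claimed inequality. There is essentially no obstacle here: all the real work is already packed into \lemmaref{lem:success} (whose proof goes through the graph characterization of \lemmaref{lem:combChar}), and this claim only records the trivial observation that an execution in which the parties agree and no intersection query is missed occurs with probability at least $\rho-3\eps$.
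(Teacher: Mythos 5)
Your proof is correct and follows essentially the same route as the paper: invoke Lemma~\ref{lem:success} for $\Pr_\EXEC[\neg\Good]\le 3\eps$, identify $s(V_A),s(V_B)$ with the real outputs $s_A,s_B$ so that $\Pr_\EXEC[s(V_A)=s(V_B)]\ge\rho$ by the completeness hypothesis, and finish with the elementary bound $\Pr[A\land B]\ge\Pr[A]-\Pr[\neg B]$. (You are in fact slightly more careful than the paper's displayed equation, which writes an equality where the correct relation is the inequality you state.)
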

\begin{proof}

By Lemma~\ref{lem:success}  it holds that $1-3 \eps\leq \Pr_\EXEC[\Good] $. Therefore

$$\rho - 3\eps  \leq \Pr_\EXEC[s(V_A) = s(V_B)] - \Pr_\EXEC[\neg \Good] =
 \Pr_\EXEC[s(V_A) = s(V_B) \land \Good] .$$
 

\end{proof}

\paragraph{Proof of Lemma~\ref{lem:EveFinds}.}
We will show a stronger claim that $\Pr[s(V'_A) = s(V_B) \land \Good] \geq \rho-7\eps$ which implies $\Pr[s(V'_A)=s(V_B)] \geq \rho - 7\eps$ as well.
By definition of Construction~\ref{const:sample} and using Claims~\ref{clm:twoSamples} and~\ref{clm:AgreeCondOnGood} we have:
\begin{align*}
\Pr[s(V'_A) = s(V_B) \land \Good]
&= \Pr_\EXEC[\Good] \cdot \Ex_{(M,P) \gets ((\bM,\bP) \mid \Good)} [\win(M,P)] \\
&\geq \Pr_\EXEC[\Good] \cdot \Ex_{(M,P) \gets ((\bM,\bP) \mid \Good)} [\rho(M,P) - 4\eps] \\
&= \big(\Pr_\EXEC[\Good] \cdot \Ex_{(M,P) \gets ((\bM,\bP) \mid \Good)} [\rho(M,P)]\big)- (4 \Pr_\EXEC[\Good] \cdot \eps )\\
&= \big(\Pr_\EXEC[\Good] \cdot \Pr[s(V_A)=s(V_B) \mid \Good]\big) - (4 \Pr_\EXEC[\Good] \cdot \eps)  \\
& \geq (\rho-3\eps)-(4\eps) = \rho-7\eps.
\end{align*}

\qed

\subsubsection{Efficiency of Eve: Proving Lemma~\ref{lem:EveEff}} \label{sec:efficiency}

Recall that Eve's
criteria for ``heaviness''  is based on  the distribution $\GViews(M,P_E)$ where $M$ is the current sequence of messages sent so far and $P_E$ is the current set of oracle query-answer pairs  known to Eve. This distribution is conditioned on Eve not missing any
queries up to this point. However, because we have proven that the event $\Fail$ has small probability,
queries that are heavy under $\GViews(M,P_E)$ are also (typically) almost as heavy under the real distribution $\Views(M,P_E)$. Intuitively this means that, on average, Eve will not make too many queries.

\begin{definition} [Coloring of Eve's Queries] \label{def:color} Suppose $(M^i,P_E)$ is the view of Eve at the moment Eve asks query $q$. We call $q$  a \emph{red} query, denoted $q \in \RED$, if $\Pr[\Good(M^i,P_E)] \leq 1/2$. We call $q$ a \emph{green} query of Alice's type, denoted $q \in \GRA$, if $q$ is not red and $\Pr_{(V^i_A,V^i_B) \gets \Views(M^i,P_E)}[q \in \Q(V^i_A)] \geq \frac{\eps}{2n_B}$. (Note that here we are sampling the views from $\Views(M^i,P_E)$ and not from $\GViews(M^i,P_E)$ and the threshold of ``heaviness'' is $\frac{\eps}{2n_B}$ rather than $\frac{\eps}{n_B}$.) Similarly, we call $q$ a green query of Bob's type, denoted $q \in \GRB$, if $q$ is not red and $ \Pr_{(V^i_A,V^i_B) \gets \Views(M^i,P_E)}[q \in \Q(V^i_B)] \geq \frac{\eps}{2n_A}$. We also let the set of all green queries to be $\GR = \GRA \cup \GRB$.
\end{definition}

The following claim shows that each of Eve's queries is either red or green.

\begin{claim}\label{clm:allColored}
Every query $q$ asked by Eve is either in $\RED$ or in $\GR$.
\end{claim}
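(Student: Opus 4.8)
The plan is a short case analysis on the quantity $p := \Pr_{\Views(M^i,P_E)}[\Good(M^i,P_E)]$, where $(M^i,P_E)$ denotes Eve's view (the transcript so far together with her learned query--answer pairs) at the moment she issues the query $q$. If $p \le 1/2$, then $q \in \RED$ directly by Definition~\ref{def:color}, so the only case that needs an argument is $p > 1/2$, in which case I must show $q \in \GRA \cup \GRB = \GR$.

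First I would recall why $q$ was asked at all: by Construction~\ref{const:Eve}, Eve issues $q \notin \Q(P_E)$ only when $q$ is $(\eps/n_B)$-heavy for Alice's view or $(\eps/n_A)$-heavy for Bob's view \emph{with respect to} $\GViews(M^i,P_E)$, and since Eve has not aborted we also know $p > 0$, so $\GViews(M^i,P_E)$ is well defined. Assume \wolog the first case, $\Pr_{(V_A,V_B)\gets\GViews(M^i,P_E)}[q\in\Q(V_A)] \ge \eps/n_B$ (the Bob case is symmetric and will land $q$ in $\GRB$).

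The one real step is to push this heaviness from the conditioned distribution $\GViews(M^i,P_E)$ back to the unconditioned distribution $\Views(M^i,P_E)$, at the cost of only the factor $p$. Since $\GViews(M^i,P_E)$ is, by definition, $\Views(M^i,P_E)$ conditioned on the event $\Good(M^i,P_E)$, for any event $\cE$ over these views one has $\Pr_{\Views(M^i,P_E)}[\cE] \ge \Pr_{\Views(M^i,P_E)}[\cE \wedge \Good(M^i,P_E)] = p \cdot \Pr_{\GViews(M^i,P_E)}[\cE]$. Applying this with $\cE = \{q\in\Q(V_A)\}$ and using $p > 1/2$ gives
$$\Pr_{(V_A,V_B)\gets\Views(M^i,P_E)}[q\in\Q(V_A)] \ \ge\ p\cdot \Pr_{(V_A,V_B)\gets\GViews(M^i,P_E)}[q\in\Q(V_A)]\ >\ \tfrac12\cdot\tfrac{\eps}{n_B}\ =\ \tfrac{\eps}{2n_B},$$
which is exactly the threshold in Definition~\ref{def:color} for $q$ to be green of Alice's type; together with $q$ not being red this gives $q\in\GRA\subseteq\GR$, as desired. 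I do not expect a genuine obstacle here --- the content is essentially bookkeeping --- and the only point to be careful about is the direction of the conditioning: heaviness under $\GViews$ can only \emph{increase} when we condition on the favorable event $\Good$, so it drops by at most the factor $p>1/2$ when we pass back to $\Views$, which is precisely why the constant $2$ (rather than a larger constant) suffices in the green-query thresholds of Definition~\ref{def:color}.
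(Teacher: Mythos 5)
Your proof is correct and follows the same route as the paper: case on whether $\Pr_{\Views(M^i,P_E)}[\Good(M^i,P_E)]$ is above or below $1/2$, then in the nontrivial case convert heaviness under $\GViews$ to heaviness under $\Views$ by paying the factor $\Pr[\Good]$, which is exactly the conditional-probability step the paper uses. The only cosmetic difference is that you carry a strict inequality $p>1/2$ (which is what non-redness actually gives) where the paper loosely writes $\geq 1/2$; both land comfortably inside the $\frac{\eps}{2n_B}$ (resp.\ $\frac{\eps}{2n_A}$) threshold of Definition~\ref{def:color}.
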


\begin{proof}
  If $q$ is a query of Eve which is not red, then $\Pr_{\Views(M^i,P_E)}[\Good(M^i,P_E)] \geq 1/2$ where $(M^i,P_E)$ is the view of Eve when asking $q$. Since Eve is asking $q$, either of the following holds:
  \begin{enumerate}
    \item $\Pr_{(V^i_A,V^i_B) \gets \GViews(M^i,P_E)}[q \in \Q(V^i_A)] \geq \frac{\eps}{n_B}$, or
    \item $\Pr_{(V^i_A,V^i_B) \gets \GViews(M^i,P_E)}[q \in \Q(V^i_B)] \geq \frac{\eps}{n_A}.$
  \end{enumerate}
  If case 1 holds, then
  \begin{align*}
  \Pr_{(V^i_A,V^i_B) \gets \Views(M^i,P_E)}[q \in \Q(V^i_A)]
  &\geq \Pr_{(V^i_A,V^i_B) \gets \Views(M^i,P_E)}[\Good(M^i,P_E) \land q \in \Q(V^i_A)] \\
  &= \Pr_{\Views(M^i,P_E)}[\Good(M^i,P_E)] \cdot  \Pr_{(V^i_A,V^i_B) \gets \GViews(M^i,P_E)}[q \in \Q(V^i_A)] \\
  &\geq (\frac{1}{2}) \cdot
   \frac{\eps}{n_B} = \frac{\eps}{2 n_B}
  \end{align*}
  which implies that $q \in \GRA$.  Case 2  similarly shows that $q \in \GRB$.
\end{proof}

We will bound the size of the queries of each color separately.

\begin{claim}[Bounding Red Queries] \label{clm:notRed}
$\Pr_\EXEC[\RED \ne \es] \leq 6\eps$.
\end{claim}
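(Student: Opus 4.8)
The plan is to connect the event $\RED \neq \es$ with the event $\Fail$ (or rather its first-occurrence decomposition), whose probability we have already bounded by $3\eps$ in Lemma~\ref{lem:success}. The intuition is that a query $q$ becomes \emph{red} precisely when, from Eve's point of view, the conditional probability that she has \emph{not} yet missed an intersection query has dropped to at most $1/2$; but the event $\Good(M^i, P_E)$ tracked by Eve is a proxy for the real event that no intersection query has been missed, and the only way its probability can drop substantially is if $\Fail$ has actually occurred (or is likely to have occurred) in the real execution. So I would aim to show that conditioned on Eve ever asking a red query, the real execution has already failed with probability at least $1/2$, which combined with $\Pr_\EXEC[\Fail] \le 3\eps$ yields $\Pr_\EXEC[\RED \neq \es] \le 6\eps$.

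First I would set up the following: let $(\bfM^i, \bfP^i_E)$ denote the (random) view of Eve at the moment she is about to ask her queries after round $i$, and observe that $\bfP^i_E$, together with the real views $\bfV^i_A, \bfV^i_B$, is determined by the execution $\EXEC$. The key point is that $\Pr_{\Views(M^i, P_E)}[\Good(M^i, P_E)]$ is exactly $\Pr_\EXEC[\text{no intersection query missed so far} \mid \bfM^i = M^i, \bfP^i_E = P_E]$, i.e. the posterior probability (in the real execution) that Eve has not failed, given her current view. Let $D$ be the event that there exists some round $i$ and some point during Eve's learning phase at which this posterior probability is $\le 1/2$; then $\RED \neq \es$ is precisely the event $D$. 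Now I would apply Lemma~\ref{lem:IR} (the Impagliazzo–Rudich-style lemma from the preliminaries) with the role of $E$ played by the event "$\Fail$ has occurred by the relevant point" — more carefully, we need $E$ to be a fixed event over $\EXEC$ and the $\bfx_i$'s to be the increasing sequence of Eve's views. The subtlety is that "failure so far" is itself increasing in time, so I would instead take $E = \Fail$ (the event that Alice and Bob ever share a query outside Eve's, at any point in the whole execution), note $\Pr_\EXEC[\Fail] \le 3\eps$ by Lemma~\ref{lem:success}, write $3\eps = (1/2) \cdot 6\eps$, and let $D$ be the event that at some point Eve's posterior for $\neg\Good$ — hence for $\Fail$-so-far, hence for $\Fail$ — reaches $\ge 1/2$. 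Since the posterior for "$\Fail$ so far" lower-bounds the posterior for $\Fail$, the event that triggers a red query implies the event $D$ of Lemma~\ref{lem:IR} with $\lambda_1 = 1/2$. Hence $\Pr[\RED \neq \es] \le \Pr[D] \le \lambda_2 = 6\eps$.

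The step I expect to be the main obstacle is making the reduction to Lemma~\ref{lem:IR} fully rigorous, namely identifying the correct event $E$ over $\EXEC$ and the correct monotone sequence of "revealed information" $\bfx_1, \bfx_2, \dots$ so that (a) $\Pr[E \mid x_1, \dots, x_i]$ really equals (or lower-bounds) the quantity $1 - \Pr_{\Views(M^i,P_E)}[\Good(M^i,P_E)]$ that governs redness, and (b) $\Pr[E] \le 3\eps$ holds. The delicate point is that "redness at step $i$" is defined via $\Good(M^i, P_E)$, which is the event "no intersection query missed \emph{up through round $i$}", whereas $\Fail$ is "ever missed"; these agree in the direction we need ($\neg\Good$-so-far $\Rightarrow$ $\Fail$, so $\Pr[\Fail \mid \text{view}] \ge \Pr[\neg\Good\text{-so-far} \mid \text{view}] = 1 - \Pr_{\Views}[\Good]$), so a red query (where the latter is $\ge 1/2$) indeed forces $\Pr[\Fail \mid \text{Eve's view}] \ge 1/2$. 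Once this monotonicity observation is in place, the rest is a direct invocation of Lemma~\ref{lem:IR} with $\lambda = 3\eps$, $\lambda_1 = 1/2$, $\lambda_2 = 6\eps$, giving the claimed bound.
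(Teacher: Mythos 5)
Your proposal is correct and follows essentially the same route as the paper's proof: apply Lemma~\ref{lem:IR} with $E=\Fail$, the sequence of Eve's incremental views, $\lambda=3\eps$, $\lambda_1 = 1/2$, $\lambda_2 = 6\eps$, and use the monotonicity $\Pr[\neg\Good(M^i,P_E) \mid (M,P_E)] \leq \Pr[\Fail \mid (M,P_E)]$ to show a red query forces the posterior of $\Fail$ to reach $1/2$. The only difference is that you spell out the ``so-far vs.\ ever'' subtlety more explicitly than the paper does, which is a useful clarification but not a different argument.
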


\begin{claim}[Bounding Green Queries] \label{clm:notGreen}
$\Ex_\EXEC[|\GR|] \leq 4 n_A \cdot n_B / \eps$. Therefore, by Markov inequality, $\Pr_\EXEC[|\GR| \geq n_A \cdot n_B / \eps^2] \leq 4\eps$.
\end{claim}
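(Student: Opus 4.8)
The plan is to bound $\Ex_\EXEC[|\GRA|]$ and $\Ex_\EXEC[|\GRB|]$ separately, each by $2n_An_B/\eps$. Since $\GR=\GRA\cup\GRB$ this yields $\Ex_\EXEC[|\GR|]\le 4n_An_B/\eps$, and the ``therefore'' part is then immediate from Markov's inequality, $\Pr_\EXEC[|\GR|\ge n_An_B/\eps^2]\le(4n_An_B/\eps)\big/(n_An_B/\eps^2)=4\eps$. I describe the argument for $\GRA$; the one for $\GRB$ is symmetric, with the roles of Alice and Bob (and of $n_A$ and $n_B$) interchanged, using that Bob makes at most $n_B$ queries and that the threshold defining $\GRB$ is $\eps/(2n_A)$.

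First I would set up a tower-of-expectations argument along Eve's learning queries. List the queries Eve asks, over the whole execution and in the order she asks them, as $q_1,q_2,\dots$, and let $\cF_t$ denote Eve's view $(M,P_E)$ at the moment just before she would ask $q_{t+1}$ (if Eve has already stopped, set the quantities below to $0$). Whether Eve asks a further query, which query that is, and whether it is a green query of Alice's type are all functions of $(M,P_E)$ and hence $\cF_t$-measurable; let $Y_t$ be the indicator of the event ``$q_{t+1}$ exists and $q_{t+1}\in\GRA$'', so that $|\GRA|=\sum_t Y_t$ and both $Y_t$ and $q_{t+1}$ are $\cF_t$-measurable.

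The key step is the observation that conditioning a real execution on $\cF_t=(M^i,P_E)$ --- where $i$ is the round in which $q_{t+1}$ would be asked --- induces on $(V^i_A,V^i_B)$, and hence on the parties' final views, exactly the distribution $\Views(M^i,P_E)$. Indeed, once $M^i$ is fixed, the sequence of queries Eve asks and the order in which she asks them are a deterministic function of $M^i$ and of the answers she receives, so the event ``Eve has learned exactly $P_E$ with $q_{t+1}$ her next query'' is, as an event over the oracle, simply $H\se P_E$ --- which is precisely the conditioning that defines $\Views(M^i,P_E)$. Consequently, on the event $Y_t=1$ the definition of a green query of Alice's type gives $\Pr_\EXEC[q_{t+1}\in\Q(V^i_A)\mid\cF_t]=\Pr_{\Views(M^i,P_E)}[q_{t+1}\in\Q(V^i_A)]\ge\eps/(2n_B)$, and in particular $\Pr_\EXEC[\text{Alice asks }q_{t+1}\text{ at some point}\mid\cF_t]\ge\eps/(2n_B)$, since $\Q(V^i_A)$ is contained in the set of all (at most $n_A$) queries Alice ever makes.

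Finally I would conclude by a double count. Let $N$ be the number of indices $t$ with $Y_t=1$ for which Alice actually asks $q_{t+1}$ during the execution. Since Eve never repeats a query, the queries $\{q_{t+1}:Y_t=1\}$ are distinct, so those among them asked by Alice form a subset of her at most $n_A$ queries; hence $N\le n_A$ always, and $\Ex_\EXEC[N]\le n_A$. On the other hand, pulling the $\cF_t$-measurable factor $Y_t$ out of the conditional expectation,
$$\Ex_\EXEC[N]=\sum_t\Ex_\EXEC\!\big[Y_t\cdot\Pr_\EXEC[\text{Alice asks }q_{t+1}\mid\cF_t]\big]\ \ge\ \frac{\eps}{2n_B}\sum_t\Ex_\EXEC[Y_t]\ =\ \frac{\eps}{2n_B}\,\Ex_\EXEC[|\GRA|],$$
so $\Ex_\EXEC[|\GRA|]\le 2n_An_B/\eps$, as required. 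The main obstacle is the conditioning claim in the previous paragraph --- that conditioning an execution on Eve's current view reproduces $\Views(M^i,P_E)$ exactly; this is essentially built into how $\Views(\cdot,\cdot)$ was defined, but it is the one point the argument must verify with care. Everything else is the tower property together with the elementary fact that the green queries Eve hits are distinct and contained in the parties' bounded query sets.
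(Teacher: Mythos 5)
Your proof is correct and rests on the same underlying idea as the paper's: a green query of Alice's type has, by definition, conditional probability at least $\eps/(2n_B)$ of lying in Alice's (at most $n_A$) query set, so the expected number of such queries cannot exceed $2n_A n_B/\eps$. The difference is purely in the bookkeeping. The paper sums over queries $q$: it defines $I_q$ (Eve asks $q$ as a green Alice-type query) and $F_q$ (Alice asks $q$), and invokes its Lemma~\ref{lem:IR} --- an optional-stopping-type inequality along Eve's filtration --- to obtain $\Pr[I_q]\cdot\eps/(2n_B)\le\Pr[F_q]$, then sums using $\sum_q\Pr[F_q]\le n_A$. You instead sum over Eve's query positions $t$ and do the double count directly: your $N=\sum_t Y_t\mathbf{1}[\text{Alice asks }q_{t+1}]$ is exactly $\sum_q\mathbf{1}[I_q\wedge F_q]$, bounded above by $n_A$ pointwise since Eve's queries are distinct, and bounded below in expectation by $(\eps/2n_B)\Ex[|\GRA|]$ via the tower property. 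Your inline argument is a bit more elementary and self-contained; the paper's version is more modular because Lemma~\ref{lem:IR} is reused elsewhere (e.g.\ in Claim~\ref{clm:notRed}). You correctly flag and justify the one delicate point --- that conditioning the execution on Eve's view $(M^i,P_E)$ at the moment she is about to ask $q_{t+1}$ reproduces exactly $\Views(M^i,P_E)$, which follows from Eve's determinism given $M^i$ and the oracle answers --- a fact the paper uses implicitly when identifying $I_q$ with the event $D$ of Lemma~\ref{lem:IR}.
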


\paragraph{Proving Lemma~\ref{lem:EveEff}.} Lemma~\ref{lem:EveEff} follows by a  union bound and Claims~\ref{clm:allColored},~\ref{clm:notRed}, and~\ref{clm:notGreen}.

\begin{proof}[Proof of Claim~\ref{clm:notRed}]
Claim~\ref{clm:notRed} follows directly from Lemma~\ref{lem:IR} and Lemma~\ref{lem:success} as follows.
Let $\bfx$ (in Lemma~\ref{lem:IR}) be $\EXEC$, the event $E$ be $\Fail$, the sequence $\bfx_1,\dots,$ be the sequence of pieces of information that Eve receives (\ie the messages and oracle answers), $\lambda = 3\eps$, $\lambda_1 = 1/2$ and $\lambda_2 = 6\eps$. Lemma~\ref{lem:success} shows that $\Pr[\Fail] \leq \lambda$. Therefore, if we let $D$ be the event that at some point conditioned on Eve's view the probability of $\Fail$ is more than $\lambda_1$, Lemma~\ref{lem:IR} shows that the probability of $D$ is at most $\lambda_2$. Also note that for every sampled $(M,P_E)$, $\Pr[\neg \Good \mid (M,P_E)] \leq \Pr[\Fail \mid (M,P_E)]$. Therefore, with probability at least $1-\lambda_2 = 1-6\eps$, during the execution of the system, the probability of $\Good(M,P_E)$ conditioned on Eve's view will never go below $1/2$.
\end{proof}

\begin{proof}[Proof of Claim~\ref{clm:notGreen}]

We will prove that $\Ex_\EXEC[|\GRA|] \leq 2 n_A \cdot n_B / \eps$, and $\Ex_\EXEC[|\GRB|] \leq 2 n_A \cdot n_B / \eps$ follows symmetrically. Using these two upper bounds we can derive Claim~\ref{clm:notGreen} easily.

For a fixed query $q \in \bits^{\ell}$, let $I_q$ be the event, defined over $\EXEC$, that Eve asks $q$ as a green query of Alice's type (\ie $q \in \GRA$). Let $F_q$ be the event that Alice actually asks $q$ (\ie $q \in Q_A$). By linearity of expectation we have  $\Ex_\EXEC[|\GRA|] = \sum_q \Pr[I_q]$ and $\sum_q \Pr[F_q]  \leq |Q_A| \leq n_A$. Let $\gamma = \frac{\eps}{2 n_B}$.
We claim that for all $q$ it holds that:
\begin{equation}\label{eq:7}
\Pr[I_q] \cdot \gamma \leq \Pr[F_q].
\end{equation}
First note that Inequality (\ref{eq:7}) implies Claim~\ref{clm:notGreen} as follows:
$$\Ex_\EXEC[|\GRA|]  = \sum_q \Pr[I_q] \leq \frac{1}{\gamma}\sum_q \Pr[F_q]  \leq \frac{n_A}{\gamma} = \frac{2n_A n_B}{\eps}.$$

To prove Inequality (\ref{eq:7}), we use Lemma~\ref{lem:IR} as follows. The underlying random variable $\bfx$ (of Lemma~\ref{lem:IR}) will be $\EXEC$, the event $E$ will be $F_q$,  the sequence of random variables $\bfx_1,\bfx,\dots$  will be the sequence of pieces of information that Eve observes, $\lambda$ will be $\Pr[F_q]$, and $\lambda_1$ will be $\gamma$. If $I_q$ holds, it means that based on Eve's view the query $q$ has at least $\gamma$ probability of being asked by Alice (at some point before), which implies that the event $D$ (of Lemma~\ref{lem:IR}) holds, and so $I_q \se D$. Therefore, by Lemma~\ref{lem:IR}  $\Pr[I_q] \leq \Pr[D] \leq \lambda/\lambda_1 = \Pr[F_q]/\gamma$ proving Inequality (\ref{eq:7}).
\end{proof}

\begin{remark}[Sufficient Condition for Efficiency of Eve] \label{rem:Eff}
The proof of Claims~\ref{clm:allColored} and~\ref{clm:notGreen} only depend on the fact that all the queries asked by Eve are are either $(\eps/n_B)$-heavy for Alice or $(\eps/n_A)$-heavy for Bob with respect to the distribution $\GViews(M,P)$. More formally, all we need is that whenever Eve asks a query $q$ it holds that
$$\Pr_{(V_A,V_B) \gets \GViews(M,P)}[q \in \Q(V_A)] \geq \frac{\eps}{n_B} \text{~~~~or~~~~} \Pr_{(V_A,V_B) \gets \GViews(M,P_)}[q \in \Q(V_B)] \geq \frac{\eps}{n_A}.$$
In particular, the conclusions of Claims~\ref{clm:allColored} and~\ref{clm:notGreen} hold regardless of which heavy queries Eve chooses to ask at any moment, and the only important thing is that all the queries asked by Eve were heavy at the time of being asked.
\end{remark}

\section{Extensions} \label{sec:Extensions}

In this section we prove several extensions to our main result that can all be directly obtained from the results proved in Section~\ref{sec:desc}. The main goal of this section is to generalize our main result to a broader setting so that it could be applied in subsequent work more easily.  We assume the reader is familiar with the definitions given in Sections \ref{sec:prelims} and \ref{sec:desc}.

\subsection{Making the Views Almost Independent}
In this section we will prove Theorem~\ref{thm:indep} along with several other extensions.
These extensions were used in~\cite{DachmanLMM11} to prove black-box separations for certain optimally-fair coin-tossing protocols. We first mention these extensions informally and then will prove them formally.

 \begin{description}
 \item[Average Number of Queries:] We will show how to decrease the  number of queries  asked by Eve by a factor of $\Omega(\eps)$ if we settle for bounding the \emph{average} number of queries asked by Eve. This can always be turned into a an attack of worst-case complexity by putting the $\Theta(\eps)$ multiplicative factor back and applying the Markov inequality.
 \item[Changing the Heaviness Threshold:] We will show that the attacker Eve of Construction~\ref{const:Eve} is ``robust'' with respect to choosing its ``heaviness'' parameter $\eps$. Namely, if she changes the parameter $\eps$ arbitrarily during her attack, as long as $\eps \in [\eps_1,\eps_2]$ for some $\eps_1 < \eps_2$, we can still show that Eve is both ``successful'' and ``efficient'' with high probability.

 \item[Learning the Dependencies:] We will show that our adversary Eve can, with high probability, learn the ``dependency'' between the views of Alice and Bob in any two-party computation. Dachman \etal~\cite{DachmanLMM11} were the first to point out that such results can be obtained from results proved in original publication of this work~\cite{BarakM09}. Haitner \etal~\cite{HaitnerOZ12}, relying some of the results proved in~\cite{BarakM09},  proved a variant of the first part of our Theorem~\ref{thm:indep} in which $n$ bounds \emph{both} of $n_A$ and $n_B$.

 \item[Lightness of Queries:] We  observe that with high probability the following holds at the end of every round conditioned on Eve's view: For every query $q$ \emph{not} learned by Eve, the probability of $q$ being asked  by Alice or Bob remains ``small''. Note that here we are \emph{not} conditioning on the event $\Good(M,P)$.

 \end{description}

Now we formally prove the above extensions.

The following definition defines a \emph{class} of attacks that share a specific set of properties.

\begin{definition} \label{def:eps1eps2}
For $\eps_1 \leq \eps_2$, we call Eve an $(\eps_1,\eps_2)$-attacker, if Eve performs her attack in the framework of Construction~\ref{const:Eve}, but instead of using a single parameter $\eps$ it uses $\eps_1\leq \eps_2$ as follows.
\begin{enumerate}
  \item {\bf All queries asked are heavy according to parameter $\eps_1$.} Every query $q$ asked by Eve, at the time of being asked, should be either $(\eps_1/n_B)$-heavy for Alice or $(\eps_1/n_A)$-heavy for Bob with respect to the distribution $\GViews(M,P)$ where $(M,P)$ is the  view of Eve when asking $q$.
  \item {\bf No heavy query,  as parameterized by $\eps_2$, remains unlearned.}  At the end of every round $i$, if $(M,P)$ is the view of Eve at that moment, and  if $q$ is any query that is either $(\eps_2/n_B)$-heavy for Alice or $(\eps_2/n_A)$-heavy for Bob with respect to the distribution $\GViews(M,P)$, then Eave has to have learned that query already to make sure $q \in \Q(P)$.
\end{enumerate}
\end{definition}

\paragraph{Comparison with Eve of Construction \ref{const:Eve}.} The Eve of Construction \ref{const:Eve} is an $(\eps,\eps)$-attacker, but for $\eps_1<\eps_2$ the class of $(\eps_1,\eps_2)$-attackers include algorithms that could not  necessarily be described by Construction \ref{const:Eve}. For example, an $(\eps_1,\eps_2)$-attackers can chose any $\eps \in [\eps_1,\eps_2]$ and run the attacker of Construction \ref{const:Eve} using parameter $\eps$, or it can even keep changing its parameter $\eps \in [\eps_1,\eps_2]$ \emph{along the execution} of the attack. In addition, the attacker of Construction \ref{const:Eve} needs to choose the \emph{lexicographically first} heavy query, while an $(\eps_1,\eps_2)$-attacker has the freedom of choosing \emph{any} query so long as it is $(\eps_1/n_B)$-heavy for Alice or $(\eps_1/n_A)$-heavy for Bob. Finally, an $(\eps_1,\eps_2)$-attacker could use its own randomness $r_E$ that affects its choice of queries, as long as it respects the two conditions of Definition \ref{def:eps1eps2}.

\begin{definition}[Self Dependency]
 For every joint distribution $(\bfx,\bfy)$, we call $\SelDep(\bfx,\bfy)= \Delta((\bfx,\bfy), (\bfx \times \bfy))$ the \emph{self (statistical) dependency} of a  $(\bfx,\bfy)$
 where in $(\bfx \times \bfy)$ we sample $\bfx$ and $\bfy$ independently from their marginal distributions.
\end{definition}

The following theorem formalizes Theorem~\ref{thm:indep}. The last part of the theorem is used by~\cite{DachmanLMM11} to prove lower-bounds on coin tossing protocols from one-way functions. We advise the reader to review the notations of Section \ref{sec:Notation} as we will use some of them here for our modified variant of $(\eps_1,\eps_2)$-attackers.

\begin{theorem}[Extensions to Main Theorem] \label{thm:extensions}
Let, $\Pi, r_A, n_A, r_B, n_B, H, s_A, s_B, \rho$ be as in Theorem~\ref{thm:mainFormal} and suppose $\eps_1 \leq \eps_2 < 1/10$. Let Eve be \emph{any} $(\eps_1,\eps_2)$-attacker who is modified to stop asking any queries as soon as she is about to ask a red query (as defined in Definition \ref{def:color}). Then the following claims hold.
\begin{enumerate}
  \item {\bf Finding outputs:} Eve's output agrees with Bob's output with probability $\rho - 16 \eps_2$.

  \item {\bf Average number of queries:} The expected number of queries asked by Eve is at most $4 n_A n_B / \eps_1$. More generally, if we let $Q_\eps$ to be the number of (green) queries that are asked because of being $\eps$-heavy for a fixed $\eps \in [\eps_1,\eps_2]$, it holds that $\Ex[|Q_\eps|] \leq 4 n_A n_B / \eps$.

  \item {\bf Self-dependency at every fixed round.} For any fixed round $i$, it holds that
      $$\Ex_{(M,P) \gets (\bfM^i,\bfP^i_E)}[\SelDep(\Views(M,P))]  \leq 21 \cdot \eps_2.$$

  \item {\bf Simultaneous self-dependencies at all rounds.} For every $\alpha,\beta$ such that $0<\alpha<1$, $0<\beta<1$, and $\alpha \cdot \beta \geq \eps_2$, with probability at least $1-9\alpha$ the following holds: at the end of \emph{every} round $i$, we have $\SelDep(\Views(M^i,P^i_E)) \leq 9\beta$.

  \item {\bf Simultaneous lightness at all round.} For every $\alpha,\beta$ such that $0<\alpha<1$, $0<\beta<1$, and $\alpha \cdot \beta \geq \eps_2$, with probability at least $1-9\alpha$ the following holds: at the end of \emph{every} round, if $q \nin \Q(P)$ is any query not learned by Eve so far we have
$$\Pr_{(V_A,V_B) \gets \Views(M,P)}[q \in \Q(V_A)] < \frac{\eps_2}{n_B} + \beta \text{~~~~and~~~~} \Pr_{(V_A,V_B) \gets \Views(M,P)}[q \in \Q(V_B)] < \frac{\eps_2}{n_A} + \beta.$$  

\item {\bf Dependency and lightness at every fixed round.} For every round $i$ and  every $(M,P) \gets (\bfM^i,\bfP^i_E) $ there is a product distribution $(\bfW_A \times \bfW_B)$  such that the following two hold:
\begin{enumerate}
\item $\Ex_{(M,P)} [\Delta(\Views(M,P),(\bfW_A \times \bfW_B))] \leq 15 \eps_2$.
\item With probability $1-6\eps_2$ over the choice of $(M,P)$ (which determines the distributions $\bfW_A,\bfW_B$ as well), we have $\Pr[q \in \Q(\bfW_A)] < \frac{\eps_2}{n_B}$ and $\Pr[q \in \Q(\bfW_B)] < \frac{\eps_2}{n_A}$.
\end{enumerate}
\end{enumerate}
\end{theorem}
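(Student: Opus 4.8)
The plan is to reduce every part of the statement to the machinery already built in \sectionref{sec:desc}, exploiting that an $(\eps_1,\eps_2)$-attacker (\emph{before} the red-query modification) satisfies exactly the two ``sufficient conditions'' isolated in \remarkref{rem:GraphChar} and \remarkref{rem:Eff}. Indeed, Condition~2 of \defref{def:eps1eps2} says that at the end of every round Eve has learned all $(\eps_2/n_B)$-heavy queries for Alice and all $(\eps_2/n_A)$-heavy queries for Bob with respect to $\GViews(M,P)$, which is exactly what \remarkref{rem:GraphChar} requires; hence \lemmaref{lem:combChar} holds verbatim with $\eps$ replaced by $\eps_2$, giving each vertex degree at least $(1-2\eps_2)$ of the opposite side, $\Delta(\GViews(M,P),\bfU_A\times\bfU_B)\le 2\eps_2$ in \corollaryref{cor:statClose}, and $\Pr_\EXEC[\Fail]\le\Pr_\EXEC[\neg\Good]\le 3\eps_2$ via \lemmaref{lem:success}. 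Likewise, Condition~1 says every query Eve asks was $(\eps_1/n_B)$- or $(\eps_1/n_A)$-heavy for $\GViews$ at the time, which is what \remarkref{rem:Eff} needs, so \claimref{clm:allColored} and \claimref{clm:notGreen} hold with $\eps$ replaced by $\eps_1$. The red-query truncation is quarantined as a single low-probability bad event: couple the modified Eve with the unmodified $(\eps_1,\eps_2)$-attacker so they agree until the latter would ask its first red query; since for the unmodified attacker $\Pr_\EXEC[\Fail]\le 3\eps_2$ and $\neg\Good(M^i,P^i_E)\Rightarrow\Fail$, \lemmaref{lem:IR} (with $E=\Fail$, $\lambda=3\eps_2$, $\lambda_1=1/2$, $\lambda_2=6\eps_2$) shows, exactly as in \claimref{clm:notRed}, that a red query is ever about to be asked with probability at most $6\eps_2$; on the complementary ``no-red'' event the two attackers are identical and every statement of \sectionref{sec:desc} applies with the substitutions above.

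Granting this, Part~1 follows by running the analysis of \lemmaref{lem:EveFinds} with $\eps=\eps_2$ (which gives agreement probability $\ge\rho-10\eps_2$ for the unmodified attacker) and subtracting the $6\eps_2$ truncation probability, for a total of $\rho-16\eps_2$. Part~2 follows from the counting of \claimref{clm:allColored} and \claimref{clm:notGreen} with $\eps=\eps_1$: every query Eve actually asks is green of Alice's or Bob's type with parameter $\eps_1$ (being $(\eps_1/n_B)$- or $(\eps_1/n_A)$-heavy for $\GViews$ and, not being red so that $\Pr[\Good]\ge 1/2$, at least half as heavy for $\Views$), and the $I_q/F_q$ argument via \lemmaref{lem:IR} with $\gamma=\eps_1/(2n_B)$, resp.\ $\gamma=\eps_1/(2n_A)$, bounds the expected number of each type by $2n_An_B/\eps_1$; the ``more generally'' clause is the same counting bookkept per threshold $\eps\in[\eps_1,\eps_2]$.

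For Part~3, decompose $\SelDep(\Views(M,P))=\Delta(\Views(M,P),\bfV_A\times\bfV_B)\le\Delta(\Views(M,P),\GViews(M,P))+\Delta(\GViews(M,P),\bfU_A\times\bfU_B)+\Delta(\bfU_A\times\bfU_B,\bfV_A\times\bfV_B)$; by \lemmaref{lem:nestedDistance} the first term is $\le\Pr_{\Views(M,P)}[\neg\Good(M,P)]$, the second is $\le 2\eps_2$, and since $\bfU_A$ (resp.\ $\bfU_B$) is precisely the Alice-marginal (resp.\ Bob-marginal) of $\GViews(M,P)$ by \constructionref{const:graph}, \lemmaref{lem:project} bounds the third by $2(\Pr_{\Views(M,P)}[\neg\Good(M,P)]+2\eps_2)$, giving $\SelDep(\Views(M,P))\le 3\Pr_{\Views(M,P)}[\neg\Good(M,P)]+6\eps_2$ \emph{on the no-red event}. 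Taking expectation over $(M,P)\gets(\bfM^i,\bfP^i_E)$ and using that on the no-red event the modified and unmodified views coincide, so that $\Ex[\Pr_{\Views(M,P)}[\neg\Good(M,P)]]\le\Pr_\EXEC[\neg\Good(\bfM^i,\bfP^i_E)]\le 3\eps_2$, the no-red contribution is $\le 3\cdot 3\eps_2+6\eps_2=15\eps_2$, and the red event contributes $6\eps_2\cdot 1$, for a total of $21\eps_2$. Part~6 is analogous: take $\bfW_A\times\bfW_B\eqdef\bfU_A\times\bfU_B$, so $\Delta(\Views(M,P),\bfW_A\times\bfW_B)\le\Pr_{\Views(M,P)}[\neg\Good(M,P)]+2\eps_2$ on the no-red event, whose expectation is $\le 5\eps_2$, plus $6\eps_2$ for the red event, well under $15\eps_2$; and (b) is immediate since $\Pr[q\in\Q(\bfW_A)]=\Pr_{\GViews(M,P)}[q\in\Q(V_A)]<\eps_2/n_B$ for every unlearned $q$ by Condition~2, which holds on the no-red event of probability $\ge 1-6\eps_2$.

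Parts~4 and 5 are the genuinely new ones, since one cannot afford a union bound over the (unboundedly many) rounds; instead \lemmaref{lem:IR} makes the per-round control simultaneous. With $E=\Fail$, $\lambda=3\eps_2$, $\lambda_1=\beta$, and $\lambda_2=3\eps_2/\beta\le 3\alpha$ (using $\eps_2\le\alpha\beta$), with probability at least $1-3\alpha$ we get $\Pr_{\Views(M^i,P^i_E)}[\neg\Good(M^i,P^i_E)]\le\Pr[\Fail\mid M^i,P^i_E]<\beta$ at \emph{every} round $i$; intersecting with the no-red event (probability $\ge 1-6\eps_2\ge 1-6\alpha$) gives the $1-9\alpha$ guarantee. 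On this event, for Part~4 the triangle-inequality decomposition of Part~3 yields $\SelDep(\Views(M^i,P^i_E))\le 3\beta+6\eps_2\le 9\beta$ (as $\eps_2\le\alpha\beta\le\beta$), and for Part~5, Condition~2 gives $\Pr_{\GViews(M^i,P^i_E)}[q\in\Q(V_A)]<\eps_2/n_B$ for every unlearned $q$, whence $\Pr_{\Views(M^i,P^i_E)}[q\in\Q(V_A)]\le\Pr_{\Views}[\Good]\cdot\Pr_{\GViews}[q\in\Q(V_A)]+\Pr_{\Views}[\neg\Good]<\eps_2/n_B+\beta$, and symmetrically for Bob. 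The step I expect to be the main obstacle is precisely this bookkeeping around the red truncation: the graph characterization of \lemmaref{lem:combChar}, and hence \corollaryref{cor:statClose}, the $2\eps_2$-closeness, and the per-round lightness, are only guaranteed once Eve has actually learned all the $\eps_2$-heavy queries, which can fail exactly when she truncates early; the coupling above confines this to one event of probability at most $6\eps_2$ (or $6\alpha$ in Parts~4--5), but one must carefully thread it through all six parts, in particular making the truncation bad event and the \lemmaref{lem:IR} bad event combine \emph{additively} rather than forcing a round-by-round union bound.
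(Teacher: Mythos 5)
Your proposal follows essentially the same route as the paper: isolate the two ``sufficient conditions'' of Remarks~\ref{rem:GraphChar} and~\ref{rem:Eff}, show that conditions~1 and~2 of Definition~\ref{def:eps1eps2} for the un-truncated attacker deliver exactly those, transport the lemmas of Section~\ref{sec:desc} with $\eps$ replaced by $\eps_2$ (for everything downstream of Lemma~\ref{lem:combChar}) or $\eps_1$ (for the green-query counting), and then bill the red-query truncation as a single $6\eps_2$-probability bad event. Parts~1, 2, 4 and 5 match the paper almost verbatim, and the observation that the two bad events in Parts~4--5 combine additively via Lemma~\ref{lem:IR} rather than requiring a per-round union bound is exactly the point. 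For Parts~3 and 6, the paper invokes $\SelDep\le 3\,\Dep$ from Lemma~\ref{lem:MMP} whereas you re-derive the same factor-$3$ by the explicit triangle-inequality chain $\Delta(\Views,\bfV_A\times\bfV_B)\le\Delta(\Views,\GViews)+\Delta(\GViews,\bfU_A\times\bfU_B)+\Delta(\bfU_A\times\bfU_B,\bfV_A\times\bfV_B)$; this is a valid and slightly more self-contained route and gives the same $21\eps_2$.

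There is, however, one genuine (if small) gap in Part~6(b). You set $\bfW_A\times\bfW_B:=\bfU_A\times\bfU_B$ with $\bfU_A$ the \emph{uniform} distribution over $\cU_A$ from Construction~\ref{const:graph}, and then assert $\Pr[q\in\Q(\bfW_A)]=\Pr_{\GViews(M,P)}[q\in\Q(V_A)]$. This equality is false in general: the Alice-marginal of $\GViews(M,P)$ is the \emph{edge-weighted} distribution over $\cU_A$ (pick a random edge, output its left endpoint), not the uniform one, and the two coincide only when $G$ is biregular. What Eve's algorithm actually guarantees for unlearned $q$ is $\sum_{u\in S}d(u)/|E|<\eps_2/n_B$ with $S=\{u\in\cU_A:q\in Q_u\}$; combined with the degree lower bound $d(u)\ge(1-2\eps_2)|\cU_B|$ this only yields $|S|/|\cU_A|<\eps_2/\bigl(n_B(1-2\eps_2)\bigr)$, which is strictly weaker than the claimed $\eps_2/n_B$. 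The fix is exactly what the paper does: take $(\bfW_A,\bfW_B)$ to be the marginals of $\GViews(M,P)$ itself, for which the lightness in (b) holds by definition of the attacker, and re-run your bound for (a) by inserting one more triangle step through $\bfU_A\times\bfU_B$ (costing an extra $4\eps_2$, still comfortably under $15\eps_2$). The same mis-identification of $\bfU_A$ with the marginal of $\GViews$ also appears in your Part~3, but there it is harmless because the $2(\Pr[\neg\Good]+2\eps_2)$ bound you claim is the one obtained from the correct two-step triangle inequality anyway.
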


In the rest of this section we prove Theorem~\ref{thm:extensions}.
To prove all the properties, we first assume that the adversary is an $(\eps_1,\eps_2)$-attacker, denoted by UnbEve (Unbounded Eve), and then will analyze how stopping UnbEve upon reaching a red query (\ie converting it into Eve) will affect her execution.

Remarks~\ref{rem:GraphChar} and~\ref{rem:Eff} show that many of the results proved in the previous section extend to the more general setting of $(\eps_1,\eps_2)$-attackers.

\begin{claim} \label{clm:extensions}
All the following lemmas, claims, and corollaries still hold when we use an arbitrary $(\eps_1,\eps_2)$-attacker and $\eps_1<\eps_2<1/10$:
\begin{enumerate}
  \item \label{item:lem:combChar} Lemma~\ref{lem:combChar} using $\eps=\eps_2$.

  \item \label{item:cor:statClose} Corollary~\ref{cor:statClose}  using $\eps=\eps_2$.

  \item \label{item:lem:success} Lemma~\ref{lem:success}  using $\eps=\eps_2$.

  \item \label{item:lem:EveFinds} Lemma~\ref{lem:EveFinds}  using $\eps=\eps_2$.

  \item \label{item:clm:notRed} Claim~\ref{clm:notRed}  using $\eps=\eps_2$.

  \item \label{item:clm:allColored} Claim~\ref{clm:allColored} by using $\eps=\eps_1$ in the definition of green queries.

  \item \label{item:clm:notGreen} Claim~\ref{clm:notGreen} by using $\eps=\eps_1$ in the definition of green queries. More generally, the proof of Claim~\ref{clm:notGreen} works directly (without any change)  if we run a $(\eps_1,\eps_2)$ attack, but define the green queries using a parameter $\eps \in [\eps_1,\eps_2]$ (and only count such queries, as green ones).
\end{enumerate}
\end{claim}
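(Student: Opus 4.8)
The plan is to observe that none of the seven cited statements really depended on the \emph{exact} behaviour of the attacker of \constructionref{const:Eve} --- not on the fact that she asks the lexicographically first heavy query, nor on her threshold being a single fixed $\eps$, nor on her being deterministic --- but only on two ``structural'' properties, which are exactly the two bullets of \defref{def:eps1eps2}, and which were already distilled in \remarkref{rem:GraphChar} and \remarkref{rem:Eff}. So I would organise the proof into the ``graph/success side'' (items \ref{item:lem:combChar}--\ref{item:clm:notRed}), which uses the $\eps_2$-threshold, and the ``counting side'' (items \ref{item:clm:allColored}--\ref{item:clm:notGreen}), which uses the $\eps_1$-threshold, and in each case just trace the original proof, pointing at the property it actually invokes.

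\textbf{Graph/success side.} \remarkref{rem:GraphChar} says \lemmaref{lem:combChar} holds as soon as, at the end of the relevant round, every query that is $(\eps/n_B)$-heavy for Alice or $(\eps/n_A)$-heavy for Bob with respect to $\GViews(M,P)$ already lies in $\Q(P)$; this is verbatim the second bullet of \defref{def:eps1eps2} with $\eps=\eps_2$. Here I would make one small point explicit: because $\Pi$ is in seminormal form, the moment just before a party asks its $i$'th oracle query coincides with ``the end of a round'', so the quantifier ``at the end of every round $i$'' in \defref{def:eps1eps2} does cover every state $(M,P)$ at which \lemmaref{lem:combChar}/\lemmaref{lem:success2} get applied. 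Given \lemmaref{lem:combChar} with $\eps=\eps_2$, \corollaryref{cor:statClose} is immediate (it uses only the degree bound); \lemmaref{lem:success} goes through because the proof of \lemmaref{lem:success2} uses only that degree bound and the contrapositive of the $\eps_2$-heaviness property (``$q\notin\Q(P)\Rightarrow\Pr_{\GViews(M,P)}[q\in\Q(V_A)]<\eps_2/n_B$''); \lemmaref{lem:EveFinds} follows since \claimref{clm:twoSamples} and \claimref{clm:AgreeCondOnGood} invoke only \corollaryref{cor:statClose}, \lemmaref{lem:success} and statistical-distance facts; and \claimref{clm:notRed} follows from \lemmaref{lem:success} (with $\eps=\eps_2$) together with the distribution-agnostic \lemmaref{lem:IR}. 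I would also dispatch here the fact that an $(\eps_1,\eps_2)$-attacker may use private randomness $r_E$ and break ties arbitrarily: each of these steps either conditions on Eve's whole view $(M,P)$ (which pins down her next query whatever rule she follows) or uses only the two structural properties, so nothing changes; and the graph construction in \lemmaref{lem:combChar} still works because conditioning on $r_E$ keeps all probabilities in $\GViews(M,P)$ rational and the oracle finite.

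\textbf{Counting side.} Here I would invoke \remarkref{rem:Eff}, whose hypothesis is that every query Eve asks is, at the time it is asked, $(\eps/n_B)$-heavy for Alice or $(\eps/n_A)$-heavy for Bob with respect to $\GViews(M,P)$ --- verbatim the first bullet of \defref{def:eps1eps2} with $\eps=\eps_1$. For \claimref{clm:allColored}: a query $q$ Eve asks that is not red has $\Pr_{\Views(M^i,P_E)}[\Good(M^i,P_E)]\ge 1/2$ and, being asked, is $(\eps_1/n_B)$- or $(\eps_1/n_A)$-heavy with respect to $\GViews$; the one-line computation of the original proof then gives $\Pr_{\Views(M^i,P_E)}[q\in\Q(V^i_A)]\ge\eps_1/(2n_B)$ (or the Bob analogue), i.e.\ $q$ is green in the sense of \defref{def:color} read with $\eps=\eps_1$. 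For \claimref{clm:notGreen}: the bound $\Ex[|\GRA|]\le 2n_An_B/\eps$ is derived purely from the definition of a green query --- which references $\Views(M^i,P_E)$, not $\GViews$ and not Eve's internal logic --- together with \lemmaref{lem:IR} applied to the events ``$q$ is asked as a green query of Alice's type'' and ``Alice asks $q$''; the order in which Eve asks her queries and the value of $\eps_2$ play no role, so the same derivation runs for any fixed threshold $\eps\in[\eps_1,\eps_2]$, counting only the queries that were $\eps$-heavy, and gives $\Ex[|\GR_\eps|]\le 4n_An_B/\eps$.

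\textbf{Main obstacle.} The difficulty is bookkeeping, not conceptual: for each cited result I must confirm that its proof touches only the property attributed to it and the correct one of $\eps_1,\eps_2$, and in particular that the ``end of every round'' quantifier of \defref{def:eps1eps2} aligns with the states at which \lemmaref{lem:combChar} and \lemmaref{lem:success2} are used. The seminormal-form reduction of \constructionref{const:Eve} is exactly what makes that alignment hold, so I would state that dependence explicitly rather than leave it implicit; the remaining checks (rationality under conditioning on $r_E$, and that breaking ties non-lexicographically never enters any argument) are routine.
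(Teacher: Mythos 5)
Your proposal is correct and follows essentially the same route as the paper's own proof: items 1--5 are reduced to \remarkref{rem:GraphChar} plus the second bullet of \defref{def:eps1eps2} (with $\eps=\eps_2$), and items 6--7 to \remarkref{rem:Eff} plus the first bullet (with $\eps=\eps_1$). The extra points you make explicit (that the seminormal-form reduction aligns the ``end of round'' quantifier with the states where \lemmaref{lem:combChar} is invoked, and that conditioning on $r_E$ preserves rationality) are not spelled out in the paper's terse proof but are correct and in the spirit of its argument.
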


\begin{proof}
Item~\ref{item:lem:combChar} follows from Remark~\ref{rem:GraphChar} and the the second property of $(\eps_1,\eps_2)$-attackers. All Items~\ref{item:cor:statClose}--\ref{item:clm:notRed} follow from Item~\ref{item:lem:combChar} because the proofs of the corresponding statements in previous section \emph{only} rely (directly or indirectly) on Lemma~\ref{lem:combChar}.

Items~\ref{item:clm:allColored} and~\ref{item:clm:notGreen} follow from Remark~\ref{rem:Eff} and the first property of $(\eps_1,\eps_2)$-attackers.
\end{proof}

\paragraph{Finding Outputs.} By Item~\ref{item:lem:EveFinds} of Claim~\ref{clm:extensions}, UnbEve hits Bob's output with probability at least $\rho-10\eps_2$. By Item~\ref{item:clm:notRed} of Claim~\ref{clm:extensions}, the probability that UnbEve asks any red queries is at most $6 \eps_2$.
Therefore,  Eve's output will agree with Bob's output with probability at least $\rho-10\eps-6\eps =\rho-16\eps$.

\paragraph{Number of Queries.} By Item~\ref{item:clm:notGreen}, the expected number of green queries asked by UnbEve is at most $4 n_A n_B / \eps_1$. As also specified in Item~\ref{item:clm:notGreen}, the more general upper bound, for an arbitrary parameter $\eps \in [\eps_1,\eps_2]$, holds as well.

\paragraph{Dependencies.} We will use the following definition which relaxes the notion of self dependency by computing the statistical distance of $(\bfx,\bfy)$ to the closest product distribution (that might be different from $(\bfx \times \bfy)$).

\begin{definition}[Statistical Dependency]
For two jointly distributed random variables $(\bfx,\bfy)$, let the \emph{statistical dependency} of $(\bfx,\bfy)$, denoted by $\Dep(\bfx,\bfy)$, be the minimum statistical distance of $(\bfx,\bfy)$ from all product distributions defined over $\Supp(\bfx) \times \Supp(\bfy)$. More formally:
 $$\Dep(\bfx,\bfy) = \inf_{(\bfa \times \bfb)} \Delta((\bfx,\bfy), (\bfa \times \bfb))$$
 in which $\bfa \times \bfb$ are distributed over $\Supp(\bfx) \times \Supp(\bfy)$.
\end{definition}

By definition, we have $\Dep(\bfx,\bfy) \leq \SelDep(\bfx,\bfy)$. The following lemma by~\cite{MahmoodyMP12} shows that the two quantities can not be too far.

\begin{lemma}[Lemma A.6 in~\cite{MahmoodyMP12}] \label{lem:MMP}
  $\SelDep(\bfx,\bfy) \leq 3 \cdot \Dep(\bfx,\bfy)$.
\end{lemma}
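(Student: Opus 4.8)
The plan is to obtain the bound from the triangle inequality for statistical distance together with two elementary facts already available (or essentially available) in the preliminaries: statistical distance does not increase under projection (Lemma~\ref{lem:project}), and it is subadditive on product distributions. Fix the joint distribution $(\bfx,\bfy)$. Since we work with finite random variables, the set of product distributions over $\Supp(\bfx)\times\Supp(\bfy)$ is compact and $\Delta$ is continuous in it, so the infimum defining $\Dep(\bfx,\bfy)$ is attained; let $(\bfa\times\bfb)$ achieve it, i.e.\ $\Delta((\bfx,\bfy),(\bfa\times\bfb))=\eps:=\Dep(\bfx,\bfy)$. (If one prefers not to invoke compactness, simply take $\eps=\Dep(\bfx,\bfy)+\eta$ for arbitrary $\eta>0$, run the argument below, and let $\eta\to 0$.)

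First I would bound the marginals of $(\bfa\times\bfb)$ against those of $(\bfx,\bfy)$: projecting onto the first (resp.\ second) coordinate and applying Lemma~\ref{lem:project} gives $\Delta(\bfx,\bfa)\le\eps$ and $\Delta(\bfy,\bfb)\le\eps$. Next I would establish the product-subadditivity step $\Delta\bigl((\bfx\times\bfy),(\bfa\times\bfb)\bigr)\le\Delta(\bfx,\bfa)+\Delta(\bfy,\bfb)$ by the standard two-step hybrid $(\bfx\times\bfy)\to(\bfa\times\bfy)\to(\bfa\times\bfb)$: in each hybrid the two coordinates are independent, so by Lemma~\ref{lem:averageSD} conditioning on the coordinate that is held fixed leaves the distance of the changed coordinate unaffected, yielding $\Delta((\bfx\times\bfy),(\bfa\times\bfy))=\Delta(\bfx,\bfa)$ and $\Delta((\bfa\times\bfy),(\bfa\times\bfb))=\Delta(\bfy,\bfb)$, and the triangle inequality (Lemma~\ref{lem:triangle}) combines them. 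Hence $\Delta((\bfx\times\bfy),(\bfa\times\bfb))\le 2\eps$.

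Finally, one more application of the triangle inequality gives
\[
\SelDep(\bfx,\bfy)=\Delta\bigl((\bfx,\bfy),(\bfx\times\bfy)\bigr)\le\Delta\bigl((\bfx,\bfy),(\bfa\times\bfb)\bigr)+\Delta\bigl((\bfa\times\bfb),(\bfx\times\bfy)\bigr)\le\eps+2\eps=3\,\Dep(\bfx,\bfy),
\]
which is exactly the claim. There is no genuine obstacle in this argument: the content is a single triangle-inequality chain. The only points that need a sentence of care are the attainment (or limiting treatment) of the infimum in the definition of $\Dep$, and spelling out the product-subadditivity hybrid, since that is the one sub-step not literally stated among the quoted statistical-distance lemmas.
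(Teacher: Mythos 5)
Your proof is correct and is the natural argument for this bound. The paper itself does not prove Lemma~\ref{lem:MMP}---it is quoted verbatim as Lemma~A.6 of~\cite{MahmoodyMP12}---so there is no in-text proof to compare against, but the route you take (project to get $\Delta(\bfx,\bfa)\le\eps$ and $\Delta(\bfy,\bfb)\le\eps$ via Lemma~\ref{lem:project}, use the $(\bfx\times\bfy)\to(\bfa\times\bfy)\to(\bfa\times\bfb)$ hybrid with Lemma~\ref{lem:averageSD} to get product subadditivity, then chain triangle inequalities) is exactly the standard proof, and you have correctly flagged the two minor points that deserve a sentence of care: attainment of the infimum over the compact set of product distributions on $\Supp(\bfx)\times\Supp(\bfy)$ (or the $\eta\to 0$ workaround), and spelling out the product-subadditivity step since it is not literally among the quoted lemmas.
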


\begin{remark}
We note that, $\SelDep(\bfx,\bfy)$ can, in general, be larger than $\Dep(\bfx,\bfy)$. For instance consider the following joint distribution over $(\bfx,\bfy)$ where $\bfx$ and $\bfy$ are both Boolean variables: $\Pr[\bfx=0,\bfy=0]=1/3, \Pr[\bfx=1,\bfy=0]=1/3, \Pr[\bfx=1,\bfy=1]=1/3, \Pr[x=0,y=1]=0$.
It is easy to see that $\SelDep(\bfx,\bfy) = 2/9$, but $\Delta((\bfx,\bfy), (\bfa \times \bfb)) = 1/6 < 2/9$ for a product distribution $(\bfa \times \bfb)$ defined as follows: $\bfa \equiv \bfx$ and $\Pr[\bfb=0]=\Pr[\bfb=1]=1/2$.
\end{remark}

The following lemma follows from Lemma~\ref{lem:nestedDistance} and the definition of statistical dependency.
\begin{lemma} \label{lem:DepChain}
For jointly distributed $(\bfx,\bfy)$ and event $E$ defined over the support of $(\bfx,\bfy)$, it holds that $\Dep(\bfx,\bfy) \leq \Pr_{(\bfx,\bfy)}[E] + \Dep((\bfx,\bfy) \mid \neg E)$. We take the notational convention that whenever $ \Pr_{(\bfx,\bfy)}[E]=0$ we let $\Dep((\bfx,\bfy) \mid \neg E)=1$.
\end{lemma}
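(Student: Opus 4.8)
The plan is to imitate the proof of Lemma~\ref{lem:nestedDistance}, with the fixed target distribution there replaced by a well-chosen product distribution. First I would dispose of the degenerate case: if $\Pr_{(\bfx,\bfy)}[\neg E]=0$, then by the stated notational convention $\Dep((\bfx,\bfy)\mid\neg E)=1$, so the right-hand side is at least $1$, and the inequality is immediate since statistical distance—and hence $\Dep$—is always at most $1$. So from now on I may assume $\Pr_{(\bfx,\bfy)}[\neg E]>0$, in which case $((\bfx,\bfy)\mid\neg E)$ is a genuine probability distribution.

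Next I would fix an arbitrary product distribution $(\bfa\times\bfb)$ supported on $\Supp(\bfx\mid\neg E)\times\Supp(\bfy\mid\neg E)$. The key observation is that conditioning can only shrink supports, so $\Supp(\bfx\mid\neg E)\se\Supp(\bfx)$ and $\Supp(\bfy\mid\neg E)\se\Supp(\bfy)$; hence $(\bfa\times\bfb)$ is also an admissible competitor in the infimum defining $\Dep(\bfx,\bfy)$, and therefore $\Dep(\bfx,\bfy)\le\Delta((\bfx,\bfy),(\bfa\times\bfb))$. I would then invoke Lemma~\ref{lem:nestedDistance} with its ``$\bfx$'' taken to be $(\bfx,\bfy)$, its ``$\bfy$'' taken to be $(\bfa\times\bfb)$, and its event ``$G$'' taken to be $E$ (which is an event over $\Supp((\bfx,\bfy))$ as required), obtaining $\Delta((\bfx,\bfy),(\bfa\times\bfb))\le\Pr_{(\bfx,\bfy)}[E]+\Delta(((\bfx,\bfy)\mid\neg E),(\bfa\times\bfb))$. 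Chaining the two displayed bounds yields $\Dep(\bfx,\bfy)\le\Pr_{(\bfx,\bfy)}[E]+\Delta(((\bfx,\bfy)\mid\neg E),(\bfa\times\bfb))$ for every such $(\bfa\times\bfb)$.

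Finally I would take the infimum of the right-hand side over all product distributions $(\bfa\times\bfb)$ on $\Supp(\bfx\mid\neg E)\times\Supp(\bfy\mid\neg E)$: by the definition of $\Dep$ applied to the conditional distribution, this infimum of $\Delta(((\bfx,\bfy)\mid\neg E),(\bfa\times\bfb))$ equals exactly $\Dep((\bfx,\bfy)\mid\neg E)$, which establishes the lemma. The only points requiring care—and essentially the only obstacle—are bookkeeping ones: verifying that a (near-)minimizing product distribution for the conditional's $\Dep$ is a legal competitor for the unconditional $\Dep$ (handled by the support-inclusion remark) and, if one prefers not to assume finiteness, replacing ``the minimizer'' by a sequence approaching the infimum and passing to the limit; since in all the applications in this paper $(\bfx,\bfy)$ is finite, the infimum is attained and this subtlety does not arise.
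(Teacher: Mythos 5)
Your proof is correct and takes the same approach as the paper's: invoke Lemma~\ref{lem:nestedDistance} with an arbitrary product distribution competing in the infimum for $\Dep((\bfx,\bfy)\mid\neg E)$, then pass to the infimum. You are a bit more careful than the paper about the support-inclusion point and the degenerate case $\Pr[\neg E]=0$, but the core argument is identical.
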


\begin{proof}
Let $(\bfa \times \bfb)$ be such that $\Delta(((\bfx,\bfy) \mid \neg E), (\bfa \times \bfb)) \leq \delta$. For the same $(\bfa \times \bfb)$, by Lemma~\ref{lem:nestedDistance} it holds that
$\Delta((\bfx,\bfy), (\bfa \times \bfb)) \leq \Pr_{(\bfx,\bfy)}[E] + \delta$. Therefore

\begin{align*}
 \Dep(\bfx,\bfy) = \inf_{(\bfa \times \bfb)} \Delta((\bfx,\bfy), (\bfa \times \bfb))
&\leq \Pr_{(\bfx,\bfy)}[E] +  \inf_{(\bfa \times \bfb)} \Delta(((\bfx,\bfy) \mid \neg E), (\bfa \times \bfb))  \\
&\leq \Pr_{(\bfx,\bfy)}[E] + \Dep((\bfx,\bfy) \mid \neg E) .
\end{align*}
\end{proof}

\paragraph{Self-dependency at every fixed round.} By Item~\ref{item:cor:statClose} of Claim~\ref{clm:extensions}, we get that by running UnbEve we obtain $\Dep(\GViews(M,P))\leq 2 \eps_2$ where $(M,P)$ is the view of UnbEve at the end of the protocol. By also Lemma~\ref{lem:DepChain} we get:
\begin{align*}
\Dep(\Views(M,P))
&\leq \Pr_\EXEC[\neg \Good \mid (M,P)] + \Dep(\GViews(M,P)) \\
&\leq \Pr_\EXEC[\neg \Good \mid (M,P)] + 2\eps_2.
\end{align*}
 Therefore, by Item~\ref{item:lem:success} of Claim~\ref{clm:extensions} and Lemma~\ref{lem:MMP} we get
\begin{align*}
\Ex_{(M,P)\gets(\bfM,\bfP)}[\Dep(\Views(M,P))]
&\leq 3 \cdot \left( \Ex_{(M,P)\gets(\bfM,\bfP)}\left[\Dep(\Views(M,P))\right] \right) \\
&\leq 3 \cdot \left(  \Ex_{(M,P)\gets(\bfM,\bfP)} \left[\Pr_\EXEC[ \neg \Good \mid (M,P)]\right] + 2\eps_2 \right)  \\
& \leq 3 \cdot \left(  \Pr_\EXEC[ \neg \Good] + 2\eps_2 \right) \leq 3 \cdot 5\eps_2 = 15 \eps_2
\end{align*}

Since the probability of UnbEve asking any red queries is at most $6\eps_2$ (Item~\ref{item:clm:notRed} of Claim~\ref{clm:extensions}), therefore when we run Eve, it holds that $\Ex_{(M,P) \gets (\bfM,\bfP)}[\Dep(\Views(M,P))] $  increases at most by  $6\eps_2$ compared to when running UnvEve. This is because  whenever we halt the execution of  Eve (which happens with probability at most $6\eps_2$)  this can lead to statistical dependency of $\Views(M,P)$ at most $1$. Therefore, if we use Eve instead of UnbEve, it  holds that
 $$\Ex_{(M,P)\gets(\bfM,\bfP)}[\Dep(\Views(M,P))]  \leq 15\eps_2 + 6\eps_2 = 21 \eps_2.$$

\paragraph{\bf Simultaneous self-dependencies at all rounds.}
First note that $0<\alpha<1$, $0<\beta<1$, and $\alpha \cdot \beta \geq \eps_2$ imply that $\alpha \geq \eps_2$ and $\beta \geq \eps_2$. By Item~\ref{item:lem:success} of Claim~\ref{clm:extensions}, when we run UnbEve, it holds that $\Pr_\EXEC[\Fail] \leq 3\eps_2$, so by Lemma~\ref{lem:IR} we conclude that with probability at least $1-3\alpha$ it holds that during the execution of the protocol, the probability of $\Fail$ (and thus, the probability of $\neg \Good(M,P)$) conditioned on Eve's view always remains at most $\beta$.  Therefore, by Item~\ref{item:cor:statClose} of Claim~\ref{clm:extensions} and Lemma~\ref{lem:DepChain}, with probability at least $1-3\alpha$ the following holds  at the end of \emph{every} round (where $(M,P)$ is Eve's view at the end of that round)
\begin{align*}
\Dep(\Views(M,P))
&\leq \Pr_\EXEC[\neg \Good \mid (M,P)] + \Dep(\GViews(M,P)) \\
&\leq \beta + 2\eps_2 \leq 3 \beta.
\end{align*}

Using Lemma~\ref{lem:MMP} we obtain the bound $\SelDep(\Views(M,P)) \leq 9 \beta$.
Since the probability of UnbEve asking any red queries is at most $6 \eps_2$,  by a union bound we conclude that with probability at least $1-3\alpha-6\eps_2 > 1-9 \alpha$,   we still get $\SelDep(\Views(M,P)) \leq 9\beta$  at the end of every round.

\paragraph{Simultaneous lightness at all rounds.}  As shown in the previous item, for such $\alpha,\beta$, with probability at least $1-9\alpha$ it holds that during the execution of the protocol, the probability of $\Fail$ (and thus, the probability of $\neg \Good(M,P)$) conditioned on Eve's view always remains at most $\beta$. Now suppose $(M,P)$ be the view of Eve at the end of some round where $\Pr_{\Views(M,P}[\neg \Good(M,P)] \leq \beta$. By the second property of $(\eps_1,\eps_2)$-attackers, it holds that:
$$\Pr_{(V_A,V_B) \gets \Views(M,P)}[q \in \Q(V_A)] \leq
\Pr_{\Views(M,P)}[\neg \Good(M,P)] +
\Pr_{(V_A,V_B) \gets \GViews(M,P)}[q \in \Q(V_A)] \leq \eps_2/n_B +\beta. $$
The same proof shows that a similar statement holds for Bob.

\paragraph{Dependency and lightness at every fixed round.} Let $(\bfW_A,\bfW_B) \equiv \GViews(M,P)$. The product distribution we are looking for will be $\bfW_A \times \bfW_B$. When we run UnbEve,  by Lemma~\ref{lem:success} it holds that $\Ex_{(M,P)}[\Delta((\bfW_A,\bfW_B),\Views(M,P))] \leq 3\eps_2$, because otherwise the probability of $\Fail$ will be more than $3\eps_2$. Also, by Corollary~\ref{cor:statClose} it holds that $\Dep(\Views(M,P)) \leq 2\eps_2$, and by Lemma~\ref{lem:MMP}, it holds that $\SelDep(\Views(M,P)) = \Delta(\Views(M,P), (\bfW_A \times \bfW_B)) \leq 6 \eps_2$. Thus, when we run UnbEve, we get $\Ex_{(M,P)}[\Delta((\bfW_A \times \bfW_B),\Views(M,P))] \leq 9 \eps_2$. By Claim~\ref{clm:notRed}, the upper bound of $9 \eps_2$ when we modify  UnbEve to Eve (by not asking red queries), could increase only by $6 \eps_2$. This proves the first part.

To prove the second part, again we use Claim~\ref{clm:notRed} which bounds the probability of asking a red query by $6 \eps_2$. Also, as long as  we do not halt  Eve (\ie no red query is asked), Eve and UnbEve remain the same, and the lightness claims hold for UnbEve by definition of the attacker UnbEve.

\subsection{Removing the Rationality Condition} \label{sec:removeRational}
In this subsection we show that \emph{all} the results of this paper, except the graph characterization of Lemma~\ref{lem:combChar}, hold even with respect to random oracles that are not necessarily rational according to Definition~\ref{def:RandOr}. We will show that a variant of Lemma~\ref{lem:combChar}, which is  sufficient for all of our applications, still holds. In the following, by an \emph{irrational random oracle} we refer to a random oracle that satisfies Definition~\ref{def:RandOr} except that its probabilities might not be rational.


\begin{lemma}[Characterization of $\Views(M,P)$] \label{lem:Char}
Let $H$ be an irrational oracle,  let $M$ be the sequence of messages sent between Alice and Bob so far, and let $P$ be the set of oracle query-answer pairs known to Eve (who uses parameter $\eps$) by the end of the round in which the last message in  $M$ is sent. Also suppose $\Pr_{\Views(M,P)}[\Good(M,P)]>0$.
Let $(\bfV_A, \bfV_B)$ be the joint view of Alice and Bob as sampled from $\GViews(M,P)$, and let $\cU_A = \Supp(\bfV_A), \cU_B = \Supp(\bfV_B)$.
Let $G = (\cU_A,\cU_B,E)$ be a bipartite graph with vertex sets $\cU_A,\cU_B$ and connect $u_A \in \cU_A$ to $u_B \in \cU_B$  if and only if $\Q(u_A) \cap \Q(u_B) \se \Q(P)$. Then there exists a distribution $\bfU_A$ over $\cU_A$ and a distribution $\bfU_B$ over $\cU_B$ such that:
\begin{enumerate}

  \item \label{item:Degrees2} For every vertex $u \in \cU_A$, it holds that $\Pr_{v \gets \bfU_B}[u \not \sim v] \leq 2\eps$, and similarly for every vertex $u \in \cU_B$, it holds that $\Pr_{v \gets \bfU_A}[u \not \sim v] \leq 2\eps$.

  \item \label{item:EquivDists2} The distribution $(V_A,V_B) \gets \GViews(M,P)$ is identical to: sampling $u \gets \bfU_A$ and $v \gets \bfU_B$   \emph{conditioned on} $u \sim v$, and outputting the views corresponding to $u$ and $v$.

\end{enumerate}
\end{lemma}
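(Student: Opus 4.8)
The plan is to retrace the proof of Lemma~\ref{lem:combChar} while replacing its only rationality-dependent ingredient --- the passage in Construction~\ref{const:graph} from $\GViews(M,P)$ to the uniform distribution over a finite vertex \emph{multiset} --- by working directly with the distributions produced by the product characterization. First I would invoke Lemma~\ref{lem:product}: its proof uses only Proposition~\ref{prop:ProbOfPartial}, Lemma~\ref{lem:IncExc}, and the definition of $\GViews$, none of which needs the probabilities of $\bfH$ to be rational, so it applies verbatim to an irrational oracle and yields distributions $\bfA$ over Alice's views and $\bfB$ over Bob's views with $\GViews(M,P) \equiv \left((\bfA \times \bfB) \mid \Q(\bfA) \cap \Q(\bfB) \se \Q(P)\right)$. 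I set $\bfU_A := \bfA$ and $\bfU_B := \bfB$, viewed as distributions on $\Supp(\bfA)$ and $\Supp(\bfB)$. Since, in the graph $G$ of the lemma, $u \sim v$ means precisely $\Q(u)\cap\Q(v)\se\Q(P)$, Item~\ref{item:EquivDists2} holds immediately: sampling $u\gets\bfU_A$, $v\gets\bfU_B$ conditioned on $u\sim v$ and reading off the associated views is exactly the right-hand side of the identity above.

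For Item~\ref{item:Degrees2} I would port the argument of Lemma~\ref{lem:highDeg} to this weighted (possibly irrational) setting. Write $\deg(u)=\Pr_{v\gets\bfB}[u\sim v]$ for $u\in\Supp(\bfA)$, $\deg(v)=\Pr_{u\gets\bfA}[u\sim v]$ for $v\in\Supp(\bfB)$, and $Z=\Pr_{(u,v)\gets\bfA\times\bfB}[u\sim v]=\Pr_{\Views(M,P)}[\Good(M,P)]>0$. An $(\bfA\times\bfB)$-pair conditioned on $u\sim v$ is distributed as $\GViews(M,P)$; in particular its second coordinate is distributed as the Bob-marginal $\bfV_B$ of $\GViews(M,P)$, and $\Pr[\bfV_B=v]=\bfB(v)\deg(v)/Z$. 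Now fix $u\in\Supp(\bfA)$ and set $Q_u=\Q(u)\sm\Q(P)$, so $|Q_u|\le n_A$. If $\Pr_{v\gets\bfV_B}[u\not\sim v]>\eps$, then since $u\not\sim v$ forces some $q\in Q_u$ into $\Q(v)$, the pigeonhole principle yields $q\in Q_u$ with $\Pr_{(V_A,V_B)\gets\GViews(M,P)}[q\in\Q(V_B)]>\eps/n_A$; but then $q$ is an $(\eps/n_A)$-heavy query for Bob with respect to $\GViews(M,P)$, hence has been learned by Eve of Construction~\ref{const:Eve}, so $q\in\Q(P)$ --- contradicting $q\in Q_u$. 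Therefore $\Pr_{v\gets\bfV_B}[u\not\sim v]\le\eps$, equivalently $\sum_{v:\,u\not\sim v}\bfB(v)\deg(v)\le\eps Z$, for every $u\in\Supp(\bfA)$; the symmetric computation (with $|Q_v|\le n_B$ and $(\eps/n_B)$-heaviness for Alice) gives $\sum_{u:\,u\not\sim v}\bfA(u)\deg(u)\le\eps Z$ for every $v\in\Supp(\bfB)$.

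The remaining step is a weighted copy of the Claim inside the proof of Lemma~\ref{lem:highDeg}. Put $\bar d_A=\min_{u\in\Supp(\bfA)}\deg(u)$, $\bar d_B=\min_{v\in\Supp(\bfB)}\deg(v)$, and, swapping the two sides if needed, assume $\bar d_A\le\bar d_B$; it suffices to prove $\bar d_A\ge 1-2\eps$. If not, choose $u^\ast$ with $\deg(u^\ast)=\bar d_A<1-2\eps$ and split $Z=X+Y$, where $X=\sum_{v:\,u^\ast\sim v}\bfB(v)\deg(v)$ and $Y=\sum_{v:\,u^\ast\not\sim v}\bfB(v)\deg(v)$. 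From $\deg(v)\le 1$ we get $X\le\Pr_{v\gets\bfB}[u^\ast\sim v]=\bar d_A\le\bar d_B$, and from $\deg(v)\ge\bar d_B$ we get $Y\ge\bar d_B\bigl(1-\bar d_A\bigr)>2\eps\,\bar d_B\ge 2\eps X$; together with $Y\le\eps Z=\eps(X+Y)$ this gives $Y\le\eps X/(1-\eps)<Y/\bigl(2(1-\eps)\bigr)$ (the degenerate case $X=0$ forces $Z=Y\le\eps Y$, already absurd), impossible for $\eps<1/2$. So $\deg(u)\ge 1-2\eps$ on $\Supp(\bfA)$ and $\deg(v)\ge 1-2\eps$ on $\Supp(\bfB)$, which is Item~\ref{item:Degrees2}. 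Finally, since every such degree is positive, each $u\in\Supp(\bfA)$ is $\sim$-adjacent to some $v\in\Supp(\bfB)$, so $u$ occurs with positive probability under $\GViews(M,P)$; combined with the obvious inclusion $\Supp(\bfV_A)\se\Supp(\bfA)$, this yields $\Supp(\bfA)=\Supp(\bfV_A)=\cU_A$ (and symmetrically $\Supp(\bfB)=\cU_B$), so $\bfU_A$ and $\bfU_B$ are genuinely supported on $\cU_A$ and $\cU_B$.

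The one point I expect to require real care is the combinatorial step: $E^{\not\sim}(w)$ as literally written in Lemma~\ref{lem:highDeg} carries an extra ``same-side'' disjointness condition, but the bound actually needed there is the one for the \emph{larger} set $\{(u,v)\in E : w\not\sim v\}$, and it is exactly that larger-set bound --- namely $\sum_{v:\,w\not\sim v}\bfB(v)\deg(v)\le\eps Z$ --- that the pigeonhole/heaviness argument of the second paragraph supplies in the weighted setting; after that the degree estimate is just the short manipulation above. Beyond this, the only thing to double-check is that nothing else in the chain (the definitions of $\GViews(M,P)$ and of $\eps$-heaviness, the description of Eve's algorithm, and the proof of Lemma~\ref{lem:product}) secretly uses rationality --- and it does not, since rationality was invoked only in Construction~\ref{const:graph}'s reduction to a uniform distribution over a vertex multiset, which the above bypasses.
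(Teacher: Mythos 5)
Your proof is correct and follows the same route as the paper's (terse) proof sketch: take $\bfU_A=\bfA$, $\bfU_B=\bfB$ from Lemma~\ref{lem:product}, bound the weighted analogue of $|E^{\not\sim}(u)|/|E|$ by $\eps$ via the heaviness guarantee of Eve's algorithm, then port the minimum-degree contradiction to the measure-weighted setting. You also supply details the sketch leaves implicit --- the weighted degree claim, the degenerate case $X=0$, and the verification that $\Supp(\bfA)=\cU_A$ and $\Supp(\bfB)=\cU_B$ so that $\bfU_A,\bfU_B$ are genuinely distributions over the stated vertex sets --- and these are all correct.
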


\begin{proof}[Proof Sketch]
The distributions $\bfU_A$ and $\bfU_B$ are in fact the same as the distributions $\bfA$ and $\bfB$ of Lemma~\ref{lem:product}. The rest of the proof is identical to that of Lemma~\ref{lem:combChar} \emph{without any} vertex repetition. In fact, repetition of vertices (to make the distributions uniform)  cannot be necessarily done anymore because of the irrationality of the probabilities. Here we explain the alternative parameter that takes the role of $|E^{\not\sim}(u)|/|E|$. For $u \in \cU_A$ let $q^{\not\sim}(u)$ be the probability that if we sample an edge $e \gets (\bfV_A,\bfV_B)$, it does not contain $u$ as Alice's view, and define $q^{\not\sim}(u)$ for $u \in \cU_B$ similarly. It can be verified that by the very same argument as in Lemma~\ref{lem:combChar}, it holds that $q^{\not\sim}(u) \leq \eps$ for every vertex $u$ in $G$. The other steps of the proof remain the same.
\end{proof}

The characterization of $\Views(M,P)$ by Lemma~\ref{lem:Char} can be used to derive Corollary~\ref{cor:statClose} directly (using the same distributions $\bfU_A$ and $\bfU_B$). Remark~\ref{rem:GraphChar} also holds \wrt Lemma~\ref{lem:Char}.
Here we show how to derive Lemma~\ref{lem:success2} and the rest of the results will follow immediately.

\paragraph{Proving Lemma~\ref{lem:success2}.} Again, we prove Lemma~\ref{lem:success2} even conditioned on choosing any vertex $v$ that describes Bob's view. For such vertex $v$,  the distribution of Alice's view, when we choose a random edge $(u,v') \gets (\bfV_A,\bfV_B)$ conditioned on $v=v'$ is the same as choosing $u \gets \bfU_A$ conditioned on $u \sim v$. Let's call this distribution $\bfU^v_A$. Let $S = \{u
\in \cU_A \mid q \in A_u \}$ where $q$ is the next query of Bob as specified by $v$. Let $p(S) = \sum_{u \in S} \Pr[\bfU_A = u], q(S) = \Pr_{(u,v) \gets (\bfV_A,\bfV_B)}[u \in S]$,  and let $p(E) = \Pr_{u \gets \bfU_A, v \gets \bfU_B} [u \sim v ]$. Also let $p^\sim(v) = \sum_{u \sim v} \Pr[\bfU_A = u]$. Then, we have:
\[
\Pr_{u \gets \bfU^v_A}[q \in A_u] \leq \frac{p(S)}{p^\sim(v)} \leq \frac{p(S)}{1-2\e} \leq \frac{p(S)}{(1-2\e) \cdot p(E)} \leq
\frac{q(S)}{(1-2\e)^2 \cdot p(E)} \leq \frac{\e}{(1-2\e)^2 \cdot n_B} < \frac{3\e}{2n_B}.
\]
The second and fourth inequalities are due to the degree lower bounds of Item~\ref{item:Degrees2} in Lemma~\ref{lem:Char}. The third inequality is because $p(E) < 1$. The fifth inequality is because of the definition of the attacker Eve who asks $\e/n_B$ heavy queries for Alice's view when sampled from $\GViews(M,P)$, as long as such queries exist.
The sixth inequality is because we are assuming $\e < 1/10$. \qed

\remove{
\subsection{reset removed}

\begin{lemma}[Approximate Graph Characterization] \label{lem:AppCombChar}
Let $(M,P)$ be as in Lemma~\ref{lem:combChar} with respect to irrational random oracle $H \gets \bfH$. Then for every $\gamma>0$ there is a bipartite graph $G$ with vertices $(\cU_A,\cU_B)$ and edges $E$ such that:
\begin{itemize}
  \item Items 1,2, and 3 remain the same as those in Lemma~\ref{lem:combChar}.
  \item  The distribution $(V_A,V_B) \gets \GViews(M,P)$ is $\gamma$-close to: sampling a random edge $(u,v) \gets E$ and taking $(A_u,B_v)$ (\ie the views corresponding to $u$ and $v$).
\end{itemize}
\end{lemma}

\begin{proof}
  Roughly speaking, the proof follows by approximating the distribution of the joint views of Alice and Bob (conditioned on $(V_A,V_B)$) with arbitrary close rational numbers and applying  the same proof of Lemma~\ref{lem:combChar}. The formal argument follows.

  The proof of Lemma~\ref{lem:combChar} relies on \num{1} Lemma~\ref{lem:product} and that \num{2}  the probability distributions $\bfA,\bfB$ of Lemma~\ref{lem:product} are rational. The problem, when working with irrational random oracles, is that the second condition does not necessarily hold anymore. However, we can still use the following modification to the proof. Let $\lambda$ be such that $(1+\lambda)^2 = 1+\gamma$. Suppose $p_A \colon \Supp(\bfA) \To \R$ and $p_B \colon \Supp(\bfA) \To \R$ be two rational functions such that $\Pr[\bfA = A] \leq p_A(A) \leq \Pr[\bfA = A] \cdot (1+\lambda)$ and $\Pr[\bfB = B] \leq p_B(B) \leq \Pr[\bfA = B] \cdot (1+\lambda)$ for all $A \in \Supp(\bfA) $ and  $B \in \Supp(\bfB) $. Such functions exist because any irrational number can be approximated by rational numbers with arbitrary precision. Suppose $\ol{\bfA}$ and $\ol{\bfB}$ be the rational distributions obtained by normalizing the measures $p_A(\cdot)$ and $p_B(\cdot)$. We claim that

\[
((\ol{\bfA} \times \ol{\bfB}) \mid \Q(\ol{\bfA}) \cap \Q(\ol{\bfB}) \se \Q(P)) \approx_\lambda ((\bfA \times \bfB) \mid \Q(\bfA) \cap \Q(\bfB) \se \Q(P)).
\]

This claim follows by appling Lemma~\ref{lem:normalizeMeasure} to the ``product measure'' $p(A,B)$  defined as:  $p(A,B) = p_A(A) \cdot p_B(B)$ if and only if $\Q(A) \cap \Q(B) \se \Q(P)$, and is $p(A,B) = 0$ otherwise.

We can apply the very same proof of Lemma~\ref{lem:combChar} to rational distributions $\ol{\bfA}$ and $\ol{\bfB}$ and conclude Lemma~\ref{lem:AppCombChar}.
\end{proof}

Now we show why Lemma~\ref{lem:AppCombChar} suffices for all the applications of Lemma~\ref{lem:combChar}. We show how to derive Lemma~\ref{lem:success2} and Corollary~\ref{cor:statClose} which are used to derive other results.

\begin{claim}
  Lemma~\ref{lem:combChar} and Corollary~\ref{cor:statClose} both hold for irrational random oracles as well.
\end{claim}

The reason for Lemma~\ref{lem:combChar} is that by using arbitrary small $\gamma$ in Lemma~\ref{lem:AppCombChar} we get bounds for Lemma~\ref{lem:combChar} that are arbitrary close to $\nicefrac{3\eps}{2 n_B}$.
Similarly, to derive Corollary~\ref{cor:statClose} we use Lemma~\ref{lem:AppCombChar} with arbitrarily small $\gamma$. Namely, by Lemma~\ref{lem:AppCombChar}  we conclude that for every $\gamma$ there is a joint distribution $(\ol{\bfV}_A, \ol{\bfV}_B)$ that   is $\gamma$-close to $(\bfV_A,\bfV_B)$ and satisfies Corollary~\ref{cor:statClose} for some product distributions $\ol{\bfU}_A \times \ol{\bfU}_B$. Since the Euclidian space is a complete metric space, the sequence of $(\ol{\bfV}_A, \ol{\bfV}_B)$ obtained by taking $\gamma \To 0$ would converge to $(\bfV_A,\bfV_B)$. Moreover, because of the way the distribution $\ol{\bfU}_A \times \ol{\bfU}_B$ is obtained from $(\ol{\bfV}_A, \ol{\bfV}_B)$ (\ie the product of the marginal distributions of $(\ol{\bfV}_A, \ol{\bfV}_B)$ conditioned on some fixed event), the sequence of product distributions
$\ol{\bfU}_A \times \ol{\bfU}_B$ obtained by taking  $\gamma \To 0$ would also converge to a fixed product distribution satisfying Corollary~\ref{cor:statClose}.
}

\paragraph{\large Acknowledgement.}  We thank Russell Impagliazzo for very useful discussions and the anonymous reviewers for their valuable comments. 
\bibliographystyle{amsalpha}
\newcommand{\etalchar}[1]{$^{#1}$}
\providecommand{\bysame}{\leavevmode\hbox to3em{\hrulefill}\thinspace}
\providecommand{\MR}{\relax\ifhmode\unskip\space\fi MR }
\providecommand{\MRhref}[2]{%
  \href{http://www.ams.org/mathscinet-getitem?mr=#1}{#2}
}
\providecommand{\href}[2]{#2}

\end{document}